\setlist{noitemsep}
\newcommand{\C}{\H}
\renewcommand{\H}{\mathcal{H}}
\newcommand{\executeiffilenewer}[3]{%
\ifnum\pdfstrcmp{\pdffilemoddate{#1}}%
{\pdffilemoddate{#2}}>0%
{\immediate\write18{#3}}\fi%
} 
\newcommand{%
\executeiffilenewer{.svg}{.pdf}%
{inkscape -z -D --file=.svg %
--export-pdf=.pdf --export-latex}%
{\input{.pdf_tex}}}[1]{%
\executeiffilenewer{#1.svg}{#1.pdf}%
{inkscape -z -D --file=#1.svg %
--export-pdf=#1.pdf --export-latex}%
{\input{#1.pdf_tex}}}%
\newcommand{\svg}[2]{\def\svgwidth{#1}%
\executeiffilenewer{#2.svg}{#2.pdf}%
{inkscape -z -D --file=#2.svg %
--export-pdf=#2.pdf --export-latex}%
{\input{#2.pdf_tex}}}
\newtheorem{theorem}{Theorem}[section]
\newtheorem{lemma}[theorem]{Lemma}
\newtheorem{claim}[theorem]{Claim}
\newtheorem{corollary}[theorem]{Corollary}
\newtheorem{proposition}[theorem]{Proposition}
\theoremstyle{definition}
\newtheorem{definition}[theorem]{Definition}
\newtheorem{remark}[theorem]{Remark}
\newcommand{\cqed}{\renewcommand{\qedsymbol}{$\lrcorner$}}
\newcommand{\tw}{\textup{tw}}
\newcommand{\B}{\mathcal{B}}
\newcommand{\hX}{\widehat{X}}
\newcommand{\hS}{\widehat{S}}
\newcommand{\stwo}{\psi}
\newcommand{\ktwo}{\kappa_2}
\newcommand{\boundary}{\partial}
\newcommand{\fdegree}{f_d}
\newcommand{\fstar}{f_s}
\newcommand{\fmain}{f}
\newcommand{\Dh}{D_1}
\newcommand{\Dl}{D_2}
\newcommand{\kh}{k_H}
\newcommand{\kl}{k_L}
\newcommand{\kx}{k_X}
\newcommand\sharpP{\textup{\#P}}
\newcommand\sharpWone{\textup{\#W[1]}}
\newcommand\Wone{\textup{W[1]}}
\newcommand\FPT{\textup{FPT}}
\newcommand\leqT{\leq_{\mathsf{fpt}}^{\mathrm{T}}}
\newcommand\leqLin{\leq_{\mathsf{fpt}}^{\mathrm{T,\ell}}}
\newcommand\leqFpt{\leq_{\mathsf{fpt}}}
\newcommand\pClique{\mathsf{\#Clique}}
\newcommand\pBiclique{\mathsf{\#Biclique}}
\newcommand\pUndirCycle{\mathsf{\#UndirCycle}}
\newcommand\pDirCycle{\mathsf{\#DirCycle}}
\newcommand\pUndirPath{\mathsf{\#UndirPath}}
\newcommand\pDirPath{\mathsf{\#DirPath}}
\newcommand\pMatch{\mathsf{\#Match}}
\newcommand\pColMatch{\mathsf{\#ColMatch}}
\newcommand\pSub{\mathsf{\#Sub}}
\newcommand\pEmb{\mathsf{\#Emb}}
\newcommand\pRestr{\mathsf{\#PartitionedSub}}
\newcommand\pGridTiling{\mathsf{Grid Tiling}}
\newcommand\Sub{\mathsf{Sub}}
\newcommand\Emb{\mathsf{Emb}}
\newcommand\SubPart{\mathsf{PartitionedSub}}
\newcommand\pure{\mathrm{pure}}
\def\namedlabel#1#2#3{\begingroup
	#2%
	\def\@currentlabel{#3}%
	\label{#1}\endgroup
}
\begin{document}
\title{Complexity of counting subgraphs: only the boundedness of the vertex-cover number counts}
\author{Radu Curticapean\thanks{Dept. of Computer Science, Saarland University, Saarbr\"ucken,
Germany, curticapean@cs.uni-sb.de}, D\'{a}niel Marx\thanks{Institute for Computer Science and Control, Hungarian Academy of Sciences (MTA SZTAKI), Budapest,
Hungary, dmarx@cs.bme.hu}
}
\maketitle

\begin{abstract}
  For a class $\H$ of graphs, $\pSub(\H)$ is the counting problem
  that, given a graph $H\in \H$ and an arbitrary graph $G$, asks for
  the number of subgraphs of $G$ isomorphic to $H$. It is
  known that if $\H$ has bounded vertex-cover number (equivalently,
  the size of the maximum matching in $\H$ is bounded), then $\pSub(\H)$ is
  polynomial-time solvable. We complement this result with a corresponding
  lower bound: if $\H$ is {\em any} recursively enumerable class of
  graphs with unbounded vertex-cover number, then $\pSub(\H)$ is
  \sharpWone-hard parameterized by the size of $H$ and hence not
  polynomial-time solvable and not even fixed-parameter tractable,
  unless $\FPT=\sharpWone$.

  As a first step of the proof, we show that counting $k$-matchings in
  bipartite graphs is \sharpWone-hard. Recently, Curticapean [ICALP 2013]
  \cite{DBLP:conf/icalp/Curticapean13} proved the \sharpWone-hardness of
  counting $k$-matchings in general graphs; our result strengthens
  this statement to bipartite graphs with a considerably simpler proof
  and even shows that, assuming the
  Exponential Time Hypothesis (ETH), there is no $f(k)n^{o(k/\log k)}$
  time algorithm for counting $k$-matchings in bipartite graphs for
  any computable function $f(k)$. As a consequence, we obtain an
  independent and somewhat simpler proof of the classical result of
  Flum and Grohe [SICOMP 2004] \cite{MR2065338} stating that counting
  paths of length $k$ is \sharpWone-hard, as well as a similar
  almost-tight ETH-based lower bound on the exponent.

\end{abstract}
\section{Introduction}
Counting the number of solutions is often a considerably more
difficult task than deciding whether a solution exists or finding a
single solution.  A classical example is the case of perfect matchings
in bipartite graphs: there are well-known polynomial-time algorithms
for finding a perfect matching, but the seminal result of Valiant
\cite{MR526203} showed that counting the number of perfect matchings
in bipartite graphs is \#P-hard, and hence unlikely to be
polynomial-time solvable.  This phenomenon has been systematically
analyzed, for example, in the context of Constraint Satisfaction
Problems (CSPs), where dichotomy theorems characterizing the
polynomial-time solvable and \#P-hard cases
\cite{DBLP:journals/jacm/Bulatov13,DBLP:journals/jacm/BulatovDGJM13,DBLP:journals/jcss/BulatovDGJJR12,DBLP:journals/siamcomp/DyerGJ09,DBLP:journals/tcs/BulatovDGJR09}
show that very restrictive conditions are needed to ensure that
not only the decision problem is polynomial-time solvable, but the
counting problem is as well.

Our goal in the present paper is to systematically analyze the
tractable cases of counting subgraphs.  Counting the number of times a
certain pattern appears in a graph is a fundamental theoretical
problem that has been explored intensively also on real-world large
graphs
\cite{DBLP:conf/wea/SchankW05,DBLP:conf/icdm/Tsourakakis08,DBLP:conf/kdd/BecchettiBCG08,DBLP:journals/im/KolountzakisMPT12,alon-network-motifs}.
Formally, given graphs $H$ and $G$, the task is to count the number of
subgraphs of $G$ that are isomorphic to the pattern graph $H$; we
would like to understand which graphs $H$ make this problem easy or
hard. However, we have to be careful how we formulate the framework of
our investigations. For every fixed pattern graph $H$, the
number of subgraphs of $G$ isomorphic to $H$ can be determined in 
polynomial-time by brute force: it suffices to check each of
the $|V(G)|^{|V(H)|}$ mappings from the vertices of $H$ to the
vertices of $G$, resulting in a simple polynomial-time algorithm for fixed
$H$. There is a line of research devoted to finding nontrivial
improvements over brute-force search for specific patterns
\cite{DBLP:journals/siamcomp/ItaiR78,DBLP:journals/jacm/AlonYZ95,DBLP:journals/algorithmica/AlonYZ97,DBLP:journals/ipl/KloksKM00,DBLP:conf/esa/BjorklundHKK09,DBLP:journals/jcss/FominLRSR12,DBLP:journals/siamcomp/WilliamsW13,DBLP:conf/isaac/FloderusKLL13,DBLP:conf/soda/BjorklundKK14}. Besides
improvements for specific small graphs $H$, these papers identified
structural properties, such as boundedness of treewidth, pathwidth,
and vertex-cover number, that can give improvements for some infinite
classes $\H$ of graphs $H$. Our goal is to exhaustively characterize which
graph properties are sufficiently strong to guarantee polynomial-time
solvability.

The search for graph properties that make counting easy can be formally
studied in the following framework. For every class $\H$ of graphs,
$\pSub(\H)$ is the counting problem where, given a graph $H\in \H$ and
arbitrary graph $G$, the task is to count the number of (not necessarily induced) subgraphs of
$G$ isomorphic to $H$. Rather than asking which fixed graphs $H$ make
counting easy (as we have seen, the problem is polynomial-time
solvable for every fixed $H$), we ask which {\em classes} $\H$ of
graphs make $\pSub(\H)$ polynomial-time solvable. Furthermore, as
many of the theoretical results and applications involve counting a
small fixed pattern graph $H$ in a large graph $G$, an equally natural
question to ask is whether $\pSub(\H)$ can be solved in time
$f(|V(H)|)\cdot n^{O(1)}$ for some computable function $f$ depending
only on the size of $H$. That is, we may ask whether $\pSub(\H)$ for a
particular class $\H$ is {\em fixed-parameter tractable (FPT)}
parameterized by $|V(H)|$.

\textbf{Main result.} The vertex-cover number $\tau(H)$ of a graph $H$ is the minimum size
of a set of vertices that contains at least one endpoint of every
edge. It is well known that if $\nu(H)$ is the size of a maximum
matching in $G$, then $\nu(H)\le \tau(H)\le 2\nu(H)$.  
If the class $\H$ has bounded vertex-cover number (or equivalently on the maximum matching size), then it follows from a result of
Vassilevska Williams and
Williams~\cite{DBLP:journals/siamcomp/WilliamsW13} that $\pSub(\H)$ is
FPT and it follows from a result of Kowaluk, Lingas, and
Lundell~\cite{DBLP:journals/siamdm/KowalukLL13} that $\pSub(\H)$ is
actually polynomial-time solvable (we also present a simple
self-contained argument for the polynomial-time solvability of
$\pSub(\H)$ in Section \ref{sec:algo-vc}). Our main result complements these
algorithms by showing that boundedness of the vertex-cover number is
the only property of $\H$ that guarantees tractability of $\pSub(\H)$.
\begin{theorem}\label{th:main}
Let $\H$ be a recursively enumerable class of graphs. Assuming $\FPT\neq \sharpWone$, the following are equivalent:
\begin{enumerate}
\item $\pSub(\H)$ is polynomial-time solvable.
\item $\pSub(\H)$ is fixed-parameter tractable parameterized by $|V(H)|$.
\item $\H$ has bounded vertex-cover number.
\end{enumerate}
\end{theorem}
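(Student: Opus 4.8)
The implication $(1)\Rightarrow(2)$ is trivial, and $(3)\Rightarrow(1)$ is the algorithmic result of Kowaluk, Lingas and Lundell~\cite{DBLP:journals/siamdm/KowalukLL13} (a short self-contained argument is also given in Section~\ref{sec:algo-vc}). Hence the entire content of the theorem is the implication $(2)\Rightarrow(3)$, which I would prove in contrapositive form: if $\H$ is recursively enumerable and has \emph{unbounded} vertex-cover number, then $\pSub(\H)$ is \sharpWone-hard parameterized by $|V(H)|$, and therefore neither polynomial-time solvable nor fixed-parameter tractable unless $\FPT=\sharpWone$.

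To prove this I would give a parameterized Turing reduction from counting $k$-matchings in bipartite graphs, which the first part of the paper shows to be \sharpWone-hard. Given a bipartite graph $G'$ and an integer $k$, first run the enumeration of $\H$ until a graph $H\in\H$ with $\nu(H)\ge k$ appears; such an $H$ exists because $\tau(H)\le 2\nu(H)$ and $\tau$ is unbounded on $\H$, and taking the first such $H$ in a fixed enumeration keeps $|V(H)|$ bounded by a computable function of $k$ alone---which is exactly what makes the reduction parameterized. Fix a maximum matching $M=\{a_ib_i:i\in[\ell]\}$ of $H$ with $\ell=\nu(H)\ge k$; then $I:=V(H)\setminus V(M)$ is an independent set of $H$, since otherwise $M$ could be enlarged.

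The heart of the reduction is to build, for each $x\in\{0,1,\dots,\ell\}$, a host graph $G_x$ of size polynomial in $|V(G')|$ and $\ell$, containing a ``target region'' assembled from $G'$, so that the copies of $H$ in $G_x$ split according to how many edges of $M$ are routed into that region: a copy routing $j$ edges of $M$ onto edges of $G'$---while the remaining $\ell-j$ edges of $M$, all of $I$, and all edges of $H$ incident to $I$ are absorbed by fixed private gadgets---should contribute $c_{H,j}\,x^{j}$ times the number of $j$-matchings of $G'$, where $c_{H,j}$ is a positive integer depending only on $H$ and $k$ and computable from $H$. The factor $x^{j}$ is produced by replacing every edge of $G'$ by $x$ parallel copies, and, the routed left endpoints all lying in one copy of one side of $G'$ and the routed right endpoints in one copy of the other, injectivity of the embedding forces the $j$ routed edges of $M$ to land on distinct, pairwise vertex-disjoint edges, i.e.\ on a genuine $j$-matching of $G'$. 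Thus the number of copies of $H$ in $G_x$ is a polynomial of degree at most $\ell$ in $x$ whose coefficient of $x^{j}$ is $c_{H,j}$ times the number of $j$-matchings of $G'$; running the hypothetical $\pSub(\H)$ algorithm on $G_0,\dots,G_\ell$ and interpolating recovers $c_{H,k}$ times the number of $k$-matchings of $G'$, hence (as $c_{H,k}\neq0$) that number itself. Feeding the ETH-based lower bound of the first part through the same construction yields the announced $f(k)\,n^{o(k/\log k)}$ lower bound for $\pSub(\H)$ whenever $\H$ contains graphs of all sufficiently large vertex-cover number.

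The main obstacle, I expect, is the design and correctness of $G_x$, since $H$ is an \emph{arbitrary} graph of large vertex-cover number: $I$ may be large and joined to $V(M)$ in a complicated way, and $H$ may have many edges inside $V(M)$, including edges between endpoints of different matching edges, and all of this must be realized without creating spurious copies of $H$ or spurious routings of $M$. The standard remedy is to pass first to a colored (partitioned) variant of the problem, controlling which host vertices may play each role of $H$: between the color classes of a non-matching edge of $H$ one puts a complete connection---so such edges impose no real constraint and each vertex of $I$ has a canonical private home---while the edges of $M$ receive only the restricted, $G'$-encoding connection; the colored count is then linked to the uncolored count $\pSub(\H)$ by the usual inclusion--exclusion between colored and uncolored subgraph counts, an FPT reduction because $|V(H)|$ is bounded in $k$. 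Carrying this out so that every colored copy of $H$ is exactly a $j$-matching of $G'$ plus one fixed choice inside the gadgets, so that $c_{H,j}$ is well defined (independent of the chosen $j$-matching) and $c_{H,k}\neq0$, and so that the surplus $\ell-k$ is harmless (freeze $\ell-k$ edges of $M$ into private edges), is the technical core of the argument.
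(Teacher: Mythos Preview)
Your high-level structure is right — $(1)\Rightarrow(2)$ trivially, $(3)\Rightarrow(1)$ by the algorithm, and the content is a reduction from bipartite $k$-matching to $\pSub(\H)$ — but the proposed reduction has a genuine gap. You suggest routing through the colored (partitioned) problem $\pRestr(\H)$ and then recovering it from $\pSub(\H)$ by inclusion--exclusion. This cannot work when $\H$ has bounded treewidth, which is exactly the hard case (unbounded treewidth is handled separately by Excluded-Grid arguments, Section~\ref{sec:unbo-treew-graphs}): for bounded-treewidth $H$ the quantity $\#\SubPart(H\to G)$ is polynomial-time computable (see the remark after Theorem~\ref{thm:hard-treewidth}), so no $\sharpWone$-hard problem reduces to it. Concretely, in the vertex-colorful setting each $a_i$ has its own color class, so the images of $a_1,\dots,a_k$ live in $k$ \emph{disjoint} copies of one side of $G'$ with no cross-injectivity constraint; you would be counting $k$-tuples of edges of $G'$, not $k$-matchings. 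Your phrase ``the routed left endpoints all lying in one copy of one side of $G'$'' is precisely what the full partition forbids.

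Dropping the coloring and working with uncolored copies of $H$ directly runs into the obstacle the introduction singles out: even after forcing a copy $F\simeq H$ to contain a fixed copy of $H[C]$ (with $C=V(H)\setminus V(M)$), the remaining $2k$ vertices of $F$ need not induce a matching, because $V(H)$ may admit another partition $(C',Y')$ with $H[C']\simeq H[C]$ but $H[Y']$ not a matching — the paper gives a four-vertex example. Your coefficients $c_{H,j}$ are therefore not well-defined for the first $H$ your enumeration happens to find. The paper's remedy is to search not for any $H$ with $\nu(H)\ge k$ but for a \emph{$k$-matching gadget} (Definition~\ref{def:gadget}): an induced matching $M$ such that every boundary-preserving isomorphism $H[C]\simeq H[C']$ with $H\setminus C'$ bipartite and without isolated vertices forces $H\setminus C'$ to be a matching too. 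The reduction (Lemma~\ref{lem:reduce}) then uses inclusion--exclusion to pin down $H[C]$ plus interpolation over the number of added isolated vertices. The real work is Theorem~\ref{thm:gadgetexists} — every bounded-treewidth class of unbounded vertex-cover number contains $k$-matching gadgets for all $k$ — whose proof occupies Sections~\ref{sec:bound-degr-graphs}--\ref{sec:bound-treew-graphs} and is not bypassed by your interpolation-in-$x$ idea.
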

Let us review some results from the literature that are of similar
form as Theorem~\ref{th:main}. A result of Grohe, Schwentick, and
Segoufin~\cite{380867} can be interpreted as characterizing the
complexity of finding a vertex-colored graph $H\in \H$ in $G$; they
show that the tractability criterion is the boundedness of the
treewidth of $\H$. Grohe~\cite{1206036} considered the problem of
deciding if there is a homomorphism from a graph $H\in \H$ to $G$;
here the tractability criterion is the boundedness of the treewidth of
the core of $H$. For the problem of counting homomorphisms, Dalmau and
Jonsson \cite{DBLP:journals/tcs/DalmauJ04} showed that it is again the
boundedness of the treewidth that matters. Chen, Thurley, and Weyer
\cite{DBLP:conf/icalp/ChenTW08} studied the problem of finding induced
subgraphs, which is apparently the most difficult of these problems,
as the problem is easy only if the class $\H$ contains only graphs of
bounded size. In all of these results, similarly to
Theorem~\ref{th:main}, polynomial time and fixed-parameter
tractability coincide. An example where polynomial time and FPT is
not known to be equivalent is the result of Marx
\cite{DBLP:journals/jacm/Marx13}, which can be interpreted as
characterizing the complexity of finding vertex-colored hypergraphs. For
this problem, bounded submodular width is the property that guarantees
fixed-parameter tractability, but it is not known if it implies
polynomial-time solvability.

Very recently, Jerrum and Meeks
\cite{DBLP:journals/corr/Meeks14,DBLP:journals/corr/JerrumM13,DBLP:journals/corr/JerrumM13a}
studied problems related to counting induced subgraphs isomorphic to a
given graph $H$ and counting induced subgraphs satisfying certain
fixed properties. As these investigations are in the very
different setting of induced subgraphs, they are not directly related
to our results.

We remark that there have been investigations of finding and counting
subgraphs in a framework when the pattern graph $H$ is arbitrary and
the host graph $G$ is restricted to a certain class; some of these
results appear in the more general context of evaluating first-order
logical sentences
\cite{DBLP:journals/jacm/FrickG01,DBLP:journals/corr/GroheKS13,DBLP:journals/jacm/DvorakKT13}.
Needless to say, these results are very different from our setting.

\textbf{FPT vs.~polynomial time.} There are at least two reasons why it is very natural to study the
fixed-parameter tractability of $\pSub(\H)$ along with its
polynomial-time solvability. As mentioned earlier, there is a large
body of previous work focusing on counting small patterns in large
graphs, hence, for example, the question whether there is a
$2^{2^{O(k)}}\cdot n^{O(1)}$ time algorithm for counting cycles of
length $k$ fits naturally into the framework of previous
investigations. Ruling out polynomial-time algorithms would not, on
its own, answer whether such algorithms exist and therefore would
give only a partial picture of the complexity of counting
subgraphs. Moreover, it seems that understanding fixed-parameter
tractability is a prerequisite for understanding polynomial-time
solvability. In all the results mentioned in the previous paragraph,
the families of problems considered contain problems that seem to be
\#P-intermediate: they are unlikely to be polynomial-time solvable, but
they are unlikely to be \#P-hard either.  We face a similar situation
in the characterization of $\pSub(\H)$ (see the examples in the next
two paragraphs).  Due to the existence of such \#P-intermediate
problems $\pSub(\H)$, we cannot hope for a P vs. \#P-hard dichotomy.
It is a very fortunate coincidence that the polynomial-time solvable
and fixed-parameter tractable cases of $\pSub(\H)$ coincide, and hence
the characterization of the latter gives a characterization of the
former as well. 

As a first example, let us define a class $\H$ the following way:
for every $k\ge 1$, let $\H$ contain the graph $H_k$ consisting of a
clique of size $k$, padded with $2^k$ isolated vertices. We can count the
number of copies of $H_k$ in $G$ in time $|V(G)|^{O(k)}=|V(G)|^{O(\log
  |V(H)|)}$, hence $\pSub(\H)$ is solvable in quasi-polynomial time,
but there does not seem any way of improving this to polynomial
time. This suggests that the problem is NP-intermediate, as it is not
believed that NP-hard problems can be solved in quasi-polynomial
time.

More importantly, Chen et al.\ showed an analogue of Ladner's Theorem for induced subgraph counting problems $\#\mathsf{IndSub}(\H')$ under the assumption that $\mathrm{P}\neq \mathrm{P}^\sharpP$: Define a reflexive and transitive relation (a quasiorder) on the set of polynomial-time decidable graph classes by declaring $\H \leq \H'$ iff $\#\mathsf{IndSub}(\H)$ admits a polynomial-time Turing reduction to $\#\mathsf{IndSub}(\H')$. This relation is indeed reflexive and transitive, and it orders subgraph counting problems $\#\mathsf{IndSub}(\H)$ according to their complexity in the non-parameterized sense. In this quasiordered set, Chen et al.\ showed the existence of a dense linear order, similar to Ladner's theorem that establishes such a linear order between P and NP. This implies that, when counting induced subgraphs, there exist problems that are \#P-intermediate, i.e., they are neither in P nor \#P-complete.

\textbf{Complexity of counting $k$-matchings.} The study of the
fixed-parameter tractability of counting problems was initiated by
Flum and Grohe \cite{MR2065338}. Finding paths and cycles of length
$k$ is well known to be fixed-parameter tractable
\cite{DBLP:journals/jacm/AlonYZ95,DBLP:conf/wg/KneisMRR06,DBLP:journals/ipl/Williams09,DBLP:conf/focs/Bjorklund10},
but Flum and Grohe \cite{MR2065338} proved the surprising result that
counting paths and cycles of length $k$ is \sharpWone-hard, and
hence unlikely to be fixed-parameter tractable. They raised as an
open question whether counting $k$-matchings (i) in general graphs or (ii) in bipartite graphs is fixed-parameter
tractable. Very recently, Curticapean
\cite{DBLP:conf/icalp/Curticapean13} (based on the earlier work of
Bl\"aser and Curticapean \cite{DBLP:conf/iwpec/BlaserC12}) used quite
involved algebraic techniques to answer the first question in the negative
by showing that counting $k$-matchings is \sharpWone-hard on general graphs. Our proof
of Theorem~\ref{th:main} is based on a reduction from counting $k$-matchings. In fact, the proof technique requires the stronger result 
that counting $k$-matchings is \sharpWone-hard even in bipartite
graphs. Therefore, in Section~\ref{sec:bipart-edge-colorf}, we prove this stronger result using a proof that relies only on basic
linear algebra (the rank of the Kronecker
product of matrices) and is significantly simpler than the proof of
Curticapean \cite{DBLP:conf/icalp/Curticapean13}. Our proof also shows
the hardness of the ``edge-colorful'' variant where the edges of $G$
are colored with $k$ colors and we need to count the $k$-matchings in
$G$ where every edge has a different color. An additional benefit of
our proof is that, combined with a lower bound of Marx
\cite{marx-toc-treewidth} for \textsc{Subgraph Isomorphism}, it gives
an almost-tight lower bound on the exponent of $n$. The Exponential
Time Hypothesis (ETH) of Impagliazzo, Paturi, and Zane
\cite{MR1894519} implies that $n$-variable 3SAT cannot be solved in
time $2^{o(n)}$. Our result shows that, assuming ETH, the number of
$k$-matchings in a bipartite graph cannot be counted in time
$f(k)n^{o(k/\log k)}$ for any computable function $f$. There are
simple reductions from counting $k$-matchings to counting paths and cycles of
length $k$, thus our proof gives an independent and somewhat simpler
proof of the results of Flum and Grohe \cite{MR2065338} on counting
paths and cycles, together with almost-tight ETH-based lower bounds on
the exponent that were not known previously.
\begin{theorem}\label{th:main-matching}
The following problems are \sharpWone-hard and, assuming ETH, cannot be solved in time $f(k)\cdot n^{o(k/\log k)}$ for any computable function $f$:
\begin{itemize}
\item Counting (directed) paths of length $k$.
\item Counting (directed) cycles of length $k$.
\item Counting $k$-matchings in bipartite graphs.
\item Counting edge-colorful $k$-matchings in bipartite graphs.
\end{itemize}
\end{theorem}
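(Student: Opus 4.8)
The plan is to derive all four statements from a single source of hardness, \textsc{Partitioned Subgraph Isomorphism}, through a short chain of reductions whose only difficult link is the passage into bipartite matchings (the subject of Section~\ref{sec:bipart-edge-colorf}). Recall that an instance of \textsc{Partitioned Subgraph Isomorphism} consists of a pattern $H$ with $k = |E(H)|$ edges and a host $G$ whose vertex set is partitioned into classes $(V_v)_{v \in V(H)}$, and that the task is to count the maps $\phi\colon V(H) \to V(G)$ with $\phi(v) \in V_v$ for every $v$ and $\phi(u)\phi(v) \in E(G)$ for every $uv \in E(H)$; write $\pRestr$ for this counting problem. Letting $H$ range over cliques shows that $\pRestr$ is $\sharpWone$-hard parameterized by the size of $H$, and the lower bound of Marx~\cite{marx-toc-treewidth}, applied with $H$ ranging over $3$-regular expanders (so that $\tw(H) = \Theta(k)$), shows that, assuming ETH~\cite{MR1894519}, already the decision version of the problem, and hence $\pRestr$, admits no $f(k)\, n^{o(k/\log k)}$ algorithm. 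It then remains to transfer these two lower bounds to the matching problems while blowing up the parameter by at most a constant factor.

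\textbf{Core reduction: $\pRestr$ to counting edge-colorful $k$-matchings in bipartite graphs.} From $(H, G)$ I would build in polynomial time a family of $f(k)$ bipartite graphs, each of size $\mathrm{poly}(n)$, each edge-colored with $O(k)$ colors and of matching number $O(k)$, assembled from one gadget per vertex of $H$ and one per edge of $H$ and depending on a vector $\mathbf{t} = (t_v)_{v \in V(H)}$ of ``multiplicity'' parameters, one per vertex of $H$. The gadget for a pattern edge $e = uv$ offers, for each host edge $xy \in E(G)$ with $x \in V_u$ and $y \in V_v$, a way of ``realizing $e$ by $xy$'' marked with a color associated to $e$; the gadget for a pattern vertex $v$ is designed so that an edge-colorful $k$-matching in which the incident edge-gadgets select $d$ distinct vertices of $V_v$ for $v$ gets weighted by a factor $t_v^{\,d}$. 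Hence a \emph{consistent} selection at $v$ (all incident gadgets agreeing, $d = 1$) carries the factor $t_v$, while an inconsistent one ($d \ge 2$) carries a strictly higher power of $t_v$. Summing over all selections, the number of edge-colorful $k$-matchings of the graph built from $\mathbf{t}$ is a polynomial in the $t_v$ whose coefficient of the monomial $\prod_{v} t_v$ is exactly the number of maps $\phi$ counted by $\pRestr(H, G)$, the higher-degree terms accounting for selections that are inconsistent at some vertex. Evaluating this polynomial identity at a product grid of integer vectors $\mathbf{t}$ yields a linear system whose coefficient matrix is the Kronecker product, over $v \in V(H)$, of small per-vertex Vandermonde matrices; each factor is nonsingular, so by multiplicativity of rank under Kronecker products the whole matrix is invertible, and solving the system recovers $\pRestr(H, G)$. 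Because each graph in the family uses $O(k)$ colors and has matching number $O(k)$, an FPT algorithm (resp.\ an $f(k)\, n^{o(k/\log k)}$ algorithm) for counting edge-colorful $k$-matchings in bipartite graphs would give one for $\pRestr$, contradicting $\FPT \neq \sharpWone$ (resp.\ ETH via Marx). This is the one genuinely delicate step: the per-vertex consistency bookkeeping must be realized purely through matchings of a \emph{bipartite} gadget, the number of colors and the matching number must remain linear in $k$ (any super-constant per-edge overhead would spoil the $k/\log k$ exponent), and the per-vertex weight polynomials must be nonconstant of controlled degree so that the Vandermonde factors, and hence their Kronecker product, are invertible.

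\textbf{From edge-colorful matchings to plain matchings, and on to paths and cycles.} To remove the coloring, I would apply inclusion--exclusion over color subsets: if $(G, c)$ is bipartite with $c\colon E(G) \to [k]$ and $G_S$ is the spanning subgraph with edge set $c^{-1}(S)$, then the number of edge-colorful $k$-matchings of $(G, c)$ equals $\sum_{S \subseteq [k]} (-1)^{k - |S|}$ times the number of $k$-matchings of $G_S$, and every $G_S$ is again bipartite. This is an fpt Turing reduction with $2^k$ oracle calls and polynomial blow-up, so $\sharpWone$-hardness and the $n^{o(k/\log k)}$ lower bound carry over to counting $k$-matchings in bipartite graphs. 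Finally, counting $k$-matchings reduces to counting length-$k$ paths and to counting length-$k$ cycles by elementary gadget reductions: one links the $k$ prospective matching edges into a single path (resp.\ cycle) through $\Theta(k)$ fresh high-degree ``connector'' vertices, so that the (ordered) $k$-matchings of $G$ correspond, up to a known constant factor and a short inclusion--exclusion over which connectors are actually used, to the length-$\Theta(k)$ paths (resp.\ cycles) of the new graph; orienting each connector gadget acyclically (resp.\ consistently along the cycle) handles the directed variants. Since every reduction in this chain multiplies the relevant length parameter by at most a constant and the instance size by at most a polynomial, all four problems in Theorem~\ref{th:main-matching} inherit both the $\sharpWone$-hardness and the ETH-based $f(k)\, n^{o(k/\log k)}$ lower bound.
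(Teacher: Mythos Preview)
Your high-level architecture matches the paper's: start from $\pRestr$ on bounded-degree patterns (so that Marx's lower bound applies), encode it as counting edge-colorful bipartite matchings via per-vertex gadgets and a Kronecker-product linear system, strip the colors by inclusion--exclusion, and finally pass to cycles and paths. The step you yourself flag as ``genuinely delicate'' is where the proposal has a real gap.

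You assert that one can design a per-vertex gadget so that a matching in which the incident edge-gadgets select $d$ distinct host vertices of $V_v$ is weighted by exactly $t_v^{\,d}$, and then invoke a Vandermonde argument. But you never construct such a gadget, and it is far from clear one exists in the bipartite matching model: in any port-based construction (each $x\in V_v$ gets one port per incident pattern edge, since otherwise two colors choosing the same $x$ would already collide), the contribution of a configuration depends on the full multiset profile of selections, not just on $d$, and obtaining the clean multiplicative form $t_v^{\,d}$ would require each ``activated'' host vertex to contribute an independent factor $t_v$---precisely the disjunctive behaviour that matchings resist encoding. The paper does not attempt a parametric weight at all. It first restricts $H$ to bipartite $3$-regular graphs (this bipartiteness is also what makes the resulting host bipartite, a point you leave unaddressed), replaces each host vertex by a fixed $C_6$ with three designated ports, classifies the five possible ``types'' describing how the three $\Gamma$-edges at $v$ are spread across the $C_6$-copies, and probes with five explicitly chosen color subsets $A_1,\dots,A_5\subseteq[6]$. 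The resulting $5\times 5$ matrix is \emph{not} Vandermonde and its nonsingularity is not automatic; the paper establishes it by computing the matrix at $n=0$ by hand (a collection of colorful-matching counts in $15$-vertex graphs) and checking that its determinant is $12$. That concrete gadget together with a verified full-rank certificate is the content your plan is missing.

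Your route from matchings to paths and cycles also diverges from the paper and is underspecified. The paper goes first to \emph{directed} $2k$-cycles by orienting the bipartite graph $L\to R$ and adding all $R\to L$ edges, so that any directed $2k$-cycle alternates and its $L\to R$ edges form a $k$-matching with multiplicity $(k-1)!$; it then reaches undirected cycles via an in/out split with $k$ colored parallel internal edges plus inclusion--exclusion, and quotes Flum--Grohe for the remaining steps to paths. Your ``connector vertices'' sketch is plausible but does not, as written, control length-$\Theta(k)$ walks that deviate from the intended $G$-edge/$G$-edge/connector rhythm; making this precise is more work than the paper's reduction, which exploits bipartiteness directly.
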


\textbf{Hereditary classes.} As a warm up, In Section~\ref{sec:hereditary-classes}, we give a very simple proof
of Theorem~\ref{th:main} in the special case when $\H$ is hereditary,
that is, when $H\in \H$ implies that every induced subgraph of $H$ is also
in $\H$. If $\H$ is hereditary and has unbounded vertex-cover number,
then a Ramsey argument shows that $\H$ contains either every clique,
or every complete bipartite graph, or every matching (i.e., 1-regular
graph). In each case, \sharpWone-hardness follows. While this proof is
very simple and intuitively explains what the barrier is that we hit
when going beyond bounded vertex-cover number, it leaves many natural
questions unanswered. In principle, the counting problem where the
pattern is a set of $k$ disjoint triangles can be simpler than
counting $k$-matchings, but there is no hereditary class $\H$ such
that $\pSub(\H)$ expresses exactly the complexity of the former
problem: if a hereditary class $\H$ contains the disjoint union of $k$
triangles, then it also contains $k$-matchings, and hardness of $\H$ could follow from matchings alone.  Therefore, the setting of hereditary classes 
cannot answer if counting disjoint triangles is easier than counting matchings.
While intuitively it
would seem obvious that counting more complicated objects should not be easier
 (and, in particular, counting $k$ disjoint triangles should not be easier than counting $k$-matchings), there is no a priori
theoretical justification for this. In fact, in followup work, we
study edge-colored versions of the problem and identify cases where
removing vertices from the pattern can actually make the problem
harder. Therefore, it is a nontrivial conclusion of
Theorem~\ref{th:main} that adding edges and vertices to the pattern
does not make $\pSub(\H)$ any easier.

\textbf{Proof overview.} We proceed the following way for general (not
necessarily hereditary) classes $\H$. First, if $\H$ has unbounded
treewidth, then the arguments underlying the previous work of Grohe,
Schwentick, and Segoufin \cite{380867}, Grohe \cite{1206036}, Dalmau
and Jonsson \cite{DBLP:journals/tcs/DalmauJ04}, and Chen, Thurley, and
Weyer~\cite{DBLP:conf/icalp/ChenTW08} go through (see Section~\ref{sec:unbo-treew-graphs}). Essentially, we need
two reductions. First, there is a simple reduction from counting
cliques to counting colored grids. If $\H$ has unbounded treewidth,
then the Excluded Grid Theorem of Robertson and Seymour
\cite{MR854606} shows that the graphs in $\H$ have arbitrary large
grid minors. Therefore, we can embed the problem of counting colored
grids into $\pSub(\H)$. As these techniques are fairly standard by
now, the main part of our proof is handling the case when $\H$ has
bounded treewidth. This is the part where we have to deviate from
previous results (where bounded treewidth always implied tractability)
and have to use the fact that counting $k$-matchings is hard.

If $\H$ has bounded treewidth, then the Ramsey argument mentioned in
the discussion of hereditary classes shows (as the members of $\H$
cannot contain large cliques and complete bipartite graphs) that there
are graphs in $\H$ containing large induced matchings. Our goal is to
use these large induced matchings to reduce counting $k$-matchings in
bipartite graphs to $\pSub(\H)$. Suppose that there is a graph $H\in
\H$ such that $V(H)$ has a partition $(X,Y)$ where $H[Y]$ is a
$k$-matching. By simple inclusion/exclusion arguments, it is
sufficient to prove hardness for the more general problem where we
count only those subgraphs of $G$ isomorphic to $H$ that contain
certain specified vertices/edges of $G$. This suggests the following
reduction: let us extend $G$ to a graph $G'$ by introducing a copy of
$H[X]$ fully connected to every original vertex of $G$ and then
consider the problem of counting subgraphs of $G'$ isomorphic to $H$
that contain every vertex and edge of this copy of $H[X]$. As $H[Y]$
is a $k$-matching (that is, attaching to a $H[X]$ a $k$-matching in a certain way
extends it to $H$), any $k$-matching of $G$ can be used to extend
the copy of $H[X]$ to a subgraph of $G'$ isomorphic to $H$.  It could seem now
that the number of subgraphs of $G'$ isomorphic to $H$ and containing
$H[X]$ is exactly the number of $k$-matchings in $G$. Unfortunately,
this is not true in general due to a seemingly unlikely problem: if we
extend $H[X]$ to a copy of $H$, then it is not necessarily true that
the extension forms a $k$-matching. That is, it is possible that
$V(H)$ has another partition $(X',Y')$ such that $H[X']$ is isomorphic to
$H[X]$, but $H[Y']$ is not a $k$-matching. While this can be perhaps
considered counterintuitive, there are very simple examples where this
can happen. Consider, for example, the graph $H$ on vertices $a$, $b$, $c$, $d$, where any two vertices are adjacent, except $a$ and $d$. Now $X=\{a,b\}$ and $Y=\{c,d\}$ is a partition where $H[Y]$ is an edge. Consider now the partition $X'=\{b,c\}$, $Y'=\{a,d\}$. We have $H[X]\simeq H[X']$, but $H[Y']$ contains two independent vertices. (The reader may easily find larger examples of this flavor, for example, by taking several disjoint copies of $H$.) Arguing about the isomorphism of extensions of
graphs is notoriously counterintuitive: for example, the
reconstruction conjecture of Kelly \cite{MR0087949} and Ulam
\cite{MR0120127} (the {\em deck} of graph is the multiset of graphs
obtained by removing one vertex in every possible way; the conjecture
says that if two graphs have the same deck, then they are isomorphic)
has been open for more than 50 years.

Our goal is to find graphs $H\in \H$ and partitions $(X,Y)$ where the
problem described in the previous paragraph does not occur. We say
that $H\in \H$ and a partition $(X,Y)$ is a {\em $k$-matching gadget} if $H[Y]$
is a $k$-matching, and whenever $(X',Y')$ is a partition of $V(H)$
such that $H[X]\simeq H[X']$ and $H[Y']$ satisfies some technical
conditions that we enforce in the reduction (such as $H[Y']$ is
bipartite and has no isolated vertices), then $H[Y']$ is also a
$k$-matching. If the class $\H$ has such $k$-matching gadgets for every $k\ge 1$,
then we can reduce counting $k$-matchings to $\pSub(\H)$ with a
reduction similar to what was sketched in the previous
paragraph (Section~\ref{sec:reduc-match-psubh}). We prove the existence of $k$-matching gadgets in $\H$ by a
detailed graph-theoretic study, where we first consider bounded-degree
graphs (Section~\ref{sec:bound-degr-graphs}), then move on to graphs that have unbounded degree, but do not contain
large subdivided stars (Section~\ref{sec:graphs-with-no}), and then finally consider graphs where only
the treewidth is bounded (Section~\ref{sec:bound-treew-graphs}). Together with the hardness proof for classes with unbounded treewidth (Section~\ref{sec:unbo-treew-graphs}) and the algorithm for bounded vertex-cover number (Section~\ref{sec:algo-vc}), this completes the proof of Theorem~\ref{th:main}.

\section{Preliminaries}

If $A$ is a set, we will sometimes write $\# A := |A|$ for the cardinality of $A$.
For $\ell \in \mathbb N$ and an indeterminate $x$, let $(x)_\ell := (x)(x-1)\ldots (x-\ell+1)$ denote the falling factorial.

In this paper, graphs are undirected, unweighted and simple, unless stated otherwise.
We write $H \simeq H'$ if $H$ and $H'$ are isomorphic.
If $\H$ is a class of graphs and $f$ is a function from graphs to $\mathbb N$, such as the vertex-cover number $\tau(H)$, then we call $f$ \emph{bounded} on $\H$ if there is a fixed $b\in\mathbb N$ such that every $H\in \H$ satisfies $f(H) \leq b$. Otherwise, we call $f$ \emph{unbounded} on $\H$. 

The graph $H$ is a \emph{minor}
of $G$, written $H\preceq G$, if $H$ can be obtained from $G$
by edge/vertex-deletions and edge-contractions. The contraction of an edge
$uv\in E(G)$ identifies $u,v\in V(G)$ to a single vertex adjacent to
the union of the neighborhoods of $u$ and $v$ in $G$. Equivalently, $H$ is a minor of $G$ if it has a {\em minor model} in $G$, which is an assignment of a {\em branch set} $B_v\subseteq V(G)$ of vertices to every $v\in V(H)$ such that these sets are pairwise disjoint, $G[B_v]$ is connected, and if $uv$ is an edge of $H$, then there is at least one edge between $B_u$ and $B_v$ in $G$.

\begin{definition}\label{def:tw}
A \emph{tree decomposition} of a
graph $G$ is a pair $(T,\B)$ in which $T$ is a tree\footnote{We often assume that $T$ is rooted.} and
$\B=\{B_t\:|\:t\in V(T)\}$ is a family of subsets of $V(G)$ such that
\begin{enumerate}
\item $\bigcup_{t\in V(T)}B_i = V$; 
\item for each edge $e=uv\in E(G)$, there
exists a $t\in V(T)$ such that both $u$ and $v$ belong to $B_t$; and
\item the set of nodes $\{t\in V(T)\:|\:v\in B_t\}$
forms a connected subtree of $T$ for every $v\in V(G)$.
\end{enumerate}
To distinguish between vertices of the original graph $G$ and
vertices of $T$ in the tree decomposition, we call vertices of $T$
\emph{nodes} and their corresponding $B_i$'s \emph{bags}.  The \emph{width}
of the tree decomposition is the maximum size of a bag in $\B$ minus
$1$.  The \emph{treewidth} of a graph $G$, denoted by $\tw(G)$, is the
minimum width over all possible tree decompositions of $G$. 
\end{definition}

For the purpose of this paper, parameterized problems are problems that ask for some output on input $(x,k)$, where $x$ is an instance and $k\in \mathbb N$ is a parameter.
A problem is \emph{fixed-parameter tractable} (FPT) if it admits an algorithm with runtime $f(k)n^{\mathcal O(1)}$ for a computable function $f$.
For parameterized problems $A,B$, we write $A \leqT B$ if $A$ admits a parameterized Turing reduction to $B$, i.e., 
given oracle access for $B$, we can solve an instance $(x,k)$ to $A$ in time $f(k)n^{\mathcal O (1)}$, calling the oracle only on queries $(y,k')$ with $k'\leq g(k)$. Here, both $f$ and $g$ are computable functions. We write $\leqLin$ if such a reduction exists with $g \in \mathcal O (k)$. It is known that if $A\leqT B$ and $B$ is FPT, then it follows that $A$ is FPT as well.
For our purposes, a parameterized problem $A$ is $\sharpWone$-hard if there is a reduction $\pClique \leqT A$, where $\pClique$ is the problem of counting $k$-cliques in a graph $G$ on input $(G,k)$. It is a standard assumption of complexity theory that $\mathsf{FPT} \neq \sharpWone$, parallel to the classical assumption that $\mathsf{P} \neq \mathsf{\#P}$.
Thus, assuming $\mathsf{FPT} \neq \sharpWone$, no $\sharpWone$-hard problem admits an FPT-algorithm.

It is known that if $A\leqLin B$ and $B$ can be solved in time $h_1(k)\cdot n^{h_2(k)}$ for some computable functions $h_1, h_2$, then $A$ can be solved in time $h_3(k)\cdot n^{\mathcal O (h_2(k))}$, that is, with the same asymptotic growth in the exponent of $n$. This fact can be used to transfer lower bounds on the exponent of $n$: if $A\leqLin B$ and $B$ has no $f(k)n^{o(h(k))}$ algorithm for any computable function $f$, then $A$ does not have such an algorithm either.

\begin{definition}
\label{def:uncolored-problems}Let $\mathcal H$ be a class of graphs, and let $H,G$ be graphs.
\begin{enumerate}
\item Let $\Sub(H\to G)$ denote the set of all (not necessarily induced) subgraphs $F\subseteq G$ with $F \simeq H$.
The problem $\pSub(\H)$ asks, given as input a graph $H\in\H$ and an arbitrary graph $G$, for the number $\#\Sub(H\to G)$. The parameter in this problem is $|V(H)|$.
\item A {\em subgraph embedding} of $H$ into $G$ is an injective function $f: V(H) \to V(G)$ such that $uv\in E(H)$ implies $f(u)f(v)\in E(G)$.
Let $\Emb(H\to G)$ denote the set of all subgraph embeddings of $H$ into $G$.
The problem $\pEmb(\mathcal{H})$ is defined as follows:
On input $H\in\mathcal{H}$ and $G$, we ask for $\#\Emb(H\to G)$. The parameter in this problem is $|V(H)|$.
\item In $\pMatch$, we are given a bipartite graph $G$ and $k\in\mathbb N$ and ask for $\#\Sub(M_k\to G)$, where $M_k$ denotes the matching of size $k$, i.e., the $1$-regular graph on $2k$ vertices with $k$ edges. The parameter in this problem is $k$.
\end{enumerate}
\end{definition}

\begin{remark}
\label{rem:sub-auto-emb}
Observe that the elements of $\Emb(H\to H)$ are exactly the automorphisms of $H$,
i.e., the isomorphisms from $H$ to $H$.
Therefore $\#\Emb(H\to G) = \#\Emb(H\to H) \cdot \#\Sub(H\to G)$ for all graphs $H,G$.
We can thus solve the problem $\pSub$ with two oracle calls to $\pEmb$. This means that it is sufficient to prove hardness results for $\pSub$, as this also implies hardness for $\pEmb$.
\end{remark}

\subsection{Colored graphs}
In the subsequent arguments, we will sometimes count occurrences of
colored graphs $H$ within colored graphs $G$.  While such problems can
be defined in full generality and indeed lead to problems and
questions that are interesting on their own right, here we treat
problems on colored graphs only as technical tools helpful in
obtaining results for problems on uncolored graphs.  Therefore, we
chose to limit our exposition to the two specific settings occurring
in this paper. Firstly, we will count copies of vertex-colored graphs
$H$ within vertex-colored graphs $G$, where each vertex of $H$ has a
distinct color. Secondly, we will count edge-colored matchings $M$ in
edge-colored graphs $G$.
\begin{definition}
\label{def:colored-graphs}Let $\Gamma$ be a set of colors. A \emph{colored
graph} is a graph $G$ together with a coloring $c_G:V(G)\to\Gamma$ or $c_G:E(G)\to\Gamma$. In the
first case, we call $G$ vertex-colored, otherwise edge-colored. 
For $\gamma\in\Gamma$, let $V_{\gamma}(G)$ denote the set of all
$\gamma$-colored vertices of $G$. For $S\subseteq\Gamma$, let $V_{S}(G):=\bigcup_{\gamma\in S}V_{\gamma}(G)$. Define $E_\gamma$ and $E_S$ likewise.

We call $G$ \emph{colorful} if $c_G$ is bijective. In such cases,
it will be convenient to identify $\Gamma$ with $V(G)$ or $E(G)$,
depending on whether $G$ is vertex- or edge-colored.

Two $\Gamma$-colored graphs $H$ and $H'$ are \emph{color-preserving isomorphic}
if there is an isomorphism from $H$ to $H'$ that maps each $\gamma$-colored vertex (or edge) of $H$ to a $\gamma$-colored vertex (or edge) of $H'$.
\end{definition}

The following counting problems associated with colored graphs will occur in the paper. 

\begin{definition}
\label{def:colored-problems}
\begin{enumerate} 
\item 
For $\Gamma$-vertex-colored graphs $H,G$ with colorful $H$,
let $\SubPart(H\to G)$ denote the set of all subgraphs $F\subseteq G$ such that $F$ is color-preserving isomorphic to $H$.
Given a class $\H$ of uncolored graphs, the problem $\pRestr(\H)$ asks for $\#\SubPart(H\to G)$, where $H$ is a $\Gamma$-vertex-colorful graph whose underlying uncolored graph is contained in $\H$, and $G$ is $\Gamma$-vertex-colored. The parameter is $|V(H)|$.
\item For a $\Gamma$-edge-colored graph $G$ and $X\subseteq \Gamma$, let $\mathcal M_X[G]$ denote the set of all $X$-colorful matchings in $G$, i.e., matchings in $G$ that choose exactly one edge from each color in $X$. 
In $\pColMatch$, we are given a \emph{bipartite} $\Gamma$-edge-colored graph $G$ and $X \subseteq \Gamma$ and 
ask for $\#\mathcal M_X[G]$. The parameter is $|X|$.
\end{enumerate}
\end{definition}

Note that $\pRestr(\H)$ is defined for a class $\H$ for uncolored graphs, while its inputs are vertex-colored graphs.

\begin{remark}
\label{rem:partitioned-colorful}
Let $H,G$ be $\Gamma$-vertex-colored graphs and let $F$ be a subgraph of $G$ that is color-preserving isomorphic to $H$. If $uv\in E(F)$ is an edge with endpoints of color
$\gamma_u,\gamma_v\in\Gamma$, then there is an edge between vertices
of colors $\gamma_u,\gamma_v$ in $E(H)$.  We may therefore assume
that, whenever $uv\in E(G)$ is an edge with endpoints of color
$\gamma_u,\gamma_v\in\Gamma$ in $G$, then $\{\gamma_u,\gamma_v\}\in
E(H)$.  In other words, we may assume that $G$ has edges between two
color classes if $H$ has an edge with endpoints of this color,
otherwise the edges between the classes are clearly useless.
\end{remark}

The principle of inclusion and exclusion will be an important ingredient of reduction between the colored and the uncolored versions of the problems defined above. It will always be invoked in the following form: Given
a set $\Omega$ and $\mathcal{A}_{1},\ldots,\mathcal{A}_{k}\subseteq\Omega$,
we are interested in the cardinality of $\Omega\setminus\bigcup_{i\in[k]}\mathcal{A}_{i}$.
It is a well-known fact that
\begin{equation}
\label{eq:incl-excl}
\left|\Omega\setminus\bigcup_{i\in[k]}\mathcal{A}_{i}\right|=|\Omega|+\sum_{t=1}^{k}(-1)^{t}\sum_{1\leq i_{1}<\ldots<i_{t}\leq k}|\mathcal{A}_{i_{1}}\cap\ldots\cap\mathcal{A}_{i_{t}}|.
\end{equation}
Note that we can apply \eqref{eq:incl-excl} only if we have an
efficient way of computing the size of the intersections
$\mathcal{A}_{i_{1}}\cap\ldots\cap\mathcal{A}_{i_{t}}$. As a first demonstration of this
principle, we obtain a reduction from the colorful problem to the
uncolored problem.

\begin{lemma}
\label{lem:colorful2uncolored}
The following reductions between colored and uncolored problems hold:
\begin{enumerate}
\item $\pRestr(\mathcal{H})\leqLin\pSub(\mathcal{H})$, for any class $\mathcal{H}$.
\item $\pColMatch \leqLin \pMatch$.
\end{enumerate}
\end{lemma}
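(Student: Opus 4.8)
The plan for both items is the same inclusion--exclusion scheme prepared by \eqref{eq:incl-excl} and Remark~\ref{rem:partitioned-colorful}.

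Start with item~1. Fix a $\Gamma$-vertex-colorful pattern $H$, so that we may identify $\Gamma$ with $V(H)$ and set $k:=|V(H)|=|\Gamma|$, and fix a $\Gamma$-vertex-colored host $G$. The first step is to invoke Remark~\ref{rem:partitioned-colorful} and delete from $G$ every edge joining two color classes that are non-adjacent in $H$; by that remark this does not change $\#\SubPart(H\to G)$. After this cleanup I claim that $\#\SubPart(H\to G)$ equals the number of subgraphs $F\subseteq G$ with $F\simeq H$ as \emph{uncolored} graphs that moreover use exactly one vertex of each color: indeed, for such an $F$ the color-matching bijection $\phi:V(F)\to V(H)$ (sending each vertex to the $H$-vertex of its color) sends edges of $F$ to edges of $H$ because we deleted the useless edges, so $\phi$ is injective on edges, and since $|E(F)|=|E(H)|$ it is in fact a color-preserving isomorphism. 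Now apply \eqref{eq:incl-excl} with $\Omega$ the set of all subgraphs $F\subseteq G$ isomorphic to $H$ and, for $\gamma\in\Gamma$, with $\mathcal A_\gamma\subseteq\Omega$ the subgraphs avoiding color $\gamma$; since $|V(F)|=k=|\Gamma|$, the set $\Omega\setminus\bigcup_\gamma\mathcal A_\gamma$ is exactly the family just described. Each intersection $\mathcal A_{\gamma_1}\cap\dots\cap\mathcal A_{\gamma_t}$ has size $\#\Sub\bigl(H\to G[V_{\Gamma\setminus\{\gamma_1,\dots,\gamma_t\}}(G)]\bigr)$, which is a single oracle call to $\pSub(\H)$ on an induced subgraph of $G$ with the \emph{same} pattern $H$. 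Thus $\#\SubPart(H\to G)=\sum_{S\subseteq\Gamma}(-1)^{|\Gamma|-|S|}\#\Sub(H\to G[V_S(G)])$, i.e.\ $2^k$ oracle queries all with parameter $k$, so $\pRestr(\H)\leqLin\pSub(\H)$.

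Item~2 is the edge-colored analogue and is slightly cleaner. Given a bipartite $\Gamma$-edge-colored $G$ and $X\subseteq\Gamma$ with $k:=|X|$, first delete every edge of $G$ whose color lies outside $X$ (such an edge cannot appear in an $X$-colorful matching, so $\#\mathcal M_X[G]$ is unchanged). Apply \eqref{eq:incl-excl} with $\Omega$ the set of $k$-matchings of the resulting graph and, for $\gamma\in X$, with $\mathcal A_\gamma\subseteq\Omega$ the $k$-matchings using no $\gamma$-colored edge; a $k$-matching in $\Omega\setminus\bigcup_{\gamma\in X}\mathcal A_\gamma$ uses all $k$ colors among its $k$ edges, hence exactly one edge of each, so it is precisely an $X$-colorful matching---no isomorphism subtlety arises here since edges carry the colors directly. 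Each intersection $\mathcal A_{\gamma_1}\cap\dots\cap\mathcal A_{\gamma_t}$ counts $k$-matchings of the bipartite graph $G_S$ obtained by keeping only edges colored in $S:=X\setminus\{\gamma_1,\dots,\gamma_t\}$, i.e.\ it equals $\#\Sub(M_k\to G_S)$, one oracle call to $\pMatch$ with parameter $k$. Hence $\#\mathcal M_X[G]=\sum_{S\subseteq X}(-1)^{|X|-|S|}\#\Sub(M_k\to G_S)$, again $2^k$ calls with unchanged parameter, giving $\pColMatch\leqLin\pMatch$.

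The one genuinely non-routine point is the reduction of the color-preserving count to an uncolored count in item~1: a priori an uncolored copy of $H$ that happens to meet every color exactly once need not be color-preserving, and it is precisely the edge-cleanup of Remark~\ref{rem:partitioned-colorful} combined with the cardinality identity $|E(F)|=|E(H)|$ that rules this out. Everything else is bookkeeping: the host graphs handed to the oracle stay in the admissible input classes ($\pSub(\H)$ accepts arbitrary hosts, and subgraphs of bipartite graphs are bipartite), the pattern and the parameter never change, and the $2^k$ oracle calls plus polynomial overhead fit comfortably inside the $f(k)\,n^{\mathcal O(1)}$ budget of a linear-parameter Turing reduction.
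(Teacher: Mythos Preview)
Your proof is correct and follows essentially the same approach as the paper: both items use the edge-cleanup of Remark~\ref{rem:partitioned-colorful} (for item~1) followed by inclusion--exclusion over color subsets, with the key observation in item~1 that a colorful uncolored copy of $H$ is automatically color-preserving once the useless edges are removed. Your formulation indexes the inclusion--exclusion by single colors $\gamma$ (with $\mathcal A_\gamma$ the copies avoiding $\gamma$) rather than by subsets $S$ as the paper does, but this is the same computation.
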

\begin{proof}

For the first statement, let $H$ with $H \in \mathcal C$ be $\Gamma$-vertex-colorful with $\Gamma = V(H)$, and let $G$ be $\Gamma$-vertex-colored by $c_G : V(G) \to \Gamma$. Assume oracle access for $\#\Sub(H\to G')$ for graphs $G'\subseteq G$. 
The parameter trivially remains unchanged when making calls to this oracle.

By Remark \ref{rem:partitioned-colorful}, assume that the endpoints of
every $e\in E(G)$ have colors $\gamma,\gamma'\in \Gamma$ with
$\{\gamma,\gamma'\}\in E(H)$. We claim that $F\in\SubPart(H\to G)$ if
and only if (i) $F$ is isomorphic to $H$ when ignoring vertex-colors
of both graphs, and (ii) $F$ is colorful under $c_G$. The forward
direction is trivial. To see that (i) and (ii) imply that $F\in
\SubPart(H\to G)$, observe that if $F$ is colorful under $c_G$, then
$F$ can have at most one edge between any two color classes of $G$. By
the assumption of Remark~\ref{rem:partitioned-colorful}, there are at
most $|E(H)|$ pairs of color classes in $G$ having an edge between
them. Therefore, $|E(F)|=|E(H)|$ is possible only if $F$ contains an
edge between each such color class. Now we obtain a color-preserving
isomorphism from $H$ to $F$ by mapping the vertex of $H$ with color
$\gamma\in \Gamma$ to the unique vertex of $F$ of color $\gamma$.

We use inclusion-exclusion to count subgraphs $F\subseteq G$
satisfying (i) and (ii).  For $S\subseteq\Gamma$, let
$\mathcal{A}_{S}:=\Sub(H\to G[V_{S}])$, that is, the copies of $H$ in
$G$ using only vertices whose color is contained in $S$. Observe that we can compute
$|\mathcal{A}_{S}|$ for $S\subseteq\Gamma$ by an oracle call to
$\#\Sub(H\to G[V_{S}])$.  A subgraph $F\in\Sub(H\to G)$ satisfies (i)
by definition, and (ii) if and only if
$F\in\mathcal{A}_{\Gamma}\setminus\bigcup_{S\subsetneq\Gamma}\mathcal{A}_{S}$.  This
allows to compute $\#\SubPart(H\to G)$ by inclusion-exclusion using
\eqref{eq:incl-excl}. To determine the size of $\Sub_{S}\cap\Sub_{T}$,
which is needed in \eqref{eq:incl-excl}, note that
$\mathcal{A}_{S}\cap\mathcal{A}_{T}=\mathcal{A}_{S\cap T}$.

The second statement is shown in a similar (but simpler) way without using Remark \ref{rem:partitioned-colorful}: 
If $G$ is $\Gamma$-edge-colored and we wish to compute $\#\mathcal M_X[G]$, then define $\mathcal{A}_{S}:=\Sub(H\to G[E_{S}])$ for $S\subseteq X$.
An uncolored matching $M$ of size $|X|$ in $G$ chooses exactly one edge from each color in $X$ if and only if
$M\in\mathcal{A}_{X}\setminus\bigcup_{S\subsetneq X}\mathcal{A}_{S}$. This allows to invoke inclusion-exclusion as before.
\end{proof}

\subsection{Bounded vertex-cover number}
\label{sec:algo-vc}
We conclude the preliminaries with a simple self-contained polynomial-time algorithm for determining $\# \Sub(H\to G)$ in time polynomial in $|V(H)|$ and $|V(G)|$ when the vertex-cover number $\tau(H)$ (or equivalently, the size of the largest matching $\nu(H)$) can be assumed to be constant.
As already stated in the introduction, more efficient algorithms are known \cite{DBLP:journals/siamcomp/WilliamsW13, DBLP:journals/siamdm/KowalukLL13}.
We include the following theorem only for sake of completeness.

\begin{theorem}
\label{thm:algo-vc}Let $H$ be a graph on $k$ vertices with vertex-cover number $\tau=\tau(H)$ 
and let $G$ be a graph on $n$ vertices. Then we can compute $\#\Emb(H\to G)$ and $\#\Sub(H\to G)$ in time
$k^{2^{\mathcal{O}(\tau)}}n^{\tau+\mathcal{O}(1)}$.
\end{theorem}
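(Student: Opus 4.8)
The plan is to fix a minimum vertex cover $C$ of $H$ with $|C| = \tau$, so that $I := V(H)\setminus C$ is an independent set, and to count embeddings by first guessing where $C$ goes and then counting how to extend to $I$. Concretely, I would iterate over all injective maps $\varphi\colon C\to V(G)$; there are at most $n^\tau$ such maps, which accounts for the $n^{\tau+\mathcal O(1)}$ factor. For each fixed $\varphi$, I want the number of ways to extend it to a full embedding $V(H)\to V(G)$. Since $I$ is independent, the only constraints on a vertex $v\in I$ are (i) its images must be adjacent in $G$ to $\varphi(u)$ for each neighbour $u\in C$ of $v$, and (ii) the images of the vertices of $I$ must be pairwise distinct and distinct from $\varphi(C)$. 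Constraint (i) is purely local to each $v$, so it is natural to group the vertices of $I$ by their neighbourhood in $C$: vertices $v,v'\in I$ with $N_H(v)=N_H(v')\subseteq C$ are interchangeable, and there are at most $2^\tau$ such neighbourhood-types. For a type $T\subseteq C$, the set of admissible images is $W_T := \bigcap_{u\in T} N_G(\varphi(u))\setminus\varphi(C)$, computable in polynomial time.

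The remaining task, once $\varphi$ is fixed, is the following counting problem: we have $m\le 2^\tau$ "colors" (the types) with prescribed multiplicities $a_T$ (the number of $I$-vertices of type $T$), each color $T$ has an allowed set $W_T\subseteq V(G)$, and we must count the number of ways to injectively place all $\sum_T a_T$ vertices into $V(G)$ so that a vertex of type $T$ lands in $W_T$. This is exactly a counting problem about a system of distinct representatives with repeated demands. I would solve it by inclusion–exclusion / a Möbius-type computation over the at-most-$2^{2^\tau}$ Boolean combinations of the sets $\{W_T\}$: partition $V(G)$ into the $\le 2^{2^\tau}$ atoms determined by which $W_T$'s a vertex belongs to; if atom $\alpha$ has size $s_\alpha$ and is usable by the set $\mathcal U_\alpha$ of types, then the count is a sum over assignments of how many vertices of each atom are given to each compatible type, which is a product of multinomial coefficients and falling factorials $(s_\alpha)_{\ell_\alpha}$. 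The number of such assignments is bounded by a function of $\tau$ alone (roughly $2^{2^{\mathcal O(\tau)}}$ terms, each a product of $k^{\mathcal O(1)}$-size integers since $a_T\le k$), giving the claimed $k^{2^{\mathcal O(\tau)}}$ factor; summing over all $\varphi$ yields $\#\Emb(H\to G)$ in time $k^{2^{\mathcal O(\tau)}}n^{\tau+\mathcal O(1)}$.

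Finally, $\#\Sub(H\to G)$ is obtained from $\#\Emb(H\to G)$ by dividing by $\#\Emb(H\to H)$, the number of automorphisms of $H$ (Remark \ref{rem:sub-auto-emb}); this number can itself be computed by running the same algorithm with $G:=H$, and then a single division finishes the proof.

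The main obstacle I anticipate is the bookkeeping in the extension-counting step: naively placing the $I$-vertices one at a time would overcount or would not obviously run in the stated time, so the argument must genuinely exploit that there are only $2^\tau$ neighbourhood-types and hence only $2^{2^\tau}$ relevant atoms of $V(G)$, and must phrase the count as a fixed (parameter-dependent) number of multinomial terms. Care is also needed to ensure the injectivity constraint is handled correctly across types sharing the same admissible vertices — this is precisely what the atom decomposition plus falling factorials is designed to handle — and to check that all arithmetic is on integers of magnitude $n^{\mathcal O(k)}$, so that each term costs only $\mathrm{poly}(k,\log n)$.
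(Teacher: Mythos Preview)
Your overall strategy---fix a vertex cover $C$, enumerate the $n^\tau$ injective images $\varphi$ of $C$, group the independent vertices by their neighbourhood type $T\subseteq C$, and then count injective extensions by a combinatorial formula over ``atoms'' of $V(G)$---is exactly the paper's approach. The paper phrases the extension count as enumerating integral flows in a bipartite graph with supply nodes $R^G_Y$ (for $Y\subseteq S=\varphi(C)$) and demand nodes $R^H_X$ (for $X\subseteq C$), which is equivalent to your sum over atom-to-type assignments weighted by products of falling factorials.

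There is, however, a gap in your running-time analysis. You bound the number of atoms by $2^{2^\tau}$ (one for each subset of the $\le 2^\tau$ sets $W_T$) and then claim the number of assignment terms is ``roughly $2^{2^{\mathcal O(\tau)}}$''. But the number of nonnegative integer tuples $(\ell_{\alpha,T})$ with total $\sum_{\alpha,T}\ell_{\alpha,T}=k-\tau$ over $E$ atom--type pairs is $\binom{k-\tau+E-1}{E-1}$; with $E$ of order $2^{2^\tau}\cdot 2^\tau$ this is $k^{2^{2^{\mathcal O(\tau)}}}$, one exponential worse than the stated bound. The missing observation is that a vertex $v\in V(G)\setminus\varphi(C)$ lies in $W_T$ if and only if $T\subseteq\{c\in C:\varphi(c)\in N_G(v)\}$, so its atom is completely determined by this neighbourhood set in $C$. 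Hence there are at most $2^\tau$ \emph{non-empty} atoms, the number of atom--type pairs is $2^{\mathcal O(\tau)}$, and the number of assignment terms is $\binom{k+2^{\mathcal O(\tau)}}{2^{\mathcal O(\tau)}}\le k^{2^{\mathcal O(\tau)}}$ as required. This is precisely the paper's partition of $V(G')\setminus S$ into the sets $R^G_Y$ for $Y\subseteq S$.
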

\begin{proof}
Let $C=\{c_{1},\ldots,c_{\tau}\}$ be a vertex cover of $H$. 
For every $X\subseteq C$, let $R^{H}_X$ be the set of vertices in $V(H)\setminus C$ with $N_{H}(v)=X$. 
Note that $\sum_{X\subseteq C}|R^H_X|=k-\tau$.
For $\mathbf{s}=(s_{1},\ldots,s_{\tau})$ with $s_{i}\in V(G)$ for each $i\in[\tau]$, let 
\[
\mathcal{A}_{\mathbf{s}}=\{f\in\mathsf{Emb}(H\to G)\mid\forall i\in[\tau]:f(c_{i})=s_{i}\},
\]
that is, the set of all subgraph embeddings that map the vertices of
$C$ as prescribed by $\mathbf{s}$ (note that if a vertex $v\in V(G)$
appears more than once in $\mathbf{s}$, then clearly
$\mathcal{A}_\mathbf{s}=\emptyset$).  Since the sets
$\mathcal{A}_{\mathbf{s}}$ partition $\Emb(H\to G)$, it suffices to
compute $\#\mathcal{A}_{\mathbf{s}}$ for each $\mathbf{s}$.  Then
$\#\Emb(H\to G)=\sum_{\mathbf{s}}\#\mathcal{A}_{\mathbf{s}}$, where
the sum is over the $n^{\tau}$ tuples
$\mathbf{s}=(s_{1},\ldots,s_{\tau})$ with $s_{i}\in V(G)$ for
$i\in[\tau]$.

We show how to compute $\#\mathcal{A}_{\mathbf{s}}$ in time $k^{2^{\mathcal{O}(\tau)}}n^{\mathcal O (1)}$, which implies the claimed total runtime. 
Since $V(H)\setminus C$ is an independent set, we can safely delete all edges in $G$ that are not incident with any $s_{i}$ for $i\in[\tau]$. 
The resulting graph $G'$ has the vertex cover $S=\{s_{1},\ldots,s_{\tau}\}$. 
For every $Y\subseteq S$, let $R^G_Y$ be the set of vertices in $V(G')\setminus S$ with $N_{G'}(v)=Y$.

Construct a bipartite directed graph $I$ on $2^{\tau}+2^{\tau}$
vertices, with a left vertex $\ell_{Y}$ for each $Y\subseteq S$
and a right vertex $r_{X}$ for each $X\subseteq C$. 
Identify $S$ with $C$ by $c_{i}\simeq s_{i}$ for $i\in[\tau]$, and 
for $X,Y$ with
$X\subseteq Y$, add the edge $(\ell_{Y},r_{X})$ to $I$. Intuitively, the meaning of this edge is that any vertex of $R^H_X$ can be mapped to any vertex of $R^G_Y$.
Considering
$|R^G_Y|$ as the supply of $\ell_{Y}$ and $|R^H_Y|$ 
as the demand of $r_{X}$, let $\mathcal{F}$ denote the set of
all feasible integral flows $h:E(I)\to\mathbb{N}$ in $I$ that exactly
satisfy the demands. 

As the total demand is $k-\tau\leq k$ and $I$ has $t=2^{\mathcal O (\tau)}$ edges, we have
$|\mathcal{F}|\leq{k+t-1 \choose
t-1} \leq k^{2^{\mathcal O (\tau)}}$ as every feasible integral flow represents a way to choose a multiset of $k-\tau$ elements among $t$ elements.
We can thus enumerate $\mathcal{F}$ by brute force.  If the integral flow has
value $m$ on the edge $(\ell_{Y},r_{X})$, then this corresponds to
mapping $m$ vertices of $R^H_X$ to $R^G_Y$. The number of ways this is
possible is given by the falling factorial expression
$(|R^G_X|)_{m}$. Therefore, it can be verified that
\[
|\mathcal{A}_{\mathbf{s}}|=\sum_{h\in\mathcal{F}}\;\prod_{(\ell_{X},r_{Y})\in E(I)}(|R^G_X|)_{h(\ell_{X},r_{Y})}.
\]
Hence $|\mathcal{A}_{\mathbf{s}}|$ can be computed in time $k^{2^{\mathcal O (\tau)}}n^{\mathcal{O}(1)}$. 
The statement for $\#\Sub(H\to G)$ follows by Remark \ref{rem:sub-auto-emb}.
\end{proof}

\section{Unbounded-treewidth graphs}
\label{sec:unbo-treew-graphs}

In this section, we recall techniques and results underlying the
hardness proofs for finding graphs of large treewidth. In particular,
we prove (Theorem~\ref{thm:hard-treewidth}) that $\pRestr(\mathcal H)$
is $\sharpWone$-hard whenever $\mathcal H$ has unbounded treewidth,
i.e., if for every $b\in\mathbb N $ there is some $H \in \mathcal H$
of treewidth at least $b$. By Lemma~\ref{lem:colorful2uncolored}(1),
the same hardness result follows for $\pSub(\mathcal H)$, proving
Theorem~\ref{th:main} for classes with unbounded treewidth.  As
already stated in the introduction, the proof uses standard techniques
and could in fact be adapted from ideas in
\cite{380867,1206036,DBLP:journals/tcs/DalmauJ04,DBLP:conf/icalp/ChenTW08}.
We nevertheless include it for sake of completeness.

First, we prove in Lemma \ref{lem:reduction-minor} that if $H$ is a
minor of $H^\dagger$ and both graphs are colored in an arbitrary
vertex-colorful way, then $\#\SubPart(H \to G)$ can be computed from
$\#\SubPart(H^\dagger \to G^\dagger)$, where $G^\dagger$ is a graph
constructed from the graphs $G$, $H$ and $H^\dagger$.  Then we invoke
this lemma on grids $H$: By an argument similar to how the
\Wone-hardness of $\pGridTiling$ is proved (see, e.g., \cite[Lemma
1]{DBLP:conf/icalp/Marx12}), counting colored $k\times k$ square grids
is $\sharpWone$-complete. Then the Excluded Grid Theorem of
Robertson and Seymour~\cite{MR854606}, which asserts that graphs
classes of unbounded treewidth contain arbitrarily large grid minors,
implies Theorem~\ref{thm:hard-treewidth}.

As a further application of the machinery developed in this section,
we show that $\pRestr$ is $\sharpWone$-hard on the class of
$3$-regular bipartite graphs, and using a result of
Marx~\cite{marx-toc-treewidth}, we also show that this
problem admits no $f(k)n^{o( k / \log k)}$ algorithm with $k = |V(H)|$,
assuming ETH.  In the next section, this will be used as the source
problem in the reduction for showing \sharpWone-hardness of counting
$k$-matchings.

\subsection{Minors}
If $H$ is a minor of $H^{\dagger}$, then computing $\pRestr(H\to G)$ can
be reduced to computing $\pRestr(H^{\dagger}\to G^{\dagger})$ for a
colored graph $G^{\dagger}$ constructed in an appropriate
way. Essentially, if a branch set $B_i\subseteq V(H^{\dagger})$
corresponds to a vertex of $H$ having color $i$, then each vertex of
$G$ having color $i$ has to be replaced by a copy of
$H^{\dagger}[B_i]$.
\begin{lemma}
\label{lem:reduction-minor}Let $\mathcal{H}$ and $\mathcal{H}^{\dagger}$
be recursively enumerable graph classes such that for every $H\in\mathcal{H}$,
there exists some $H^{\dagger}\in\mathcal{H}^{\dagger}$ with $H\preceq H^{\dagger}$.
Then $\pRestr(\mathcal{H})\leqFpt\pRestr(\mathcal{H}^{\dagger})$.
If additionally $|V(H^{\dagger})|=\mathcal{O}(|V(H)|)$ holds for every $H$,
then $\leqFpt$ can be replaced by $\leqLin$.\end{lemma}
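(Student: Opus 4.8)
The plan is to implement the ``un-contraction'' idea sketched before the lemma. Given an instance of $\pRestr(\mathcal{H})$, i.e.\ a $\Gamma$-vertex-colorful pattern $H$ with $\Gamma=V(H)$ and underlying graph in $\mathcal{H}$, together with a $\Gamma$-vertex-colored host $G$, I want to build a single instance $(H^{\dagger},G^{\dagger})$ of $\pRestr(\mathcal{H}^{\dagger})$ with $\#\SubPart(H\to G)=\#\SubPart(H^{\dagger}\to G^{\dagger})$ and then return the oracle's answer unchanged; this makes the reduction many-one, hence $\leqFpt$. First I would use recursive enumerability of $\mathcal{H}^{\dagger}$ together with decidability of the minor relation to compute some $H^{\dagger}\in\mathcal{H}^{\dagger}$ with $H\preceq H^{\dagger}$, along with an explicit minor model: pairwise disjoint branch sets $B_v\subseteq V(H^{\dagger})$ for $v\in V(H)$ with $H^{\dagger}[B_v]$ connected and at least one $H^{\dagger}$-edge between $B_u$ and $B_v$ whenever $uv\in E(H)$; write $B_0:=V(H^{\dagger})\setminus\bigcup_vB_v$ for the remaining ``free'' vertices. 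Since there are only finitely many graphs on $|V(H)|$ vertices, the whole search runs in time bounded by a computable function of the parameter $k=|V(H)|$. Finally I color $H^{\dagger}$ colorfully, so its color set is $V(H^{\dagger})$.

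Next I construct $G^{\dagger}$, colored over $V(H^{\dagger})$, as follows. For each $v\in V(H)$ and each $v$-colored vertex $x$ of $G$, put a vertex-disjoint copy $B_v^{x}$ of the colored graph $H^{\dagger}[B_v]$ (contributing one vertex of each color $a\in B_v$), and put a single copy of $H^{\dagger}[B_0]$. The edges of $G^{\dagger}$ are: (i) inside each $B_v^{x}$ and inside the $B_0$-copy, exactly the edges of $H^{\dagger}[B_v]$ resp.\ $H^{\dagger}[B_0]$; (ii) for $a\in B_u$, $b\in B_v$ with $u\neq v$ and $ab\in E(H^{\dagger})$: if $uv\in E(H)$, join the $a$-vertex of $B_u^{x}$ to the $b$-vertex of $B_v^{y}$ precisely when $xy\in E(G)$, while if $uv\notin E(H)$ (a ``deleted'' edge of the minor operation) join them for \emph{all} pairs $x,y$; (iii) for $a\in B_v$, $w\in B_0$ with $aw\in E(H^{\dagger})$, join the $w$-vertex of the $B_0$-copy to the $a$-vertex of $B_v^{x}$ for all $x$. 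Crucially, no edges are ever added between two distinct copies $B_v^{x}$ and $B_v^{x'}$ of the same branch set. The graph $G^{\dagger}$ has at most $|V(H^{\dagger})|\cdot|V(G)|$ vertices, so it is built in FPT time.

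I would then prove that the natural map is a bijection $\SubPart(H^{\dagger}\to G^{\dagger})\to\SubPart(H\to G)$. Given a color-preserving copy $F$ of $H^{\dagger}$ in $G^{\dagger}$: because $H^{\dagger}$ is colorful, $F$ uses one vertex of each color; because $H^{\dagger}[B_v]$ is connected and $G^{\dagger}$ has no edges between distinct $B_v^{x},B_v^{x'}$, the colors of $B_v$ are all realized inside one copy $B_v^{x_v}$, and the colors of $B_0$ inside the unique $B_0$-copy. Thus $F$ is determined by the tuple $(x_v)_{v\in V(H)}$ with $x_v$ a $v$-colored vertex of $G$; these are automatically distinct since $G$ is $\Gamma$-colored. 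By construction $F$ realizes the $H^{\dagger}$-edges between $B_u$ and $B_v$ iff $x_ux_v\in E(G)$ when $uv\in E(H)$, and unconditionally otherwise, and it always realizes the $B_0$-edges; hence the tuple $(x_v)_v$ spans a color-preserving copy of $H$ in $G$, and conversely every $F'\in\SubPart(H\to G)$ lifts back to exactly one such $F$. Therefore a single query $\#\SubPart(H^{\dagger}\to G^{\dagger})$ returns $\#\SubPart(H\to G)$; the query pattern has $|V(H^{\dagger})|$ vertices, a computable function of $k$, so this is $\leqFpt$, and $\mathcal{O}(k)$ under the extra hypothesis, giving $\leqLin$.

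The main obstacle, and the place where care is needed, is the bookkeeping for the ``deleted'' inter-branch edges of $H^{\dagger}$, namely edges $ab\in E(H^{\dagger})$ with $a\in B_u$, $b\in B_v$ but $uv\notin E(H)$: such edges must appear in every copy of $H^{\dagger}$ inside $G^{\dagger}$ yet must impose no constraint relating back to $G$, which is why item (ii) joins them by a complete bipartite graph rather than following $E(G)$. The second delicate point is the rigidity statement that a copy of $H^{\dagger}$ cannot spread across several copies of one branch set; this is exactly where connectivity of $H^{\dagger}[B_v]$ and colorfulness of $H^{\dagger}$ are used, and it is what guarantees that the map is a genuine bijection rather than merely a surjection onto $\SubPart(H\to G)$.
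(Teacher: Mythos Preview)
Your proof is correct and follows essentially the same approach as the paper: find $H^{\dagger}$ and a minor model by enumeration, blow up each vertex of $G$ into a copy of its branch set, add a single copy of $B_0$, wire inter-branch-set edges conditionally on $E(G)$ for $uv\in E(H)$ and unconditionally otherwise, and use connectivity of $H^{\dagger}[B_v]$ together with the absence of edges between distinct copies $B_v^{x},B_v^{x'}$ to get the bijection. The only cosmetic difference is that the paper inserts \emph{all} edges between $L_u$ and $L_v$ (and from the $B_0$-copy to everything) rather than only those corresponding to edges of $H^{\dagger}$; this is harmless by Remark~\ref{rem:partitioned-colorful}, and your tighter construction works just as well.
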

\begin{proof}
Let $H,G$ be $[k]$-colored with colorful $H$, and let $H^{\dagger}\in\mathcal{H}'$
with $H\preceq H^{\dagger}$ and $k^{\dagger}=|V(H^{\dagger})|$
be $[k^{\dagger}]$-colorful. Given $H$, we find $H^\dagger$ by enumerating the graphs in $H^\dagger$ and testing by brute force whether $H\preceq H^\dagger$. The actual colorings of $H,H^{\dagger}$
are irrelevant as long as they are colorful. To compute $\#\SubPart(H\to G)$,
assume that we can compute $\#\SubPart(H^{\dagger}\to G^{\dagger})$
for a specific $[k^{\dagger}]$-colored graph $G^{\dagger}$ that
we construct in the following. Note that this reduction increases the parameter from $k$ to $k^\dagger$.

Since $H\preceq H^{\dagger}$, the set $V(H^{\dagger})$ admits a
partition into branch sets $B_{0},B_{1},\ldots,B_{k}$ such that the
following holds: For $i\in[1,k]$, the graph $H^{\dagger}[B_{i}]$ is
connected, and deleting $B_{0}$ and contracting each $B_{i}$ for
$i\in[k]$ to a single vertex (which we denote by $i$) yields some
supergraph of $H$ on the vertex set $[k]$. Recall that $V_i(G)$
denotes the set of vertices in $G$ with color $i$. Let $G^{\dagger}$
denote the $[k^{\dagger}]$-colored graph obtained from $G$ as follows:
\begin{enumerate}
\item For $i\in[1,k]$ and $v\in V_{i}(G)$: Replace $v$ by a copy
of $H^{\dagger}[B_{i}]$, denoted by $L_{v}$. Note that $H^{\dagger}[B_{i}]$
is a vertex-colorful graph with colors from some subset of $[k^{\dagger}]$.
\item For $\{i,j\}\in E(H)$ and $u\in V_{i}(G), v\in V_{j}(G)$ with $\{u,v\}\in E(G)$: Insert all edges
between $L_{u}$ and $L_{v}$ in $G^{\dagger}$.
\item For $\{i,j\}\notin E(H)$ and $u\in V_{i}(G), v\in V_{j}(G)$:
Insert all edges between $L_{u}$ and $L_{v}$ in $G^{\dagger}$.
\item Add a copy of $H^{\dagger}[B_{0}]$ to $G^{\dagger}$, connect it
to all other vertices of $G^{\dagger}$.
\end{enumerate}
The effect of this transformation is shown in Figure~\ref{fig:minor-model}.
\begin{figure}[t]
\begin{centering}
\includegraphics[width=0.86\textwidth]{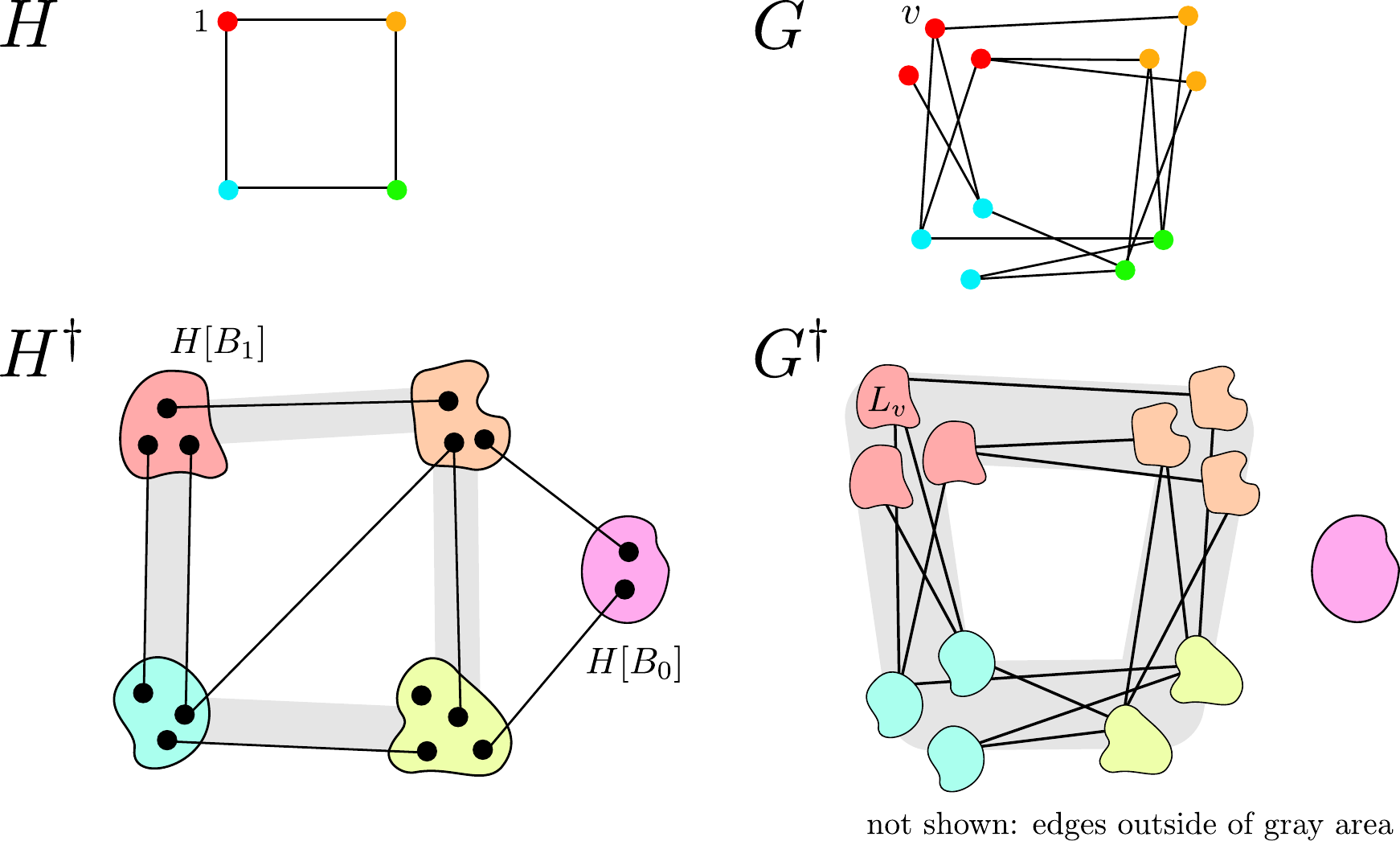}
\par\end{centering}

\caption{\label{fig:minor-model}The $[4]$-vertex-colorful graphs $H\preceq H^{\dagger}$.
From the $[4]$-vertex-colored graph $G$, we construct the graph
$G^{\dagger}$. To avoid clutter, the edges added in Step 3 of constructing
$G^{\dagger}$ are\emph{ not }shown in the figure: Additionally to
shown edges, $G^{\dagger}$ has all possible edges not contained in
the gray area.}
\end{figure}
 We show that $\SubPart(H\to G)\simeq\SubPart(H^{\dagger}\to G^{\dagger})$:
Every $F$ from the left set can be extended to a unique $F^{\dagger}$
from the right side by the graph transformation above. Conversely,
every $F^{\dagger}$ from the right set corresponds to exactly one
$F$ in the left set: Since $F^{\dagger}$ is color-isomorphic to
$H^{\dagger}$, the $B_{i}$-colored vertices of $F^{\dagger}$ induce
a graph $F_{i}^{\dagger}\simeq H[B_{i}]$, for every $i\in[0,k]$.
Since $H^{\dagger}[B_{i}]$ for $i\in[1,k]$ is connected, but $L_{u}$
and $L_{v}$ are vertex-disjoint for different $u,v\in V_{i}(G)$,
there is some $v(i)\in V_{i}(G)$ such that $F_{i}^{\dagger}=L_{v(i)}$.
Applying this to all $i\in[1,k]$ yields a colorful copy of $H$ on
vertices $v(1),\ldots,v(k)\in V(G)$.
\end{proof}

\subsection{Grids}
The $k\times k$ grid $H_{k\times k}$ is a graph with vertex set
$[k]\times [k]$ where two vertices $(i,j),(i',j')\in [k]\times [k]$
are adjacent if and only if $|i-i'|+|j-j'|=1$. We denote by
$\H_\textup{grid}$ the class containing $H_{k\times k}$ for every
$k\ge 1$. We show that $\pRestr(\H_\textup{grid})$ is \sharpWone-hard
by a proof that is essentially the same as how the \Wone-hardness of
$\pGridTiling$ can be proved (see, e.g., \cite{DBLP:conf/icalp/Marx12}).
\begin{theorem}\label{thm:partgrid}
$\pRestr(\mathcal{H}_{\textup{grid}})$ is \sharpWone-hard.
\end{theorem}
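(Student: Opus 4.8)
The plan is to transcribe, in a counting-faithful way, the standard proof that $\pGridTiling$ is $\Wone$-hard, producing a parameterized Turing reduction $\pClique \leqT \pRestr(\mathcal{H}_{\textup{grid}})$. Given a simple loopless graph $G$ on vertex set $[n]$ together with the parameter $k$, I will build in polynomial time a vertex-colored host graph $G'$, colored by $\Gamma := [k]\times[k]$, so that the colorful $k\times k$ grid $H_{k\times k}$ (whose vertex at position $(i,j)$ is assigned color $(i,j)$) satisfies $\#\SubPart(H_{k\times k}\to G') = k!\cdot(\text{number of }k\text{-cliques in }G)$. Since the underlying uncolored pattern $H_{k\times k}$ lies in $\mathcal{H}_{\textup{grid}}$, a single oracle call followed by division by $k!$ then recovers the number of $k$-cliques of $G$; the instance handed to the oracle has parameter $|V(H_{k\times k})| = k^2$, a computable function of $k$, so this is a legitimate $\leqT$-reduction (indeed essentially a many-one one).

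The construction of $G'$ is the usual grid-tiling gadget. For a diagonal cell $(i,i)$ the color class $V_{(i,i)}(G')$ contains one vertex $x^{i,i}_a$ for every $a\in[n]$, thought of as the choice ``$a$'' for both row $i$ and column $i$; for an off-diagonal cell $(i,j)$ with $i\neq j$ it contains one vertex $x^{i,j}_{a,b}$ for every edge $\{a,b\}\in E(G)$, thought of as ``row-choice $a$, column-choice $b$''. Edges of $G'$ are inserted only between color classes that are adjacent in the grid $H_{k\times k}$ (by Remark~\ref{rem:partitioned-colorful} any other edge would be useless); along a horizontal grid edge, from cell $(i,j)$ to cell $(i,j{+}1)$, we join two vertices iff their row-choices (first coordinates) coincide, and along a vertical grid edge, from $(i,j)$ to $(i{+}1,j)$, we join two vertices iff their column-choices (second coordinates) coincide, where a diagonal vertex $x^{i,i}_a$ is treated as having both coordinates equal to $a$. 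The resulting $G'$ has $\mathcal{O}(k^2 n^2)$ vertices and $\mathcal{O}(k^2 n^4)$ edges and is clearly computable in polynomial time.

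For correctness, observe that since $H_{k\times k}$ is colorful, any $F\in\SubPart(H_{k\times k}\to G')$ contains exactly one vertex of each color in $\Gamma$, and the color-respecting isomorphism forces $F$ to carry an edge between its color-$(i,j)$ vertex and its color-$(i',j')$ vertex exactly when $(i,j)$ and $(i',j')$ are adjacent in the grid; because $G'$ has no other edges, this says precisely that the one vertex chosen in each cell respects all horizontal row-agreements and all vertical column-agreements. Reading off the values $a_i$ from the chosen diagonal vertices $x^{i,i}_{a_i}$, horizontal consistency forces every chosen vertex in row $i$ to have first coordinate $a_i$ and vertical consistency forces every chosen vertex in column $j$ to have second coordinate $a_j$; hence the vertex chosen in an off-diagonal cell $(i,j)$ is necessarily $x^{i,j}_{a_i,a_j}$, which exists in $G'$ iff $\{a_i,a_j\}\in E(G)$. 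Thus the copies of $H_{k\times k}$ in $G'$ are in bijection with tuples $(a_1,\dots,a_k)$ of pairwise-adjacent vertices of $G$, which are automatically distinct since $G$ is loopless; such tuples number exactly $k!$ times the number of $k$-cliques. Conversely any ordered $k$-clique evidently yields one such $F$, so the bijection, and with it the claimed identity, is complete.

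I do not anticipate a real obstacle: the entire argument is a routine counting-preserving rendering of a textbook $\Wone$-hardness proof, and the only point that needs a little care is the bookkeeping in the correctness paragraph — checking that colorfulness together with the absence of edges between non-grid-adjacent color classes rules out spurious copies, and that the horizontal/vertical edge gadgets encode exactly the intended coordinate agreements.
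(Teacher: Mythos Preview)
Your proposal is correct and follows essentially the same route as the paper's own proof: both reduce $\pClique$ to $\pRestr(\mathcal{H}_{\textup{grid}})$ via the standard grid-tiling construction, with diagonal cells encoding vertex choices and off-diagonal cells encoding edges, and with horizontal/vertical edges in $G'$ enforcing coordinate agreement. One small wording issue: when you write ``one vertex $x^{i,j}_{a,b}$ for every edge $\{a,b\}\in E(G)$'' you really mean one vertex for every \emph{ordered} pair $(a,b)$ with $\{a,b\}\in E(G)$, as your correctness argument requires both $x^{i,j}_{a_i,a_j}$ and $x^{j,i}_{a_j,a_i}$ to exist --- this is evidently your intent from the ``row-choice $a$, column-choice $b$'' gloss, but it should be stated explicitly. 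Incidentally, your bookkeeping with the factor $k!$ (ordered versus unordered cliques) is more careful than the paper's own statement of the bijection.
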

\begin{proof}
  We reduce $\pClique$ to $\pRestr(\mathcal{H}_{\textup{grid}})$. Let
  $G$ be a graph where the number of $k$-cliques has to be
  computed. We construct a colored graph $G'$ such that there is a
  one-to-one correspondence between the $k$-cliques of $G$ and the
  colored $k\times k$ grid subgraphs of $G'$.

Let $H_{k\times k}$ be the $k\times k$ grid where the vertex in row $i$ and column $j$ (denote it by $h_{i,j}$) has color $(i,j)$. 
The graph $G'$ is constructed as follows.
\begin{itemize}
\item For every $i\in [k]$ and every $x\in V(G)$, we introduce a vertex $v_{i,i,x,x}$ of color $(i,i)$.
\item For every $i,j\in [k]$, $i\neq j$ and every $x,y\in V(G)$ such that $x\neq y$ and $\{x,y\}\in E(G)$, we introduce a vertex $v_{i,j,x,y}$ of color $(i,j)$.
\item For every $i\in [k]$, $j\in [k-1]$, and $x,y,y'\in V(G)$, if $v_{i,j,x,y}$ and $v_{i,j+1,x,y'}$ both exist in $G'$, then we make them adjacent.
\item For every $i\in [k-1]$, $j\in [k]$, and $x,x',y\in V(G)$, if $v_{i,j,x,y}$ and $v_{i+1,j,x',y}$ both exist in $G'$, then we make them adjacent.
\end{itemize}
This concludes the description of the reduction. We claim that the number of $k$-cliques in $G$ is exactly $\pRestr(H_{k\times k}\to G')$.

Let $a_1$, $\dots$, $a_k$ be the vertices of a $k$-clique in $G$. Then
we can find an $H_{k\times k}$-subgraph in $G'$ by mapping vertex $h_{i,j}$ of the
grid to vertex $v_{i,j,a_i,a_j}$. It can be verified that these
vertices exist and if two vertices in $H_{k\times k}$ are adjacent, then their
images are adjacent in $G'$. For example, $h_{i,j}$ and $h_{i,j+1}$
are adjacent in $H_{k\times k}$, and the corresponding vertices $v_{i,j,a_i,a_j}$
and $v_{i,j+1,a_i,a_{j+1}}$ exist and are adjacent by definition.
Moreover, different $k$-cliques give rise to different $H_{k\times k}$-subgraphs,
thus $\pRestr(H_{k\times k}\to G')$ is at least the number of $k$-cliques in $G$.

Consider now a $H_{k\times k}$-subgraph of $G'$. As $H_{k\times k}$ has exactly one vertex of each
color $(i,j)$, the subgraph contains exactly one vertex of the form
$v_{i,j,x,y}$. As the vertices with color $(i,j)$ and $(i,j+1)$ are
adjacent, they have to be of the form $v_{i,j,x,y}$ and
$v_{i,j,x,y'}$, because only such vertices are adjacent. It follows that
for every $i\in [k]$, there is an $a_i\in V(G)$ such that the vertex
with color $(i,j)$ is of the form $v_{i,j,a_i,y}$. Similarly, by the
requirement that vertices with colors $(i,j)$ and $(i+1,j)$ have to be
adjacent, we get that for every $j\in [k]$, there is a $b_j\in V(G)$
such that the vertex with color $(i,j)$ is of the form
$v_{i,j,x,b_j}$. Therefore, the vertex with color $(i,j)$ is
$v_{i,j,a_i,b_j}$. In particular, for $i\in [k]$, the vertex with
color $(i,i)$ is $v_{i,i,a_i,b_i}$, which only exists if $a_i=b_i$. We
claim now that $a_1$, $\dots$, $a_k$ form a clique. Indeed, to see
that $a_i$ and $a_j$ are distinct and adjacent, observe that the
vertex with color $(i,j)$ is $v_{i,j,a_i,b_j}=v_{i,j,a_i,a_j}$, and
the fact that it exists implies that $a_i$ and $a_j$ are distinct and
adjacent in $G$. As it is also true that distinct subgraphs give rise
to distinct $k$-cliques in $G$ (as changing any vertex $v_{i,j,a_i,b_i}$ would change $a_i$ or $b_j=a_j$), we get that the number of $k$-cliques is at
least $\pRestr(H_{k\times k}\to G')$.  Putting together the two inequalities, we
get the required equality.
\end{proof}

The Excluded Grid Theorem, first proved by Robertson and Seymour \cite{MR854606}, shows that every graph with sufficiently large treewidth contains the grid $H_{k\times k}$ as a minor.

\begin{theorem}\label{thm:excluded-grid}
  For every $k\ge 1$, there is an integer $b(k)\ge 1$ such that every
  graph of treewidth at least $b(k)$ contains the $k\times k$ square
  grid $H_{k\times k}$ as a minor.
\end{theorem}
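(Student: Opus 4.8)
The plan is simply to invoke the Excluded Grid Theorem of Robertson and Seymour~\cite{MR854606}: for our purposes we only need the \emph{qualitative} statement that \emph{some} function $b(k)$ exists, with no concern for its growth rate, so none of the later work sharpening the bound (the polynomial bounds of Chekuri and Chuzhoy, or of Robertson, Seymour and Thomas for planar host graphs) is needed. If one wanted a self-contained argument, here is the shape it would take.

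First I would trade ``$\tw(G)\ge b$'' for a more hands-on obstruction. By the min--max duality for treewidth, a graph of treewidth at least $b$ contains a bramble of order more than $b$ (equivalently a tangle of order $b$, or a highly connected \emph{well-linked} set), and from such an object one can extract, for a suitable $b=b(k)$ polynomial in $k$, a set $Z\subseteq V(G)$ of size polynomial in $k$ that is well-linked, meaning that any two disjoint equal-size subsets of $Z$ can be joined by that many vertex-disjoint paths. This step is a pure LP-duality / Menger-type fact and costs only a polynomial blow-up in $b(k)$.

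The heart of the matter is then to assemble a $k\times k$ grid minor from a large well-linked set $Z$. The standard route is to build two interleaved families of pairwise disjoint connected subgraphs: a nested sequence of ``concentric rings'', obtained by peeling off small separations around part of $Z$, which play the role of the rows of the grid, and a family of disjoint ``rays'', routed so as to cross all the rings, which play the role of the columns; contracting each ring and each ray to a single branch vertex and checking that enough crossings survive yields the grid minor. Making the rings and the rays genuinely interact --- that is, guaranteeing that, after the unavoidable rerouting around the small separators, enough of the $k^2$ row--column incidences remain --- is exactly where the Robertson--Seymour proof, and every subsequent simplification, spends essentially all of its effort, so this routing/grid-assembly step is the main obstacle. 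Since we only claim the existence of $b(k)$, we may be arbitrarily wasteful with the bounds throughout, which considerably simplifies the bookkeeping.

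Finally, note that only the implication ``large treewidth $\Rightarrow$ large grid minor'' is used in what follows: it is combined with Theorem~\ref{thm:partgrid} and Lemma~\ref{lem:reduction-minor}, applied to grids, to conclude \sharpWone-hardness of $\pRestr(\H)$ for classes $\H$ of unbounded treewidth. The converse (a grid has treewidth $\Theta(k)$), while easy, is irrelevant here.
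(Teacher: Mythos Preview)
Your proposal is correct and matches the paper's treatment: the paper does not prove this theorem at all but simply cites it as the Excluded Grid Theorem of Robertson and Seymour~\cite{MR854606}, noting (as you do) that the growth rate of $b(k)$ is immaterial for the application. Your additional sketch of how a self-contained proof would proceed is extra content the paper does not provide.
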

In the original proof of Robertson and Seymour \cite{MR854606}, as well as in the
improved proof by Diestel et
al.~\cite{DBLP:journals/jct/DiestelJGT99}, the function $b(k)$ is
exponential in $k$. Very recently, Chekuri and Chuzhoy
\cite{DBLP:journals/corr/abs-1305-6577} obtained a proof where $b(k)$ is
polynomial in $k$. However, for our application, the growth rate of
the function $b(k)$ is immaterial.

The hardness result for classes with unbounded treewidth can be obtained by a simple combination of Theorems~\ref{thm:partgrid} and \ref{thm:excluded-grid}.
\begin{theorem}
\label{thm:hard-treewidth}The problems $\pRestr(\mathcal{H})$ and
$\pSub(\mathcal{H})$ are $\sharpWone$-complete whenever $\mathcal{H}$
is recursively enumerable and has unbounded treewidth. \end{theorem}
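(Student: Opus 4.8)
The plan is to reduce from $\pClique$ by going through the colored-grid counting problem, combining three ingredients that are already in place: Theorem~\ref{thm:partgrid} (hardness of $\pRestr(\H_\textup{grid})$), the Excluded Grid Theorem (Theorem~\ref{thm:excluded-grid}), and the minor-reduction Lemma~\ref{lem:reduction-minor}. The point is exactly the informal picture from the proof overview: counting colored cliques embeds into counting colored grids, and counting colored grids embeds into $\pRestr(\mathcal H)$ whenever $\mathcal H$ has large grid minors.

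Concretely, I would instantiate Lemma~\ref{lem:reduction-minor} with (in the notation of that lemma) the class $\H_\textup{grid}$ in the role of $\mathcal H$ and our class $\mathcal H$ in the role of $\mathcal H^{\dagger}$. To apply it, I first verify its hypothesis, namely that every grid $H_{k\times k}$ is a minor of some member of $\mathcal H$: fix $k$, let $b(k)$ be the bound from Theorem~\ref{thm:excluded-grid}; since $\mathcal H$ has unbounded treewidth there is some $H\in\mathcal H$ with $\tw(H)\ge b(k)$, and then $H_{k\times k}\preceq H$ by Theorem~\ref{thm:excluded-grid}. Recursive enumerability of $\mathcal H$ is what lets the reduction actually find such an $H$, together with a minor model, by brute-force search over an enumeration of $\mathcal H$ and over candidate branch sets; this is precisely what Lemma~\ref{lem:reduction-minor} requires. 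Hence $\pRestr(\H_\textup{grid})\leqFpt\pRestr(\mathcal H)$, and combining this with Theorem~\ref{thm:partgrid} shows $\pRestr(\mathcal H)$ is $\sharpWone$-hard. Finally, Lemma~\ref{lem:colorful2uncolored}(1) gives $\pRestr(\mathcal H)\leqLin\pSub(\mathcal H)$, so $\pSub(\mathcal H)$ is $\sharpWone$-hard as well. For $\sharpWone$-\emph{completeness} one additionally invokes the standard fact that counting subgraph embeddings of a $k$-vertex pattern is in $\sharpWone$ (via an FPT-reduction to $\pClique$, equivalently to counting homomorphisms from bounded-size structures), which by Remark~\ref{rem:sub-auto-emb} and Lemma~\ref{lem:colorful2uncolored}(1) covers $\pSub(\mathcal H)$ and $\pRestr(\mathcal H)$; I would just cite the relevant membership result rather than reprove it.

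The main point to be careful about — more a subtlety than a real obstacle — is that Lemma~\ref{lem:reduction-minor} yields only $\leqFpt$, not $\leqLin$: the Excluded Grid Theorem gives no bound on $|V(H)|$ in terms of $k$, so the parameter can grow uncontrollably when passing from $H_{k\times k}$ to its host graph $H\in\mathcal H$. This is harmless for $\sharpWone$-hardness, but it is why this route does not yield a tight ETH-based lower bound on the exponent of $n$ in the unbounded-treewidth case (such bounds are established separately for concrete subclasses, e.g.\ $3$-regular bipartite graphs). I would also note in passing that every step — enumerating $\mathcal H$ until a suitable $H$ is found, testing $H_{k\times k}\preceq H$, and constructing the colored graph $G^{\dagger}$ — runs in FPT time, so the composition of the three reductions is a bona fide parameterized Turing reduction, completing the proof.
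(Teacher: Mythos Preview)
Your proposal is correct and follows essentially the same route as the paper: apply the Excluded Grid Theorem to guarantee that every $H_{k\times k}$ is a minor of some $H\in\mathcal H$, invoke Lemma~\ref{lem:reduction-minor} to get $\pRestr(\mathcal H_{\textup{grid}})\leqFpt\pRestr(\mathcal H)$, combine with Theorem~\ref{thm:partgrid}, and then transfer hardness to $\pSub(\mathcal H)$ via Lemma~\ref{lem:colorful2uncolored}(1). Your additional remarks on $\sharpWone$-membership and on why only $\leqFpt$ (not $\leqLin$) is obtained are accurate and go slightly beyond what the paper's short proof spells out.
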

\begin{proof}
  Since $\mathcal{H}$ has unbounded treewidth,
  Theorem~\ref{thm:excluded-grid} shows that for every $k\ge 1$, there
  exists some $H\in\mathcal{H}$ with $H_{k\times k}\preceq
  H$. Therefore, by Lemma~\ref{lem:reduction-minor},
  $\pRestr(\mathcal{H}_\textup{grid})$, which is \sharpWone-hard by
  Theorem~\ref{thm:partgrid}, can be reduced to
  $\pRestr(\mathcal{H})$. This proves the claim that $\pRestr(\mathcal{H})$ is \sharpWone-hard.
  Then the claim for $\pSub(\mathcal{H})$ follows by
  Lemma~\ref{lem:colorful2uncolored}.
\end{proof}
It was shown by Arvind and Raman \cite[Lemma 1]{DBLP:conf/isaac/ArvindR02} that the number $\#\SubPart(H\to G)$ can be computed in time  $\mathcal O (c^{b^3}k+n^{b+2}2^{b^2/2})$, where $b$ is the treewidth of $H$. Therefore, $\pRestr(\mathcal H)$ is polynomial-time solvable if $\H$ has bounded treewidth. 
Together with our $\sharpWone$-hardness result, this yields a dichotomy for $\pRestr(\mathcal H)$. Note that this algorithm for the bounded-treewidth cases of $\pRestr(\H)$ does not settle the same question for $\pSub$: the reduction in Lemma~\ref{lem:colorful2uncolored}(1) goes the opposite direction. In fact, there are bounded-treewidth classes $\H$, most notably, matchings and paths, for which $\pRestr(\H)$ is polynomial-time solvable, but $\pSub(\H)$ is \sharpWone-hard. It is precisely the bounded-treewidth classes where the complexity of the two problems can deviate.

\subsection{Bipartite 3-regular graphs}

In Section~\ref{sec:bipart-edge-colorf}, the \sharpWone-hardness proof
for bipartite $k$-matching is by a reduction from $\pRestr$. It is
essential for the hardness proof that the graph $H$ appearing in the
$\pRestr$ instance is bipartite and 3-regular. Therefore, we establish here the \sharpWone-hardness of $\pRestr(\H_{\textup{bicub}})$, where $\H_{\textup{bicub}}$ is the class of all bipartite cubic graphs.
\begin{lemma}\label{lem:minor-bicubic}
If $H$ is a graph on $n$ vertices, none of which are isolated, then there exists a bipartite $3$-regular graph $H^{\dagger}$ with $|V(H^{\dagger})|=O(n)$ such that $H$ is a minor of $H^{\dagger}$.
\end{lemma}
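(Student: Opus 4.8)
The plan is to build $H^{\dagger}$ from $H$ in two stages: first reduce every vertex degree to at most $3$ using standard ``vertex-splitting'' gadgets, and then pad low-degree vertices up to degree exactly $3$ while simultaneously enforcing bipartiteness. For the first stage, I would replace each vertex $v\in V(H)$ of degree $d=d_H(v)$ by a gadget $G_v$ whose job is to ``distribute'' the $d$ incident edges among $d$ distinct attachment points of degree at most $3$; the classical choice is a cycle or a binary-tree-like structure on $\Theta(d)$ vertices, so that the total vertex count is $\sum_v \Theta(d_H(v)) = \Theta(|E(H)|) = O(n^2)$ in the worst case --- but since we only need $|V(H^{\dagger})| = O(n)$, I would instead use a path/cycle gadget $G_v$ on exactly $d$ vertices where consecutive vertices are joined in a path and each vertex hosts one of the original edges at $v$; then the gadget vertices have internal degree $\le 2$ plus one external edge, giving degree $\le 3$, and $\sum_v d_H(v) = 2|E(H)| $. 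Hmm --- that is still $O(n^2)$ in general, so to honor the $O(n)$ bound I need the input guarantee that $H$ comes from the class we actually care about (bounded-degree, or more precisely the reduction will only be applied where $|E(H)| = O(n)$); alternatively, state the lemma for the case at hand. I would phrase the gadget so the subdivision/minor relation $H \preceq H^{\dagger}$ is transparent: contracting each $G_v$ back to a single vertex recovers $H$, so the branch sets are exactly the $V(G_v)$.

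Next, for the bipartiteness-and-regularity stage: after stage one, $H^{\dagger}$ has maximum degree $3$ but may be non-bipartite and have vertices of degree $1$ or $2$. To fix parity, I would take two disjoint copies $H_1, H_2$ of the stage-one graph and, for every edge $uv$, delete the copies $u_1v_1$ and $u_2v_2$ and instead add the crossing edges $u_1v_2$ and $u_2v_1$; this is the standard ``bipartite double cover'' construction, which yields a bipartite graph with the same degree sequence, doubles the vertex count (still $O(n)$), and keeps $H$ as a minor (project onto either colour class, or contract matched pairs). Then to make it $3$-regular, for each vertex of degree $3-i$ with $i\in\{1,2\}$ I would attach a small bipartite gadget supplying $i$ extra edges without creating odd structure: e.g. connect two deficient vertices on opposite sides by an edge, or attach a $K_{3,3}$-minus-a-perfect-matching-style gadget, or a subdivided edge, being careful that every added gadget is itself bipartite and contributes only $O(1)$ vertices per deficient vertex, and that the bipartition classes stay consistent. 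A clean way: collect all degree-deficient vertices, note by handshake that the total deficiency is even, and pair them up via paths of length $2$ (through a fresh midpoint) or direct edges, respecting the $2$-colouring; adding the midpoints preserves bipartiteness automatically.

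The main obstacle I expect is the bookkeeping in the last step: pairing up degree-deficient vertices so that (a) the pairing respects the existing bipartition (an edge may only join the two sides), (b) no parallel edges or loops are created, and (c) the number of added vertices stays $O(n)$. Guaranteeing (a) while the deficiencies on the two sides need not match up in the naive pairing forces me to route some corrections through fresh vertices (turning a would-be same-side edge into a path of length $2$), and I need to check these fresh vertices themselves end up with degree $3$ --- this may require a second round of padding, so I would set up the gadget to be ``self-completing'' (e.g. each correction gadget is a copy of $K_{3,3}$ with a perfect matching removed, into which deficient vertices are plugged). Once the construction is fixed, verifying $H \preceq H^{\dagger}$ is routine: exhibit the minor model whose branch sets are the stage-one gadgets $V(G_v)$ (union the corresponding double-cover pair), and observe that all padding/correction gadgets lie outside these branch sets and can simply be deleted, recovering $H$ after contracting each branch set.
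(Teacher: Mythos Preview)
Your high-level plan (split high-degree vertices, make bipartite, pad to $3$-regular) is sound, and you correctly flag that the construction really yields $O(|E(H)|)$ vertices rather than $O(n)$ in general; the paper's own proof also gives $O(|E(H)|)$, and the lemma is only ever applied to bounded-degree graphs where this equals $O(n)$.

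However, your bipartiteness step has a genuine gap. In the bipartite double cover $H' \times K_2$, the two copies $(v,0)$ and $(v,1)$ of a vertex are \emph{not} adjacent (there are no loops), so ``contract matched pairs'' does not yield connected branch sets; and ``project onto a colour class'' does not give a subgraph either. It is in fact true that a connected graph is a minor of its bipartite double cover, but proving it requires a real argument (e.g.\ routing disjoint paths between the two fibres through an odd cycle), not the one-liner you give. Your subsequent regularisation step, with its handshake pairing and self-completing $K_{3,3}$-type gadgets, is also considerably more delicate than necessary.

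The paper takes a much simpler route to bipartiteness: after making the graph $3$-regular (pad low-degree vertices with small fixed gadgets, replace each degree-$t$ vertex by a $t$-cycle), it simply \emph{subdivides every edge once}. Any subdivision is bipartite, with the two sides being ``old vertices'' versus ``subdivision vertices'', and $H$ is trivially a minor (just contract each subdivided edge back). This leaves all old vertices at degree $3$ and all $3t$ subdivision vertices at degree $2$; adding $t$ fresh vertices, each joined to three subdivision vertices, restores $3$-regularity, and since the fresh vertices sit on the ``old'' side of the bipartition, bipartiteness is preserved automatically. This sidesteps both of the obstacles you anticipated.
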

\begin{proof}
  First, if $v \in V(H)$ has degree $t < 3$, then attach one of the
  gadgets appearing in Figure~\ref{fig:cubic_gadget} to $v$ so as to
  increase its degree to $3$.
Then replace every vertex $v\in V(H)$ of degree $t > 3$ by a cycle of length $t$ and,
for all $i\in[t]$, attach the $i$-th edge incident with $v$ to the
$i$-th cycle vertex. The resulting graph is $3$-regular and thus
has $3t$ edges for some $t\in\mathbb{N}$. Subdivide this graph and obtain
$3t$ vertices of degree $2$. Add $t$ vertices, each of which connects
to three of the $3t$ degree-$2$ vertices. It is easy to see that the constructed graph $H^{\dagger}$ has $O(|E(H)|)$ edges and contains $H$ as minor: 
one needs to reverse the subdivisions by contractions, delete all the additionally introduced vertices, and contract each cycle to single vertex.
\begin{figure}[t]
\begin{centering}
\includegraphics[width=0.3\textwidth]{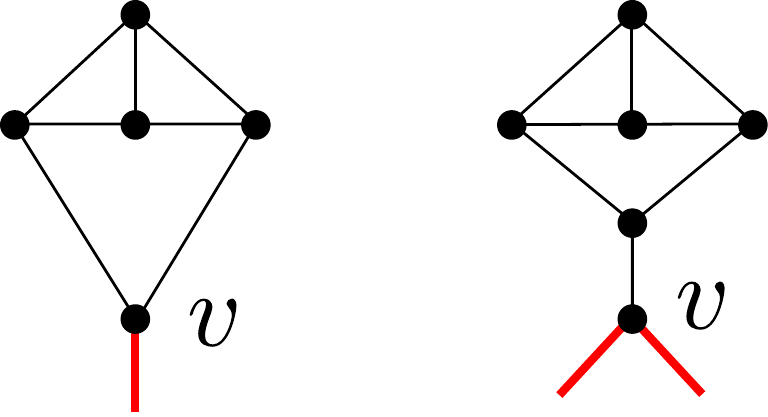}
\par\end{centering}
\caption{\label{fig:cubic_gadget}Handling a degree-1 or a degree-2 vertex in the proof of Lemma~\ref{lem:minor-bicubic}.}\end{figure}
\end{proof}
By Lemma~\ref{lem:minor-bicubic}, for every class $\H$, every $H\in
\H$ appears as the minor of some graph $H^{\dagger}\in
\H_\textup{bicub}$, hence $\pRestr(\H)$ can be reduced to
$\pRestr(\H_\textup{bicub})$.
As shown in Theorem~\ref{thm:partgrid}, the problem $\pRestr(\mathcal{H}_{\textup{grid}})$ is $\sharpWone$-hard.\footnote{The problem $\pRestr(\mathcal{K})$ on the class $\mathcal K$ of cliques could also be used here. This problem admits a simple self-contained $\sharpWone$-hardness proof from counting $k$-cliques.}
Thus, we obtain:
\begin{lemma}\label{lem:bicubic-hard}
  $\pRestr(\H_\textup{bicub})$ is \sharpWone-hard.
\end{lemma}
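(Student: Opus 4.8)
The plan is to combine the two results that immediately precede the statement. By Lemma~\ref{lem:minor-bicubic}, every graph $H$ with no isolated vertices is a minor of some bipartite cubic graph $H^{\dagger}$ with $|V(H^{\dagger})| = O(|V(H)|)$. In particular, this applies to the grids $H_{k\times k}$, none of whose vertices are isolated. Taking $\mathcal H = \mathcal H_{\textup{grid}}$ and $\mathcal H^{\dagger} = \mathcal H_{\textup{bicub}}$, the hypothesis of Lemma~\ref{lem:reduction-minor} is satisfied: for every $H_{k\times k} \in \mathcal H_{\textup{grid}}$ there is some $H^{\dagger} \in \mathcal H_{\textup{bicub}}$ with $H_{k\times k} \preceq H^{\dagger}$. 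Both classes are clearly recursively enumerable, and the bound $|V(H^{\dagger})| = O(|V(H_{k\times k})|)$ lets us invoke the $\leqLin$ version of the lemma. Hence $\pRestr(\mathcal H_{\textup{grid}}) \leqLin \pRestr(\mathcal H_{\textup{bicub}})$.

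Now I would chain this with Theorem~\ref{thm:partgrid}, which states that $\pRestr(\mathcal H_{\textup{grid}})$ is $\sharpWone$-hard (via the reduction from $\pClique$). Since $\sharpWone$-hardness is preserved under $\leqLin$ (indeed under the weaker $\leqT$), it follows that $\pRestr(\mathcal H_{\textup{bicub}})$ is $\sharpWone$-hard, which is exactly the claim.

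There is essentially no obstacle here, since both ingredients are already in hand; the statement is really a one-line corollary of Lemmas~\ref{lem:reduction-minor} and~\ref{lem:minor-bicubic} together with Theorem~\ref{thm:partgrid}. The only point that requires a moment's care is checking that the grids genuinely have no isolated vertices (true for every $k \ge 1$: in $H_{1\times 1}$ the single vertex has degree $0$, so strictly one should start the class at $k \ge 2$, which is harmless) and that the linear blow-up in Lemma~\ref{lem:minor-bicubic} indeed gives the $\leqLin$ rather than merely $\leqFpt$ — this matters later for transferring ETH-based lower bounds on the exponent, though for the bare $\sharpWone$-hardness statement even $\leqT$ would suffice. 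As the footnote indicates, one could alternatively route the argument through $\pRestr(\mathcal K)$ on the class of cliques, which admits a direct $\sharpWone$-hardness proof from counting $k$-cliques; the structure of the argument is identical.
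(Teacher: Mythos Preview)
Your proposal is correct and follows essentially the same route as the paper: combine Lemma~\ref{lem:minor-bicubic} with Lemma~\ref{lem:reduction-minor} to reduce $\pRestr(\mathcal{H}_{\textup{grid}})$ to $\pRestr(\mathcal{H}_{\textup{bicub}})$, and then invoke the $\sharpWone$-hardness of the former from Theorem~\ref{thm:partgrid}. Your side remark about $H_{1\times 1}$ is a fair nitpick the paper glosses over, and your note that the $\leqLin$ version already follows here (since grids have bounded degree) is correct, though for the bare $\sharpWone$-hardness claim $\leqT$ suffices.
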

It is known that, assuming ETH, $\pClique$ cannot be solved in time
$f(k)n^{o(k)}$ for any computable function $k$
\cite{Chen20061346,survey-eth-beatcs}.  We would like to have a
similar lower bound for $\pRestr(\H_\textup{bicub})$ and then, via the
reduction in Section~\ref{sec:bipart-edge-colorf}, a lower bound for
counting bipartite $k$-matchings.  Note, however, that if $H$ is a
$k$-clique, then the graph $H^{\dagger}$ constructed in
Lemma~\ref{lem:minor-bicubic} has $O(k^2)$ vertices and edges.
Therefore, the additional requirement $|V(H^{\dagger})|=O(|V(H)|)$ of
Lemma~\ref{lem:reduction-minor} {\em does not} hold, and hence we {\em
  cannot} conclude that $\pClique \leqLin
\pRestr(\H_\textup{bicub})$. This means that this reduction is not sufficiently 
strong to prove that, assuming ETH,
$\pRestr(\H_\textup{bicub})$ cannot be solved in time $f(|V(H)|)\cdot
n^{o(|V(H)|)}$.

  We need a source problem different from $\pClique$ to prove (almost) tight lower
  bounds for $\pRestr(\H_\textup{bicub})$. The following
  result establishes a lower bound that holds for $\pRestr$ even if $H$ has bounded degree.
\begin{theorem}[{\cite[Corollaries 6.2--6.3]{marx-toc-treewidth}}]\label{thm:marxtreewidth}
Assuming ETH, there is a universal constant $D$ such that $\pRestr$ cannot be solved in time $f(k)n^{o(k/\log k)}$, where $k=|V(H)|$ and $f$ is any computable function, even under the restriction that $H$ has maximum degree at most $D$.
\end{theorem}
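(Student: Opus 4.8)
This theorem is quoted from Marx~\cite{marx-toc-treewidth}; we only outline the route one would take. Since $\pRestr$ is a counting problem but the lower bound only needs hardness of deciding whether the count is nonzero, it suffices to work with the decision version, i.e., with detecting a color-preserving copy of a given colorful bounded-degree graph $H$ in a color-compatible host graph. The plan is to reduce from proper $3$-coloring of bounded-degree graphs, which by the Sparsification Lemma~\cite{MR1894519} followed by a standard bounded-degree reduction from $3$SAT admits no $2^{o(m)}$-time algorithm under ETH, where $m$ is the number of edges. From a bounded-degree instance $G_0$ with $m$ edges we build a $\pRestr$ instance whose pattern $H$ is a constant-degree expander with \emph{far fewer than $m$} vertices, so that color-preserving copies of $H$ in the host correspond bijectively to proper $3$-colorings of $G_0$.

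The combinatorial heart --- and the main content of \cite{marx-toc-treewidth} --- is an embedding lemma: there is a universal constant $D$ such that for every $t\le m$ and every bounded-degree graph $G_0$ with $m$ edges there is a graph $H$ of maximum degree at most $D$ on $\Theta(t)$ vertices, which is an expander, together with an \emph{embedding of depth} $q=\widetilde{O}(\sqrt{m/t}+\log t)$ of $G_0$ into $H$: an assignment of a connected branch set $B_v\subseteq V(H)$ to each $v\in V(G_0)$ so that adjacent vertices of $G_0$ receive branch sets that intersect or are joined by an edge of $H$, while no vertex of $H$ lies in more than $q$ branch sets. A counting bound --- each vertex or edge of $H$ can ``host'' only $O(q^2)$ edges of $G_0$, so $m=O(tq^2)$ --- shows that depth $\Omega(\sqrt{m/t})$ is necessary, and low-congestion multicommodity routing in expanders shows it is achievable up to polylogarithmic factors; moreover the embedding is polynomial-time computable. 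The regime $t=\Theta(m/\log m)$, where the depth is $\Theta(\log m)$, is the sharp one: running the embedding backwards and applying treewidth dynamic programming even yields a matching $n^{O(k/\log k)}$ \emph{algorithm} for detecting bounded-degree patterns, so the $\log$ in the exponent $k/\log k$ is genuine.

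Given the embedding, the reduction is bookkeeping in the style of Lemma~\ref{lem:colorful2uncolored}. Identify the color set with $V(H)$, make $H$ colorful, and form a $V(H)$-colored host $\widehat{G}$: the color class $V_w(\widehat{G})$ is the set of proper partial $3$-colorings of $G_0[\{v : w\in B_v\}]$ (at most $q$ variables, so $|V_w|\le 3^{q}$), and for each $ww'\in E(H)$ we join $\alpha\in V_w$ to $\beta\in V_{w'}$ exactly when $\alpha$ and $\beta$ agree on the variables common to both and properly color every $uv\in E(G_0)$ with $w\in B_u$ and $w'\in B_v$; no other edges are added. Since each $H[B_v]$ is connected, the colorings picked at the vertices of $B_v$ all agree on $v$, and every edge of $G_0$ is realized either inside some color class or along some edge of $H$; exactly as in Lemma~\ref{lem:colorful2uncolored} one checks that $\#\SubPart(H\to\widehat{G})$ equals the number of proper $3$-colorings of $G_0$. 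Here $k:=|V(H)|=\Theta(t)$, the pattern has maximum degree at most $D$, and $n:=|V(\widehat{G})|\le|V(H)|\cdot 3^{q}=2^{\widetilde{O}(\sqrt{m/t})}$, and the construction runs in $2^{\widetilde{O}(\sqrt{m/t})}$ time.

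It remains to choose $t$. Suppose for contradiction that $\pRestr$ restricted to patterns of maximum degree at most $D$ were solvable in time $f(k)\,n^{\iota(k)}$ for a computable function $f$ and some $\iota$ with $\iota(k)=o(k/\log k)$; we may assume $f$ is nondecreasing and unbounded. On a $3$-coloring instance with $m$ edges, run the reduction with $t:=\min\{\,m^{1/2},\ \max\{s : f(s)\le 2^{\sqrt m}\}\,\}$, so that $t\to\infty$, $t\le m^{1/2}$, and $f(t)\le 2^{\sqrt m}$. Since $t\le m^{1/2}$ the depth satisfies $q=\widetilde{O}(\sqrt{m/t})$, hence $\log n=\widetilde{O}(\sqrt{m/t})$ and $\iota(t)\cdot\log n = o(t/\log t)\cdot\widetilde{O}(\sqrt{m/t})=o(\sqrt{mt}\,\log^{O(1)}m)=o(m)$ (because $\sqrt{mt}\le m^{3/4}$ dominates the polylogarithmic factor); therefore the assumed algorithm decides $3$-colorability of $G_0$ in time $f(t)\cdot n^{\iota(t)}\le 2^{\sqrt m}\cdot 2^{o(m)}=2^{o(m)}$, which together with the $2^{o(m)}$-time reduction contradicts ETH. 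The only genuinely difficult ingredient is the expander embedding lemma and its tightness; the color-to-CSP translation and the choice of $t$ that swallows the arbitrary prefactor $f$ are routine.
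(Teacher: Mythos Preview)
The paper does not prove this theorem; it is quoted verbatim as Corollaries~6.2--6.3 of \cite{marx-toc-treewidth} and used as a black box. There is therefore nothing in the present paper to compare your attempt against.

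That said, your sketch is broadly faithful to the argument in \cite{marx-toc-treewidth}: the combinatorial core there is indeed an embedding theorem that maps an arbitrary sparse graph into a bounded-degree high-treewidth graph (an expander) with small depth/congestion, and the reduction wraps this in a CSP-style construction where each node of $H$ carries a table of partial assignments on its preimage, exactly as you describe. The parameter-balancing trick of letting $t$ grow slowly enough that $f(t)$ is absorbed is also the standard way these ETH lower bounds handle the arbitrary computable prefactor. So as an outline of the cited result your proposal is on the right track, but since the present paper contributes nothing to the proof of Theorem~\ref{thm:marxtreewidth}, no comparison with the paper's own argument is possible.
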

Now if $\H_D$ is the class of all graphs with maximum degree $D$, then for any $H\in \H_D$, Lemma~\ref{lem:minor-bicubic} constructs a graph $H^{\dagger}$ with $O(|E(H)|)=O(|V(H)|)$ edges (as $D$ is a universal constant). Therefore, Lemma~\ref{lem:reduction-minor} shows that $\pRestr(\H_D)\leqLin \pRestr(\H_\textup{bicup})$ holds. Together with Theorem~\ref{thm:marxtreewidth}, we get the following lower bound.
\begin{lemma}
  \label{lem:hard-bicubic}Assuming ETH, the problem
  $\pRestr(\mathcal{H}_{\textup{bicub}})$ admits no $f(k)n^{o(k/\log
    k)}$ time algorithm, where $k=|V(H)|$ and $f$ is any computable function.\end{lemma}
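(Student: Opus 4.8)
The plan is to assemble the lower bound from three ingredients that are already available: the bounded-degree ETH lower bound for $\pRestr$ (Theorem~\ref{thm:marxtreewidth}), the minor embedding of a bounded-degree graph into a bipartite cubic graph of comparable size (Lemma~\ref{lem:minor-bicubic}), and the minor reduction $\pRestr(\mathcal{H})\leqFpt\pRestr(\mathcal{H}^{\dagger})$ of Lemma~\ref{lem:reduction-minor}, together with its refinement to $\leqLin$ when the target graphs are linearly bounded in size. Since a $\leqLin$-reduction preserves the asymptotic growth of the exponent of $n$ (as recorded in the preliminaries), chaining these gives exactly the claimed hardness for $\pRestr(\H_\textup{bicub})$.

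In detail, I would first let $D$ be the universal constant of Theorem~\ref{thm:marxtreewidth} and let $\H_D$ be the (recursively enumerable) class of graphs of maximum degree at most $D$, so that, assuming ETH, $\pRestr(\H_D)$ admits no $f(k)n^{o(k/\log k)}$ algorithm for computable $f$, with $k=|V(H)|$. Because $D$ is a constant, every $H\in\H_D$ has $|E(H)| = O(|V(H)|)$, so the bipartite cubic graph $H^{\dagger}$ produced by Lemma~\ref{lem:minor-bicubic} has $|V(H^{\dagger})| = O(|E(H)|) = O(|V(H)|)$ and satisfies $H\preceq H^{\dagger}$. (The only wrinkle is that Lemma~\ref{lem:minor-bicubic} assumes $H$ has no isolated vertices; this is harmless, since for colorful $H$ whose isolated vertices carry the color set $I$ one has $\#\SubPart(H\to G) = \bigl(\prod_{c\in I}|V_c(G)|\bigr)\cdot\#\SubPart(H-I\to G)$, so a trivial preprocessing step removes them while only decreasing the parameter, and $H-I$ is still in $\H_D$.) Feeding this into Lemma~\ref{lem:reduction-minor} with $\mathcal{H}=\H_D$ and $\mathcal{H}^{\dagger}=\H_\textup{bicub}$ — whose hypothesis is exactly what Lemma~\ref{lem:minor-bicubic} supplies, and whose linear-size clause applies by the bound above — yields $\pRestr(\H_D)\leqLin\pRestr(\H_\textup{bicub})$.

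It then remains to run the transfer argument: if $\pRestr(\H_\textup{bicub})$ had an $f(k)n^{o(k/\log k)}$ algorithm, then, because the $\leqLin$-reduction makes only polynomially larger oracle queries of parameter $k'=O(k)$ (so that $o(k'/\log k')$ stays $o(k/\log k)$), we would obtain an $f'(k)n^{o(k/\log k)}$ algorithm for $\pRestr(\H_D)$, contradicting the previous step. Hence no such algorithm for $\pRestr(\H_\textup{bicub})$ exists under ETH. I do not anticipate a genuine obstacle: every nontrivial step is packaged in an earlier lemma, and the only points requiring care are that the constant degree bound is precisely what makes $|V(H^{\dagger})|$ linear in $|V(H)|$ (so that $\leqLin$, not merely $\leqFpt$, is available and the exponent lower bound survives the reduction) and the cosmetic removal of isolated vertices noted above.
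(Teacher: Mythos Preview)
Your proposal is correct and follows essentially the same approach as the paper: invoke Theorem~\ref{thm:marxtreewidth} on the class $\H_D$ of bounded-degree graphs, use Lemma~\ref{lem:minor-bicubic} to obtain a bipartite cubic supergraph of linear size (since $|E(H)|=O(|V(H)|)$ when $D$ is constant), apply the $\leqLin$ clause of Lemma~\ref{lem:reduction-minor}, and transfer the exponent lower bound. Your explicit treatment of isolated vertices is a small extra care the paper omits, but it is handled correctly.
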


\section{Bipartite edge-colorful matchings}
\label{sec:bipart-edge-colorf}

In this section, we prove $\sharpWone$-hardness of counting
$k$-matchings in bipartite graphs $G$.  While this is interesting on
its own, as previously only $\sharpWone$-hardness for general graphs
$G$ was known, we mainly use this problem as a reduction source for
the next section, where it will be crucial to assume that $G$ is
bipartite.  In fact, we prove the stronger statement that counting
edge-colorful $k$-matchings is $\sharpWone$-hard (by
Lemma~\ref{lem:colorful2uncolored}(2), this statement is indeed
stronger).  This might come as a surprise as the vertex-colorful
version is fixed-parameter tractable (even on general graphs) by the
discussion in the last section.

Furthermore, our reduction bypasses the algebraic machinery of \cite{DBLP:conf/icalp/Curticapean13}, which built upon a technique introduced in \cite{MR2065338} that could only guarantee that the parameter increase in the reduction is computable.
Therefore, while showing $\sharpWone$-hardness, this proof was inherently unable to show lower bounds under ETH.
In the following proof, we reduce from the problem $\pRestr(\mathcal{H}_{\textup{bicub}})$ from the last section,
which was shown in Lemma~\ref{lem:hard-bicubic} to admit no $f(k)n^{o(k / \log k )}$ algorithm, unless ETH fails.
As our reduction will only make oracle calls to counting matchings of size $\mathcal O (k)$, we obtain the same lower bound for counting $k$-matchings in bipartite graphs.
\begin{theorem}
\label{thm:hard-matchings}
The following problems are $\sharpWone$-complete and admit no $f(k)n^{o(k/\log k)}$ algorithms, assuming ETH:
\begin{enumerate}
\item The problem $\pMatch$ of counting $k$-matchings in uncolored bipartite graphs.
\item The problem $\pColMatch$ of counting edge-colorful $k$-matchings in edge-colored bipartite graphs.
\end{enumerate}
\end{theorem}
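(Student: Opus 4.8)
The plan is to reduce from $\pRestr(\H_\textup{bicub})$, which by Lemma~\ref{lem:hard-bicubic} is \sharpWone-hard and admits no $f(k)n^{o(k/\log k)}$ algorithm under ETH, to $\pColMatch$. Given a bipartite cubic graph $H$ with a colorful vertex-coloring and a vertex-colored host graph $G$, I first invoke Remark~\ref{rem:partitioned-colorful} to assume that every edge of $G$ joins two color classes that are adjacent in $H$. The key structural observation is that a cubic bipartite graph $H$ has an edge-coloring by $3$ colors (Vizing/K\H{o}nig for bipartite graphs: $\chi'(H)=\Delta(H)=3$), i.e., its edge set decomposes into three perfect matchings $M_1,M_2,M_3$; more to the point, I will use an edge-coloring of $H$ with a set $\Gamma$ of colors, one per edge, and lift it to $G$: give each edge $uv\in E(G)$ the color of the unique edge of $H$ between the color classes containing $u$ and $v$. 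A color-preserving subgraph copy of $H$ in $G$ then corresponds exactly to a subset of edges of $G$ that is colorful with respect to $\Gamma=E(H)$ and whose endpoints realize the right vertex colors — and since $H$ is cubic, bounded degree forces this edge set to actually be a disjoint union of short paths/cycles within each vertex color class, not an arbitrary subgraph. The real work is to turn ``colorful edge subset forming a copy of $H$'' into ``colorful matching'': I will subdivide, or more precisely replace each edge by a short gadget, so that selecting an edge of the copy of $H$ becomes selecting one matching edge of a designated color, and the incidence constraints of $H$ are encoded by the structure of the gadget graph, which stays bipartite.

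Concretely, the construction I would carry out is the following. Build a bipartite edge-colored graph $G^\star$ with color set $X=E(H)$. For each edge $e=\{i,j\}\in E(H)$ and each pair $(u,v)$ with $u\in V_i(G)$, $v\in V_j(G)$, $uv\in E(G)$, create a gadget that has one distinguished ``$e$-colored'' edge, together with auxiliary edges colored by colors not in $X$ (these auxiliary colors will be quotiented out via Lemma~\ref{lem:colorful2uncolored}(2) / inclusion–exclusion), wired so that at each vertex $w\in V_i(G)$ the three gadgets corresponding to the three edges of $H$ at color $i$ must be selected consistently — i.e., either all three pick the ``$w$'' side or none do — thereby enforcing that the selected edges of $G$ that are $e$-colored, as $e$ ranges over the three edges of $H$ incident to color $i$, all use the same vertex $a_i\in V_i(G)$. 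This is exactly the ``consistency at a grid cell'' trick used in Theorem~\ref{thm:partgrid}, now phrased in terms of matchings: a degree-$3$ vertex is handled by a small bipartite ``equality gadget'' whose perfect matchings come in a controlled number of types. Bounded degree of $H$ is what keeps each such gadget of constant size, which is essential for the linear parameter blow-up. After the construction, $\#\mathcal M_X[G^\star]$ (the number of $X$-colorful matchings) equals, up to the inclusion–exclusion over the auxiliary colors, the number of color-preserving copies of $H$ in $G$, i.e., $\#\SubPart(H\to G)$; and the parameter is $|X|=|E(H)|=\tfrac{3}{2}|V(H)|=O(k)$. Hence $\pRestr(\H_\textup{bicub})\leqLin\pColMatch$, and composing with Lemma~\ref{lem:colorful2uncolored}(2) gives $\pColMatch\leqLin\pMatch$, yielding both hardness statements; the matching queries are of size $O(k)$, so the ETH lower bound $f(k)n^{o(k/\log k)}$ transfers along $\leqLin$.

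The step I expect to be the main obstacle is designing the bipartite equality/consistency gadget for a degree-$3$ vertex so that (a) the whole graph $G^\star$ stays bipartite, (b) the gadget has only a bounded number of ``matching types'', and (c) every unintended partial matching either extends uniquely or is killed cleanly by the inclusion–exclusion over auxiliary colors — in other words, so that the only surviving $X$-colorful matchings are exactly the ``honest'' encodings of copies of $H$. Preserving bipartiteness while encoding an equality constraint among three incident edges is the subtle point, since the naive gadget (a triangle-like synchronizer) would destroy bipartiteness; I would resolve this by using a bipartite gadget with a pair of ``selector'' vertices per host-vertex $w$ and routing the three $e$-colored edges through it, possibly after a single subdivision to fix parities. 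A secondary technical point is verifying the counting identity: one must check that distinct copies of $H$ yield distinct colorful matchings and that no colorful matching is over- or under-counted, which is where the cubic structure (each selected edge touches exactly the right number of gadgets) makes the bookkeeping tractable. Everything else — the edge $3$-colorability of bipartite cubic graphs, the parameter arithmetic $|E(H)|=O(|V(H)|)$, and the transfer of the ETH bound through $\leqLin$ — is routine given the machinery already set up in Sections~2 and~3.
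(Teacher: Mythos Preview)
Your starting point matches the paper's: reduce from $\pRestr(\H_\textup{bicub})$, edge-color $G$ by $\Gamma=E(H)$, and replace each vertex of $G$ by a small bipartite gadget with three attachment points. The part you flag as ``the main obstacle'' is indeed the whole difficulty, and your proposed resolution of it is where the proposal breaks down.

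You hope for a bipartite equality gadget at each degree-$3$ vertex such that ``every unintended partial matching either extends uniquely or is killed cleanly by the inclusion--exclusion over auxiliary colors.'' No such gadget is exhibited, and there is good reason to doubt one exists with that clean behavior: the three $\Gamma$-colored edges incident to a color class $i$ must hit three attachment points, and there is no local matching constraint that forces these three points to lie in the \emph{same} copy of the vertex gadget rather than being spread over two or three copies. Inclusion--exclusion over auxiliary colors lets you demand that certain colors are present, but it does not let you forbid specific structural configurations of the $\Gamma$-edges. The paper does not attempt this at all. Instead, it uses a $C_6$ gadget, accepts that a $\Gamma$-colorful matching can be in any of five ``types'' at each color class (the good type plus four bad ones), and sets up a $5^k\times 5^k$ linear system: for each vector $\mathbf t\in[5]^k$ it queries $\#\mathcal M_{X(\mathbf t)}(G^\triangle)$ with $X(\mathbf t)\supseteq\Gamma$ a carefully chosen superset of colors, and shows that the coefficient matrix is the $k$-th Kronecker power of a fixed $5\times 5$ matrix $R_1(n)$ of polynomials in $n=|V_i(G)|$. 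The crux is proving $R_1(n)$ nonsingular for all but finitely many $n$, which the paper does by computing $R_1(0)$ explicitly and checking $\det R_1(0)=12$. This linear-algebraic isolation of the good type is the key idea you are missing; your inclusion--exclusion sketch is not a substitute for it.

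Two smaller remarks: the K\H{o}nig $3$-edge-coloring of $H$ that you invoke is never actually used in your construction and plays no role in the paper's proof either; and the auxiliary colors in the paper are not eliminated by inclusion--exclusion but are precisely the extra knobs ($6k$ of them, from the $C_6$ gadgets) that generate the $5^k$ independent equations.
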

We show the second claim, from which the first claim follows with
Lemma~\ref{lem:colorful2uncolored}(2).  The following technical lemma
will be needed in the proof, and illustrates how polynomials appear in
the context of counting matchings.
\begin{lemma}
\label{lem:matching-polynomial}Let $\Delta$ be a set of colors and let $A$ and $B$ be two edge-colorful graphs using only colors from $\Delta$.
For $n\ge 0$, let $A+n\cdot B$ denote the
graph consisting of $A$ together with $n$ vertex-disjoint copies of $B$.
Then for every $X\subseteq\Delta$, the value $\#\mathcal{M}_{X}(A+n\cdot B)$
is a polynomial in $n$ of maximum degree $|X|$.\end{lemma}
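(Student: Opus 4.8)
The plan is to decompose an $X$-colorful matching of $A+n\cdot B$ according to which region of the graph each of its edges lies in: either $A$, or one of the $n$ pairwise disjoint copies of $B$. First I would record the elementary observation that, since $A$ and $B$ are edge-colorful, for every $W\subseteq\Delta$ the quantities $\#\mathcal{M}_W[A]$ and $\#\mathcal{M}_W[B]$ lie in $\{0,1\}$: there is exactly one candidate edge set realizing the color set $W$ (at most one edge of each color), and it either forms a matching or it does not. As the regions $A,B_1,\dots,B_n$ are pairwise vertex-disjoint, an edge set of $A+n\cdot B$ is a matching iff its restriction to each region is one, and the colors of an $X$-colorful matching are split among the regions; hence, writing $\#\mathcal{M}_\emptyset[\cdot]=1$ for the empty matching,
$$\#\mathcal{M}_X[A+n\cdot B]=\sum_{X_0\sqcup X_1\sqcup\cdots\sqcup X_n=X}\#\mathcal{M}_{X_0}[A]\prod_{j=1}^{n}\#\mathcal{M}_{X_j}[B],$$
the sum ranging over all ordered partitions of $X$ into $n+1$ possibly-empty parts, where $X_0$ is the set of colors used inside $A$ and $X_j$ the set used inside the $j$-th copy of $B$.

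Next I would fix the part $X_0$ and study the remaining sum over the ordered tuples $(X_1,\dots,X_n)$, which partition $Z:=X\setminus X_0$. Since the $n$ copies of $B$ are isomorphic as edge-colored graphs, $\#\mathcal{M}_{X_j}[B_j]=\#\mathcal{M}_{X_j}[B]$, so this sum depends only on $Z$ and $n$. Call a nonempty $\beta\subseteq\Delta$ \emph{$B$-good} if $\#\mathcal{M}_\beta[B]=1$. A tuple $(X_1,\dots,X_n)$ contributes a nonzero term precisely when all its nonempty parts are $B$-good, in which case those parts form a set partition of $Z$ into $B$-good blocks; conversely, each set partition of $Z$ into $b$ $B$-good blocks is realized by exactly $(n)_b$ such tuples (place the $b$ blocks into $b$ distinct coordinates among the $n$, in order). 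Therefore the inner sum equals $\sum_{b\ge 0}c_b\,(n)_b$, where $c_b$ is the number of partitions of $Z$ into exactly $b$ $B$-good blocks, and since $c_b=0$ for $b>|Z|$ this is a polynomial in $n$ of degree at most $|Z|\le|X|$. Summing over all $X_0\subseteq X$, each summand is a $0/1$ constant times a polynomial of degree at most $|X\setminus X_0|\le|X|$, so $\#\mathcal{M}_X[A+n\cdot B]$ is a polynomial in $n$ of degree at most $|X|$, which is the claim.

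There is no real obstacle in this argument; the only point demanding care is the combinatorial bookkeeping of the second step — recognising that distributing the colors of $Z$ among the $n$ indistinguishable copies of $B$, subject to the local matching constraint within each copy, is governed by set partitions into $B$-good blocks, and that a given partition into $b$ blocks arises from exactly $(n)_b$ ordered placements. It is precisely the appearance of the falling factorial $(n)_b$ here that forces the count to be polynomial in $n$, and the same observation also shows that the true degree equals the largest $b$ admitting such a partition (over all choices of $X_0$), although only the bound $|X|$ is needed for the lemma.
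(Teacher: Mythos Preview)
Your proof is correct and follows essentially the same approach as the paper: split the colors of $X$ between $A$ and the $n$ copies of $B$, then observe that distributing the remaining colors among the $n$ identical copies amounts to choosing a set partition of $X\setminus X_0$ into blocks that form matchings in $B$, with each $b$-block partition realized by $(n)_b$ ordered placements. The paper's indicator $\alpha_\rho$ is exactly your ``$B$-good'' condition, and both arguments arrive at the falling-factorial expansion bounding the degree by $|X|$.
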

\begin{proof}
Given a partition $X=X_{A}\dot{\cup}X_{B}$, consider those $X$-colorful
matchings in $A+n\cdot B$ whose $X_{A}$-colored edges are contained
in $A$ and whose $X_{B}$-colored edges are contained in $n\cdot B$.
Their number is given by $\#\mathcal{M}_{X_{A}}(A)\cdot\#\mathcal{M}_{X_{B}}(n\cdot B)$ as $A$ and $n \cdot B$ are vertex-disjoint.
Therefore 
\[
\#\mathcal{M}_{X}(A+n\cdot B)=\sum_{X_{A}\dot{\cup}X_{B}=X}\#\mathcal{M}_{X_{A}}(A)\cdot\#\mathcal{M}_{X_{B}}(n\cdot B).
\]
As the values $\#\mathcal{M}_{X_{A}}(A)$ are constants independent of
$n$, it suffices to show that $\#\mathcal{M}_{X_{B}}(n\cdot B)$ is a
polynomial in $n$, for every fixed $X_{B}\subseteq X$. For a partition
$\rho$ of $X_{B}$, let $\mathcal{A}_\rho(n\cdot B)$ denote the set of
$X_{B}$-colorful matchings $M$ of $n\cdot B$ with the following
property: For all colors $i,j\in X_{B}$, the $i$-colored and the
$j$-colored edge of $M$ are contained in the same $B$-copy if and only
if $i$ and $j$ are both contained in the same class of $\rho$.  Let us
define $\alpha_{\rho}\in\{0,1\}$ to be $1$ if and only if no class of
$\rho$ contains two colors $i,j\in X_B$ that are incident in $B$;
clearly, $\alpha_{\rho}=0$ implies that $\mathcal{A}_\rho(n\cdot
B)=\emptyset$, as it makes it impossible to map $i$ and $j$ to the
same copy of $B$ (recall that $B$ is edge-colorful).
Therefore, we have (calling a partition with $\ell$ classes an $\ell$-partition)
\[
\#\mathcal{M}_{X_{B}}(n\cdot B)
=\sum_{\ell=1}^{|X_B|}\;\sum_{\substack{\ell\text{-partition }\\\rho\text{ of }X_{B}
}
} \#\mathcal{A}_\rho(n\cdot B)\;
= \sum_{\ell=1}^{|X_B|}
\sum_{\substack{\ell\text{-partition }\\\rho\text{ of }X_{B}
}
}\alpha_{\rho}\cdot(n)_{\ell}.
\]
Since the falling factorial expression $(n)_{\ell}$ for fixed $\ell\in\mathbb{N}$ is a polynomial
in $n$ of degree $\ell \leq |X|$, the claim is proven.
\end{proof}

The remainder of this section will comprise a proof of Theorem
\ref{thm:hard-matchings}, which we sketch in the following.  
As we reduce from $\pRestr(\mathcal{H}_{\textup{bicub}})$, let $H$
be a $3$-regular bipartite graph on vertices $[k]$ and let $G$ be
$[k]$-vertex-colored. In the setting of this reduction, we wish to determine $\#\SubPart(H\to
G)$, which is $\sharpWone$-complete by Lemma~\ref{lem:hard-bicubic}, and we are given oracle access for counting edge-colorful
matchings in bipartite graphs.

First, we transform the \emph{vertex}-colored graph
$G$ to an \emph{edge}-colored graph $G^{\triangle}$ on colors corresponding to
the \emph{edges} of $H$ and $6k$ additional colors. We denote the edge-colors of $H$ by $\Gamma = E(H)$.
The graph $G^{\triangle}$ is obtained by first coloring each edge between vertex-colors $i$ and $j$ in $G$ with the edge-color $ij \in \Gamma$. Secondly, each vertex $v\in V(G)$ is replaced by an edge-colorful gadget on six edges and with three special nodes. The edges incident with $v$ are then distributed to the three special nodes:
Recall that, by Remark~\ref{rem:partitioned-colorful}, the vertex $v$ sees exactly three vertex colors among its neighbors; draw an edge from the first special node to all neighbors of $v$ colored with the first such color, and so on.

Then we consider the $\Gamma$-edge-colorful matchings in $G^{\triangle}$, i.e., those matchings $M$ in $G^{\triangle}$ that contain exactly one edge of each color in $\Gamma$ and no other edges.
Any such $M$ can hit some number of gadgets between $k$ and $3k$.
We show that, if exactly $k$ gadgets are hit (one for each vertex-color of the original graph $G$), then $M$ is ``good'' as it corresponds to a subgraph $F \subseteq G$ that is color-preserving isomorphic to $H$.

It remains to isolate the good $\Gamma$-edge-colorful matchings of $G^{\triangle}$.
This will be achieved by setting up a linear system of equations featuring $2^{\mathcal O (k)}$ indeterminates and equations and full rank:
Each equation establishes a linear correspondence between a number we can determine by oracle calls, namely that of $\Gamma'$-edge-colorful matchings of $G^\Delta$ with $\Gamma' \supseteq \Gamma$, and a set of numbers we are looking for, namely that of $\Gamma$-edge-colorful matchings of $G^{\triangle}$ which are in certain \emph{states}. One of these states corresponds to the good matchings. The full proof follows.

\begin{proof}[Proof of Theorem \ref{thm:hard-matchings}]
  We prove the statement by a reduction from
  $\pRestr(\mathcal{H}_{\textup{bicub}})$. Let $H$ and $G$ be
  $[k]$-vertex-colored graphs such that $H$ is 3-regular, bipartite and colorful.
  Without limitation of generality, $G$ satisfies the condition stated
  in Remark~\ref{rem:partitioned-colorful}: There are no
  edges between color classes $i$ and $j$ of $G$ if there is no edge
  between the $i$-colored vertex and the $j$-colored vertex of
  $H$. 

  Moreover, let $n_0 \in \mathbb N$ with $n_0 \geq 3$ be a fixed universal constant (independent of $H$ and $G$) whose value will be determined at the end of the proof. 
  We assume that there is some $n \in \mathbb N$ such that
  $|V_{i}(G)|=n$ for all $i\in[k]$ and $n > n_0$. This can be ensured by adding isolated vertices to $G$.
  (Note that isolated vertices cannot appear in subgraphs $F$ isomorphic to $H$ as $H$ is $3$-regular.)  In the following,
  consider $H$ as a $\Gamma$-edge-colorful graph, where
  $\Gamma$ is a set of colors of size $3k/2$ corresponding to $E(H)$.

  For each vertex of $H$, let us fix an arbitrary ordering of the
  three edges incident to it.  Let $\Delta := [k]\times[6]$ and let
  $G^{\triangle}$ be the edge-colored graph with colors
  $\Gamma\cup\Delta$, which is obtained from $G$ as follows:
\begin{enumerate}
\item Replace each $v\in V(G)$ by a cycle $C_6$ on the vertices $w_{v,1}$, $z_{v,1}$, $w_{v,2}$, $z_{v,2}$, $w_{v,3}$, $z_{v,3}$. The edges of the cycle are colored with $\{i\}\times [6]$ the way it is shown in Figure~\ref{fig:triangle-states}.
\item Let us define the independent set
  $I(v)=\{w_{v,1},w_{v,2},w_{v,3}\}$.  For each vertex-color $i\in[k]$
  of $G$, define $\mathcal{I}(i)=\bigcup_{v\in V_{i}(G)}I(v)$.
\item For $e\in E(H)$ with $e=\{i,j\}$, let $a,b\in[3]$ be such that
$e$ is the $a$-th edge incident with $i$ and the $b$-th edge incident
with $j$. Replace each $\{u,v\}\in E(G)$ where $u$ is $i$-colored and $v$ is $j$-colored by the edge $\{w_{u,a},w_{v,b}\}$
of color $\gamma(e)\in\Gamma$.
\end{enumerate}

From a bipartition $V(H)=L\dot{\cup}R$, it is easy to construct a
bipartition $V(G^{\triangle})=L^{\triangle}\dot{\cup}R^{\triangle}$:
If $i\in L$ and $v\in V_{i}(G)$, put $I(v)$ into $L^{\triangle}$, and
put the remaining vertices of the $C_6$ cycle of $v$ into
$R^{\triangle}$. Proceed symmetrically for $i\in R$.

For $X\subseteq\Gamma\cup\Delta$, recall that $\mathcal{M}_{X}(G^{\triangle})$
denotes the set of matchings of $G^{\triangle}$ that contain exactly
one edge of each color in $X$. At first, we will only be
interested in $\mathcal{N}:=\mathcal{M}_{\Gamma}(G^{\triangle})$, i.e.,
in colorful matchings of the subgraph of $G^{\triangle}$ that contains
no $C_{6}$-edges. Observe that for $M\in\mbox{\ensuremath{\mathcal{N}}}$
and $i\in[k]$, the set $V(M)\cap\mathcal{I}(i)$ contains exactly
three vertices, which could be contained within a single
set $I(v)$ for some $v\in V(G)$, or they could be spread over different
such sets. That is, the three vertices can be all in the same $I(v)$, or be in three different sets $I(v_1)$, $I(v_2)$, $I(v_3)$, or one of them can be in some $I(v_1)$ and the other two in some $I(v_2)$. This last case further splits into three subcases: there is an $i\in [3]$ such that $w_{v_1,i}$ is used from $I(v_1)$ and the two vertices $w_{v_2,j}$ for $j\in [3]\setminus \{i\}$ are used from $I(v_2)$. In total, this yields five possibilities how the matching $M$ can look like from the viewpoint of the cycles representing $V_i(G)$ (see Figure~\ref{fig:triangle-states}).

\begin{figure}[t]
\begin{centering}
\includegraphics[width=0.96\textwidth]{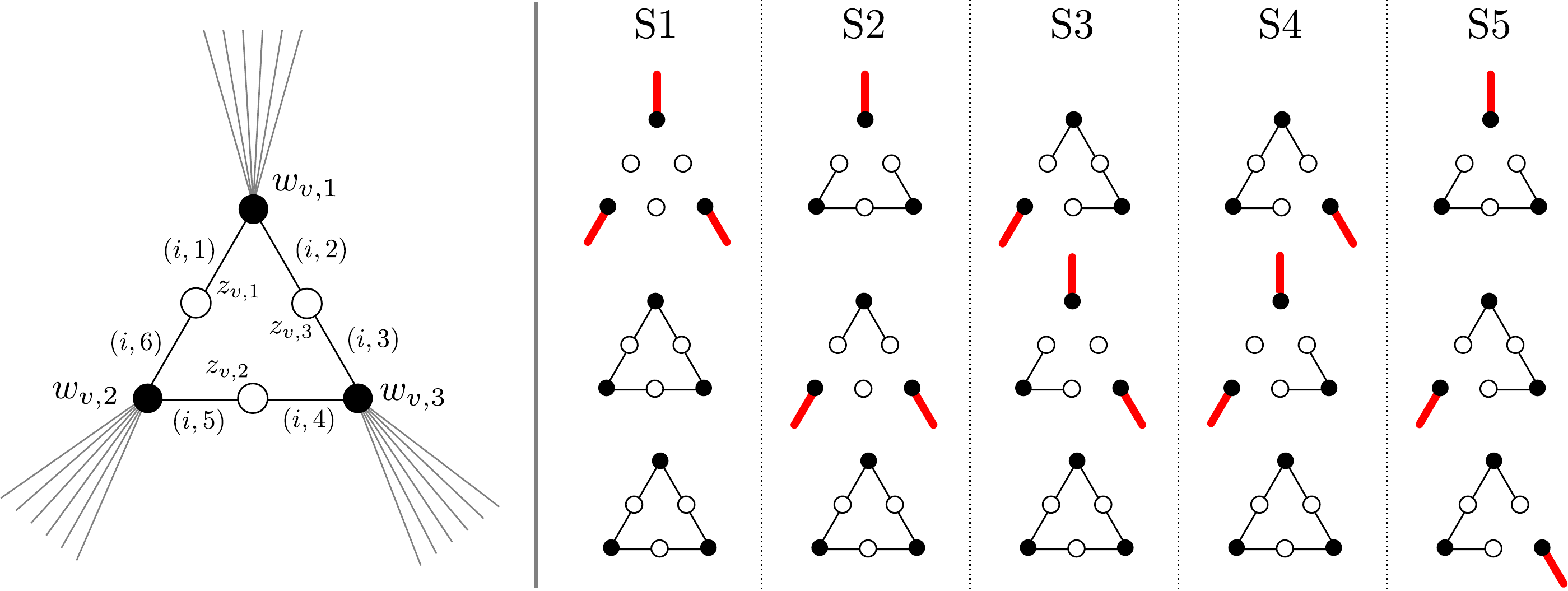}
\par\end{centering}

\caption{\label{fig:triangle-states}Each column represents one
  type. The partition of $M_{i}$ is depicted with red edges. The black
  edges show the edges of the cycles not incident to the matching; these
  edges form the graphs $R_{s}$.}
\end{figure}

We formally define the five possible
\emph{types} depicted in Figure~\ref{fig:triangle-states} as follows.
For $M\in\mbox{\ensuremath{\mathcal{N}}}$ and $i\in[k]$, call $u,v\in V(M)\cap\mathcal{I}(i)$
equivalent if there exists some $w\in V(G)$ such that $u,v\in I(w)$.
This equivalence notion induces a partition $\theta_{i}(M)$ of $V(M)\cap\mathcal{I}(i)$,
which we refer to by its index in Figure~\ref{fig:triangle-states}.
Let the vector $\theta(M)=(\theta_{1}(M),\ldots,\theta_{k}(M))$ be the \emph{type}
of $M$, and let $\Theta:=[5]^{k}$ be the set of all types. For $\theta\in\Theta$,
let $\mathcal{N}[\theta]:=\{M\in\mathcal{N}\mid\theta(M)=\theta\}$
denote the matchings of type $\theta$. Let $\theta^{*}=(1,\ldots,1)$
denote the good type. Given a type $\theta$, we use $\theta(i)$ to denote the $i$-th coordinate of $\theta$.

The set $\mathcal{N}[\theta^{*}]$ corresponds bijectively to
$\SubPart(H\to G)$: Every $M\in\mathcal{N}[\theta^{*}]$ describes a
copy of $H$, as the edges in $M$ involve exactly one vertex of color
$i$ for every $i\in [k]$; conversely every $H$-copy induces a
unique $M^{*}\in\mathcal{N}[\theta^{*}]$.  However, the matchings in
$\mathcal{N}[\theta]$ for $\theta\neq\theta^{*}$ stand in no useful
relation to $H$-copies.

In the following, we consider the edge-colorful matchings in
$\mathcal{M}_{X}(G^{\triangle})$, for certain sets $\Gamma\subseteq
X\subseteq\Gamma\cup \Delta$. Each matching in
$\mathcal{M}_{X}(G^{\triangle})$ is an extension of a matching $M\in
\mathcal{N}$. Different matchings $M\in \mathcal{N}$ have different
numbers of extensions in $\mathcal{M}_{X}(G^{\triangle})$, but we show
that the contribution of $M$ depends only on its type
$\theta(M)$. Therefore, the size of $\mathcal{M}_{X}(G^{\triangle})$
can be interpreted as a weighted sum over $M\in\mathcal{N}$ with
weights depending on $\theta(M)$. Our goal is to deduce the number of
matchings $M\in \mathcal N$ of type $\theta^*$ from the resulting system of linear equations.

This task requires a few definitions. For $t\in[5]$, define subsets
$A_{t}\subseteq[6]$ as follows.
\[
\begin{array}{ccccc}
A_{1}:=\{4,5\}\quad & A_{2}:=\{2,3\}\quad & A_{3}:=\{1,6\}\quad & A_{4}:=\{2,3,4,5\}\quad & A_{5}:=\{1,2,3,4,5,6\}.\end{array}
\]
These subsets seem somewhat arbitrary, and we remark that other (but not all) collections of subsets could be used in the proof.
For $i\in[k]$, write $A_{t}^{i}=\{i\}\times A_{t}$, which are colors appearing on the cycles representing vertices of $V_i(G)$. For $\mathbf{t}\in[5]^{k}$,
let 
\[
X(\mathbf{t}):=\Gamma\cup A_{\mathbf{t}(1)}^{1}\cup\ldots\cup A_{\mathbf{t}(k)}^{k}.
\]
For $s\in[5]$ and $i\in \Gamma$, let $C^i_6$ be the cycle representing
vertices of $V_i(G)$. We introduce a specific auxiliary graph
$R_{s}$, which is an induced subgraph of three disjoint copies of
$C^i_6$, after removing vertices incident to a matching of type
$s$; clearly, $R_s$ has exactly $3\cdot 6 -3=15$ vertices. These graphs are drawn in
Figure~\ref{fig:triangle-states}.  By
Lemma~\ref{lem:matching-polynomial}, for all $s,t\in [5]$, the quantity
\begin{equation}
\label{eq:pst}
p_{s,t}(n):=\#\mathcal{M}_{A^i_{t}}(R_{s}+n\cdot C^i_6)
\end{equation}
is a polynomial in $n$ of maximum degree $6$ which is independent of $H$ and $G$. In principle, the 25 polynomials $p_{s,t}$ could be calculated and written out explicitly. Computing these polynomials is tedious, but, as we shall see, we do not need to know these polynomials explicitly, it is sufficient to know that they are polynomials.

 With the next claim, we obtain $5^{k}$ linear
equations involving the following:
\begin{itemize}
\item the results $\#\mathcal{M}_{X(\mathbf{t})}(G^{\triangle})$ of oracle
calls on $\pColMatch$, for $\mathbf{t}\in[5]^{k}$, and 
\item products of numbers $p_{s,t}(n)$ for $s,t\in[5]$, where $p_{s,t}$ are defined above, and
\item the number of matchings $\#\mathcal{N}[\theta]$ for all $5^{k}$
types $\theta\in\Theta$, including the desired $\#\mathcal{N}[\theta^{*}]$.\end{itemize}
\begin{claim}
\label{lem:kron-comb}Let $n\geq3$, as assumed in this section. For
every $\mathbf{t}\in[5]^{k}$, it holds that 
\begin{equation}
\#\mathcal{M}_{X(\mathbf{t})}(G^{\triangle})=\sum_{\theta\in\Theta}\#\mathcal{N}[\theta]\cdot\prod_{i\in[k]}p_{\theta(i),\mathbf{t}(i)}(n-3).\label{eq:kron-equation}
\end{equation}
\end{claim}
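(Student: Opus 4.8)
The plan is to fix $\mathbf t\in[5]^k$ and count the matchings in $\mathcal M_{X(\mathbf t)}(G^{\triangle})$ by partitioning them according to which $\Gamma$-colored matching $M\in\mathcal N$ they extend. Every matching $N\in\mathcal M_{X(\mathbf t)}(G^{\triangle})$ contains exactly one edge of each color in $\Gamma$ (since $\Gamma\subseteq X(\mathbf t)$), and deleting the $\Delta$-colored edges of $N$ leaves a matching $M:=N\cap E_\Gamma(G^{\triangle})\in\mathcal N$; conversely $N$ is obtained from $M$ by adding a matching on the $C_6$-cycles that uses exactly one edge of each color in $A^i_{\mathbf t(i)}$ for every $i\in[k]$. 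So I would write
\[
\#\mathcal M_{X(\mathbf t)}(G^{\triangle})=\sum_{M\in\mathcal N} e(M,\mathbf t),
\]
where $e(M,\mathbf t)$ is the number of such $\Delta$-extensions, and then group the sum by the type $\theta=\theta(M)$.

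The key step is to show that $e(M,\mathbf t)$ depends only on $\theta(M)$, and in fact factors over the $k$ vertex-colors as $\prod_{i\in[k]}p_{\theta(i),\mathbf t(i)}(n-3)$. First I would observe that the $C_6$-cycles attached to different vertices of $G$ are vertex-disjoint and carry disjoint color sets $\{i\}\times[6]$, so an extension of $M$ on the $\Delta$-colors decomposes independently over $i\in[k]$; thus $e(M,\mathbf t)=\prod_{i\in[k]} e_i(M,\mathbf t)$ where $e_i(M,\mathbf t)$ counts the $A^i_{\mathbf t(i)}$-colorful matchings in the part of $G^{\triangle}$ consisting of the cycles over $V_i(G)$, avoiding the three vertices of $V(M)\cap\mathcal I(i)$ that $M$ already uses. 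Next I would argue that after deleting those three used vertices (and their incident cycle-edges, none of which are $\Delta$-colored and in $M$), the remaining relevant graph depends on $M$ only through the type $\theta(i)$: if $\theta(i)=s$, then among the $n$ cycles over $V_i(G)$ either one cycle has three vertices removed (leaving exactly the auxiliary graph $R_s$, which has $3\cdot6-3=15$ vertices as noted — here I use that $R_s$ is really three copies of $C_6$ with a type-$s$ matching removed, but the spare two copies are untouched), or the removed vertices are spread over two or three cycles, each losing a set of vertices that, up to the fixed edge-orderings, again leaves the same subgraph $R_s$. Hence $e_i(M,\mathbf t)=\#\mathcal M_{A^i_t}(R_s + (n-3)\cdot C^i_6)=p_{s,t}(n-3)$ with $t=\mathbf t(i)$, by the definition \eqref{eq:pst} (the shift by $3$ accounting for the three cycles partially consumed in forming $R_s$, of which $n-3$ remain fully intact). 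Summing over $M\in\mathcal N$ grouped by type then gives \eqref{eq:kron-equation}.

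I expect the main obstacle to be the bookkeeping in the previous paragraph: carefully verifying that for each of the five types $s$, regardless of how the three vertices of $V(M)\cap\mathcal I(i)$ are distributed among one, two, or three cycles, the leftover induced subgraph of those (at most three) cycles is isomorphic — as an edge-colored graph, respecting which $\{i\}\times[6]$-colors survive — to the single canonical graph $R_s$ of Figure~\ref{fig:triangle-states}. This is exactly where the seemingly arbitrary choice of the sets $A_t$ and the coloring of the $C_6$ in Figure~\ref{fig:triangle-states} is used, and where the condition $n\ge 3$ (so that $n-3\ge 0$ and the copies of $C^i_6$ being "added back" make sense) enters. Once this type-invariance is established, the factorization and the regrouping are routine, and the polynomiality of each $p_{s,t}$ is already furnished by Lemma~\ref{lem:matching-polynomial}.
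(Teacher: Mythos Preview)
Your proposal is correct and follows exactly the paper's approach: partition $\mathcal M_{X(\mathbf t)}(G^{\triangle})$ by the underlying $\Gamma$-matching $M\in\mathcal N$, group by type, and use that the leftover $\{i\}\times[6]$-colored graph after deleting the three occupied $w$-vertices is $R_{\theta(i)}+(n-3)\cdot C^i_6$ as an edge-colored graph, giving the factor $p_{\theta(i),\mathbf t(i)}(n-3)$. The ``main obstacle'' you flag is handled in the paper in one line, and indeed no per-type case analysis is needed: since all $n$ cycles over $V_i(G)$ carry the identical $\{i\}\times[6]$-coloring and the type $\theta(i)$ records precisely which of $w_1,w_2,w_3$ lie together, the edge-colored isomorphism class of the three affected cycles (hence of the whole remainder) is determined by $\theta(i)$ alone.
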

\begin{proof}
  Let $\mathbf{t}\in[5]^{k}$. We can partition the
  $X(\mathbf{t})$-colored matchings $F$ in $G^{\triangle}$ according
  to the type of the edges of color $\Gamma$.  We claim that each
  partition class then contains
  $\#\mathcal{N}[\theta]\cdot\prod_{i\in[k]}p_{\theta(i),\mathbf{t}(i)}(n-3)$
  elements. First, the number of matchings of type $\theta$ in
  $\mathcal{N}$ is given by $\#\mathcal{N}[\theta]$. For $i\in [k]$,
  each matching $M\in \mathcal{N}[\theta]$ occupies three vertices of
  the $n$ copies of $C^i_6$ representing $V_i(G)$. What remains is
  isomorphic to $R_{\theta(i)}+(n-3)\cdot C^6_i$, since three copies
  are affected as described by $\theta(i)$ and the remaining $n-3$
  copies are unaffected.  Therefore, independently for $i\in[k]$, each
  matching in $\mathcal{N}[\theta]$ can be extended by edges of colors
  $A_{\mathbf{t}(i)}^{i}$ in $p_{\theta(i),\mathbf{t}(i)}(n-3)$ ways.
  \cqed\end{proof} For $\mathbf{t}\in[5]^{k}$, consider
(\ref{eq:kron-equation}) as a linear equation in the unknowns
$\#\mathcal{N}[\theta]$: We obtain $T$ equations in $T$ unknowns,
where $T=5^{k}$. With
$[5]^{k}=\{\mathbf{t}_{1},\ldots,\mathbf{t}_{T}\}$ and
$\Theta=\{\theta_{1},\ldots,\theta_{T}\}$, they read
\begin{equation}
\underbrace{\left(\begin{array}{ccc}
\prod_{i\in[k]}p_{\theta_{1}(i),\mathbf{t}_{1}(i)}(n-3) & \ldots & \prod_{i\in[k]}p_{\theta_{T}(i),\mathbf{t}_{1}(i)}(n-3)\\
\vdots & \ddots & \vdots\\
\prod_{i\in[k]}p_{\theta_{1}(i),\mathbf{t}_{T}(i)}(n-3) & \ldots & \prod_{i\in[k]}p_{\theta_{T}(i),\mathbf{t}_{T}(i)}(n-3)
\end{array}\right)}_{=:R_{k}(n-3)}\left(\begin{array}{c}
\#\mathcal{N}[\theta_{1}]\\
\vdots\\
\#\mathcal{\mathcal{N}}[\theta_{T}]
\end{array}\right)=\left(\begin{array}{c}
\#\mathcal{M}_{X(\mathbf{t}_{1})}(G^{\triangle})\\
\vdots\\
\#\mathcal{M}_{X(\mathbf{t}_{T})}(G^{\triangle})
\end{array}\right).\label{eq:kron-system}
\end{equation}

By Gaussian elimination, a solution to (\ref{eq:kron-system}) can
be found in time $\mathcal{O}(T^{3})$, but it is crucial to show
that this solution is unique, i.e., that $R_{k}(n)$ has full rank.
We show that there is a number $n_0 \in \mathbb N$ depending only on the fixed polynomials $p_{s,t}$, which are independent of $H$ and $G$,
such that for all $n,k\in\mathbb{N}$ with $n > n_0$,
the matrix $R_{k}(n)$ has full rank.

For $R\in\mathbb{Z}^{\ell\times\ell}$ and $k\in\mathbb{N}$, we write
$R^{\otimes k}$ for the $k$-th Kronecker power of $R$: The $\ell^{k}$
rows and columns of $R^{\otimes k}$ are indexed by the lexicographical
ordering of vectors $\mathbf{i},\mathbf{j}\in[\ell]^{k}$, and it holds
that $(R^{\otimes
  k})_{\mathbf{i},\mathbf{j}}=\prod_{s\in[k]}R_{\mathbf{i}(s),\mathbf{j}(s)}$.
Let us observe that $R_{k}(n)=(R_{1}(n))^{\otimes k}$, where
$R_{1}(n)$ is the $5\times5$ matrix with $(R_{1}(n))_{s,t}=p_{s,t}(n)$
for $s,t\in[5]$.  It is a basic property of the Kronecker product that
the $k$-th Kronecker power of a nonsingular square matrix is also
nonsingular. Therefore, we only need to verify that $R_1(n)$ is
nonsingular. By Lemma~\ref{lem:matching-polynomial}, the value $p_{s,t}(n)$ is a
  polynomial in $n$ for every $s,t\in[5]$, hence the determinant
  $\det(R_{1}(n))$ is also a polynomial in $n$. This means that it is
  either zero for every $n\in \mathbb{Z}$, or zero only for finitely
  many $n$. Recall that $(R_1(0))_{s,t}=p_{s,t}(0)=\#\mathcal{M}_{A^i_{t}}(R_s)$, that is, the number of $A^i_t$-colored matchings in a specific 15-vertex graph $R_s$, which can be computed with some effort. 
We show in the appendix that in fact
\[
R_{1}(0)=\left(\begin{array}{ccccc}
2 & 2 & 3 & 3 & 3\\
2 & 3 & 2 & 3 & 3\\
2 & 3 & 3 & 2 & 3\\
2 & 3 & 3 & 4 & 5\\
2 & 2 & 2 & 2 & 4
\end{array}\right),
\]
  and it can be calculated that $\det(R_{1}) = 12$.
  This implies $\det(R_{1}(n))$, interpreted as a polynomial in $n$, is not identically 0, which in turn implies that $R_1(n)$ is
  singular only for finitely many $n$.  We conclude that
  (\ref{eq:kron-system}) admits a unique solution if
  $n > n_0$, which we assumed in the beginning by adding isolated vertices to $G$.
\end{proof}

Let us remark on how the current form of the proof of Theorem~\ref{thm:hard-matchings} was found. In the proof, we consider five specific sets $A_1$, $\dots$, $A_5$, but we could have considered all possible $2^6=64$ subsets. This would have resulted in a larger system of equations with $5^k$ unknowns and $64^k$ equations. Again, the coefficient matrix is the $k$-th Kronecker power of a fixed $64\times 5$ matrix where each entry is a polynomial of $n$ having degree at most 6. The authors wrote a computer program to calculate all these $64\times 5$ polynomials and observed that the columns are linearly independent. If the 5 columns are linearly independent, there has to exist a set of 5 rows such that the corresponding $5\times 5$ submatrix is nonsingular; this is how the sets $A_1$, $\dots$, $A_5$ were selected. Finally, we have observed that it is not necessary to compute explicitly the 25 polynomials appearing in this matrix to make sure that it is nonsingular: it is sufficient to show that there is at least one $n$ where the matrix is nonsingular. Computing the matrix for $n=0$ requires solving 25 instances of counting colored matchings in a 15-vertex graph, which can be done conveniently with computer and is also doable by hand. We stress that while the proof was found with computer help, the only calculation that is needed to verify the proof is computing the $5\times 5$ matrix $R_1(0)$.

With a simple reduction, we can transfer the lower bound of Theorem~\ref{thm:hard-matchings} for counting bipartite $k$-matchings to counting directed $k$-cycles.
\begin{corollary}
The problem $\pDirCycle$ of counting directed cycles of length $k$ is $\sharpWone$-hard 
and admits no algorithm with runtime $f(k)n^{o(k / \log k )}$, unless ETH fails.
\end{corollary}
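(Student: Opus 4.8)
The plan is to give a $\leqLin$-reduction from $\pMatch$ to $\pDirCycle$; combined with Theorem~\ref{thm:hard-matchings}(1), this yields both the $\sharpWone$-hardness and the ETH-based lower bound. Given a bipartite graph $G$ with a fixed bipartition into parts $A$ and $B$ together with a parameter $k$, I would build a directed graph $\dirG$ on the vertex set $A\cup B$: for every edge $\{a,b\}$ of $G$ with $a\in A$, $b\in B$, put in the arc $a\to b$, and for every pair $(b,a)\in B\times A$ put in the arc $b\to a$. The single structural fact that drives everything is that every arc out of an $A$-vertex enters a $B$-vertex and conversely, so every directed walk in $\dirG$ strictly alternates between $A$ and $B$; in particular every directed cycle has even length, and its arcs that go from $A$ to $B$ are necessarily edges of $G$ (no other $A\to B$ arcs exist), while the arcs from $B$ to $A$ are unconstrained ``connectors''. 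The reduction then queries the oracle on $(\dirG,2k)$.

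The core of the proof is a counting bijection. A directed $2k$-cycle of $\dirG$ uses $k$ distinct vertices of $A$ and $k$ distinct vertices of $B$; reading off its $k$ arcs from $A$ to $B$ gives $k$ pairwise vertex-disjoint edges of $G$, i.e.\ a $k$-matching $M$, and reading off its $k$ connector arcs determines a permutation of the edges of $M$. For the whole arc set to be a single directed cycle this permutation must be a single $k$-cycle, of which there are $(k-1)!$ on a $k$-element set; conversely, any $k$-matching of $G$ equipped with any single-$k$-cycle ordering of its edges closes up into a genuine directed $2k$-cycle of $\dirG$, and distinct data give distinct arc sets since the connector out of a given $B$-vertex is uniquely determined. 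Hence the number of directed $2k$-cycles of $\dirG$ equals $(k-1)!\cdot\#\Sub(M_k\to G)$, so dividing the oracle's answer by $(k-1)!$ recovers the number of $k$-matchings of $G$.

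It remains to note that $|V(\dirG)|=|V(G)|$ and that the only oracle call has parameter $2k=\mathcal{O}(k)$, so the reduction is indeed a $\leqLin$-reduction and the lower bounds transfer. I expect the only delicate point to be the bookkeeping in the bijection: one must verify that the split of a $2k$-cycle into ``forward'' matching arcs and ``backward'' connector arcs is forced by the alternation, that connectivity of the cycle is equivalent to the connector permutation being a single $k$-cycle, and that no $2k$-cycles of a different shape can occur --- the last being immediate because $\dirG$ has no arcs inside $A$ or inside $B$. Everything else is routine.
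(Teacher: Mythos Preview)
Your proposal is correct and follows essentially the same approach as the paper: both construct the directed graph by orienting the edges of $G$ from one side to the other and adding \emph{all} arcs back, then use the alternation structure to identify directed $2k$-cycles with pairs consisting of a $k$-matching and a cyclic permutation of its edges, yielding the factor $(k-1)!$. The paper's argument is slightly terser, but the construction, the bijection, and the parameter bound are identical to yours.
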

\begin{proof}
We show how to count $k$-matchings in a bipartite graph $G$ with bipartition
$V(G)=L\dot{\cup}R$ by counting directed $2k$-cycles in the graph
$G'$ obtained from $G$ by directing each edge from $L$ to $R$
and adding each edge from $R$ to $L$. This also implies the lower bound under ETH.

As every $2k$-cycle $C$ of $G'$ must alternate between $L$ and
$R$, it contains $k$ edges from $L$ to $R$. As $C$ is a simple cycle
(it visits no vertex twice), the edges from $L$ to $R$ induce
a $k$-matching of $G$. Conversely, every $k$-matching $M$ of $G$
can be extended to a $2k$-cycle in $G'$ by (i) orienting the edges
of $M$ from $L$ to $R$ and (ii) adding edges from $R$ to $L$.
Step (ii) fixes a permutation of $M$ up to cyclic equivalence, hence
each $M$ corresponds to $(k-1)!$ cycles of length $2k$ in $G'$.

Thus, if $t$ denotes the number of $k$-matchings in $G$ and $s$ denotes the number of $2k$-cycles in $G'$, then $s=(k-1)! \cdot t$, which proves the claim.
\end{proof}

Next we show how to prove hardness for counting undirected cycles.

\begin{theorem}
The problems $\pDirCycle, \pUndirCycle, \pDirPath, \pUndirPath$ of counting directed/undirected paths/cycles of length $k$ are all $\sharpWone$-hard and admit no algorithm with runtime $f(k)n^{o(k / \log k )}$, unless ETH fails.
\end{theorem}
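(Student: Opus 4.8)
By the corollary above, $\pDirCycle$ is already $\sharpWone$-hard and, under ETH, admits no $f(k)n^{o(k/\log k)}$ algorithm. Since $\leqLin$-reductions preserve both properties (and this is the reason we were careful to keep the earlier reductions parameter-linear), it suffices to exhibit $\leqLin$-reductions $\pMatch\leqLin\pDirPath$, $\pDirCycle\leqLin\pUndirCycle$ and $\pDirPath\leqLin\pUndirPath$: then $\pDirPath$ inherits hardness from $\pMatch$, which is hard by Theorem~\ref{thm:hard-matchings}, and $\pUndirCycle,\pUndirPath$ inherit it from the corresponding directed problems. The easy steps go from matchings to directed paths (a path-analogue of the corollary's cycle construction); the genuinely delicate step is the passage from directed to undirected structures, where one must suppress the ``parasitic'' undirected walks that have no directed counterpart.

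\textbf{Matchings to directed paths.} Given a bipartite graph $G$ with parts $L,R$, form a digraph $D$ by orienting every edge of $G$ from $L$ to $R$, adding \emph{all} arcs from $R$ to $L$, and adding a source $s$ with an arc to each vertex of $L$ and a sink $t$ with an arc from each vertex of $R$. As in the proof of the corollary, every directed $s$--$t$ path of length $2k+1$ in $D$ must read $s\to l_1\to r_1\to l_2\to\cdots\to l_k\to r_k\to t$ with $\{l_i,r_i\}\in E(G)$, and, being a simple path, its ``forward'' edges $\{l_1 r_1,\dots,l_k r_k\}$ form a $k$-matching of $G$ equipped with a linear order; hence there are exactly $k!$ times as many such paths as $k$-matchings. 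To compute the number of directed $s$--$t$ paths of a prescribed length from an oracle that counts \emph{all} directed paths of a given length, attach a fresh vertex $a$ carrying only the arc $a\to s$ and a fresh vertex $b$ carrying only the arc $t\to b$: a directed path of length $2k+3$ in the enlarged digraph that uses both $a$ and $b$ is forced into the shape $a\to s\to\cdots\to t\to b$, so a four-term inclusion--exclusion (invoking \eqref{eq:incl-excl} with ambient set the length-$(2k+3)$ directed paths and with the bad events ``avoids $a$'' and ``avoids $b$'' realised by vertex deletions) isolates the desired number. All queries carry parameter $O(k)$, so $\pMatch\leqLin\pDirPath$ and $\pDirPath$ is hard.

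\textbf{Directed to undirected (the main obstacle).} For $\pDirCycle\leqLin\pUndirCycle$, build $H$ from the input digraph $D$ by replacing each vertex $v$ with a spine $v^- - v^\circ - v^+$ (so $v^\circ$ has degree $2$ in $H$) and each arc $(u,v)$ with the single edge $\{u^+,v^-\}$. A directed $k$-cycle $v_1\to\cdots\to v_k\to v_1$ corresponds bijectively to the undirected $3k$-cycle $v_1^- - v_1^\circ - v_1^+ - v_2^- - \cdots - v_k^+ - v_1^-$, which traverses the whole spine of every vertex it meets; call such a cycle \emph{spine-faithful}, and observe that reading a spine-faithful $3k$-cycle off its arc-edges recovers precisely a directed $k$-cycle of $D$. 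The difficulty is that $H$ also has non-spine-faithful $3k$-cycles: at a vertex $v^-$ incident with several arc-edges (one per in-arc of $v$) a cycle may enter along one and leave along another, turning around $v$ without visiting $v^\circ$, and no purely local gadget can forbid this. The observation that resolves it is a length count: a $3k$-cycle of $H$ with $z$ spine-midpoints $v^\circ$ uses $2z$ spine-edges and $3k-2z$ arc-edges, and since each used $v^\circ$ also drags $v^-$ and $v^+$ onto the cycle we get $z+2z\le 3k$, i.e. $z\le k$, with $z=k$ forcing the cycle to touch only the $k$ spines it uses and hence to be spine-faithful. To read off the number of $3k$-cycles with $z=k$ using only $O(k)$ oracle calls, replace each $v^\circ$ with $t$ parallel copies between $v^-$ and $v^+$ to get $H^{(t)}$: as $3k\neq 4$, no $3k$-cycle of $H^{(t)}$ can use two copies at the same spine, so the number of $3k$-cycles in $H^{(t)}$ equals $\sum_{C}t^{z(C)}$ over the $3k$-cycles $C$ of $H$, a polynomial in $t$ of degree at most $k$; querying the oracle for $t=1,\dots,k+1$ and interpolating, the coefficient of $t^k$ is the number of directed $k$-cycles of $D$. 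The identical construction, taking the spine endpoints of the first and last vertex as the endpoints of the path and $(3k{+}2)$-paths in place of $3k$-cycles (the analogous count gives degree bound $k+1$), yields $\pDirPath\leqLin\pUndirPath$. Chaining all these $\leqLin$-reductions with Theorem~\ref{thm:hard-matchings} and the corollary shows that $\pDirCycle,\pUndirCycle,\pDirPath,\pUndirPath$ are all $\sharpWone$-hard and, under ETH, admit no $f(k)n^{o(k/\log k)}$ algorithm.
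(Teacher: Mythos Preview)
Your reductions for $\pDirCycle$, $\pUndirCycle$ and $\pDirPath$ are correct and, for the cycle step, genuinely different from the paper's: where the paper splits each vertex into $v^{\textup{in}},v^{\textup{out}}$, adds $k$ colored parallel internal edges and uses inclusion--exclusion over colors (plus a subdivision to remove multi-edges), you use a three-vertex spine and interpolation on the number $t$ of midpoint copies. Your approach avoids the multigraph detour and is arguably cleaner. The direct $\pMatch\leqLin\pDirPath$ reduction via a source/sink and the $a,b$ pendants is also fine and not in the paper (which instead obtains $\pDirPath$ at the end of the chain $\pUndirCycle\leqLin\pUndirPath\leqLin\pDirPath$, citing Flum--Grohe for the last two steps).

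The gap is in your claimed $\pDirPath\leqLin\pUndirPath$ step: the ``identical'' spine-plus-interpolation argument does \emph{not} transfer from cycles to paths. The crucial difference is that on a cycle every $v^\circ$ is an internal vertex, so it drags both $v^-$ and $v^+$; on a path a midpoint can be an endpoint, and then it drags only one neighbour. Two things break. First, in $H^{(t)}$ a $(3k{+}2)$-path \emph{can} use two midpoint copies of the same spine (e.g.\ $v^\circ_i - v^+ - v^\circ_j - v^- - u^+ - u^\circ$ for $k=1$), so the count is not simply $\sum_P t^{z(P)}$ over paths $P$ of $H$, and the degree can exceed $k{+}1$. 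Second, even restricting to paths of $H$, there are $(3k{+}2)$-paths with $z=k{+}1$ that are not spine-faithful: for instance $v_0^\circ - v_0^- - u^+ - v_1^- - v_1^\circ - v_1^+ - \cdots - v_k^+$ has $k{+}1$ midpoints but turns around at $u^+$ and corresponds to no directed $k$-path. Hence the top coefficient of your polynomial does not equal the number of directed $k$-paths. A quick repair is to abandon this direction and instead derive $\pUndirPath$ from $\pUndirCycle$ (the paper's route); alternatively one can salvage your construction by first adding dummy source/sink vertices in $D$ so that the undirected path is forced to begin at $s^-$ and end at $t^+$ (isolated via a further inclusion--exclusion), which rules out midpoint endpoints and restores the clean count.
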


\begin{proof}
We reduce $\pDirCycle \leqLin \pUndirCycle \leqLin \pUndirPath \leqLin \pDirPath$.
The last two reductions were shown by Flum and Grohe~\cite[Lemma 22]{MR2065338} and can be observed to preserve the parameter.\footnote{In fact, a reduction from $\pDirCycle$ to $\pUndirCycle$ was also shown by Flum and Grohe~\cite{MR2065338}, but it incurs a polynomial increase of the parameter.}
We thus only show the first reduction.
Let $D$ be a directed graph in which we want to count directed $k$-cycles.
We transform $D$ to an undirected graph $G$ as follows:

\begin{enumerate}
\item Replace each $v\in V(D)$ by vertices $v^{\textup{in}}$ and $v^{\textup{out}}$,
and replace each directed edge $uv\in E(D)$ by the undirected edge
$u^{\textup{out}}v^{\textup{in}}$ in $G$. Consider these edges to
be colored with $0$.
\item For each $v\in V(D)$, add $k$ parallel edges, called \emph{internal edges at $v$}, between vertices $v^{\textup{in}}$
and $v^{\textup{out}}$. Assign colors $1,\ldots,k$ to these edges. (We will later show how to obtain simple graphs that feature no parallel edges.)
\end{enumerate}
Let $\mathcal{C}$ denote the set of $k$-cycles in $D$, and let
$\mathcal{B}$ denote the set of $2k$-cycles in $G$ that choose
at least one edge of each color $[k]$.

\begin{claim}

It holds that $|\mathcal{B}|=k!\cdot|\mathcal{C}|$.

\end{claim}

\begin{proof}

For each cycle $C\in\mathcal{C}$, we can define a subset $\mathcal{B}_{C}\subseteq\mathcal{B}$,
which includes $B\in\mathcal{B}_{C}$ if and only if $B$ contains all edges
$u^{\textup{out}}v^{\textup{in}}$ for $uv\in E(C)$ and some internal
edge $w^{\textup{out}}w^{\textup{in}}$ for each $w\in V(C)$. It
is obvious that $\mathcal{B}_{C}\cap\mathcal{B}_{C'}=\emptyset$ if
$C\neq C'$. Independently of $C$, the set $\mathcal{B}_{C}$ has
size $k!$ as there are $k!$ ways of choosing internal edges for
each $v\in V(C)$: Note that exactly $k$ internal edges
are present in $B$, and each color in $[k]$ must be the color of
some internal edge. Therefore, we have $|\mathcal{B}|\ge k!\cdot |\mathcal{C}|$.

We now show $\mathcal{B}\subseteq\bigcup_{C\in\mathcal{C}}\mathcal{B}_{C}$,
which implies $|\mathcal{B}|\le k!\cdot|\mathcal{C}|$ and proves the claim. Since each color in $[k]$ is present in
$B$, the cycle $B$ passes through $s\geq k$ internal edges. But
since internal edges are vertex-disjoint, at least $s$ additional $0$-colored
edges are required for $B$ to be a cycle. Since $B$ has length $2k$,
this is possible only when $s=k$. Let $v_{1},\ldots, v_{k}\in V(D)$
be the vertices whose internal edges $B$ visits, then $B$ contains
both $v_{i}^{\textup{in}}$ and $v_{i}^{\textup{out}}$ for all $i\in[k]$.
Thus, if we orient edges from out-vertices to in-vertices and contract
each internal edge, then we obtain a directed $k$-cycle of $D$, i.e., some
element of $\mathcal{C}$, and it holds that $B\in\mathcal{B}_{C}$.
\cqed\end{proof}

This allows to compute $|\mathcal{C}|$ from $|\mathcal{B}|$. We determine $|\mathcal{B}|$
from the number of $2k$-cycles in certain uncolored subgraphs of $F'$, using
an inclusion-exclusion argument similar to that of Lemma \ref{lem:colorful2uncolored}: For each
$S\subseteq\Gamma$, let $F'_{S}$ denote the subgraph of $F'$ that
contains only edges of color $S$. Then a $2k$-cycle $C$ in $F'$
is contained in $\mathcal{B}$ if and only if $C$ is contained in
$F'_{\Gamma}$, but in none of $F'_{S}$ for $S\subsetneq\Gamma$.
Together with oracle calls for counting $2k$-cycles, this allows
to apply the inclusion-exclusion formula \eqref{eq:incl-excl}. Note that, in each oracle
call, the parameter is $2k$, which implies the lower bound under ETH.

In the above reduction, the oracle for $2k$-cycles is called on graphs which may feature parallel edges,
but we can easily reduce these to graphs without parallel edges.
Let $G'$ denote the graph obtained from $G$ by subdividing each edge.
Then it is clear that $G'$ contains no parallel edges, while for $t >2$, the $2t$-cycles of $G'$ stand in bijection with the $t$-cycles of $G$.
\end{proof}

\section{Hereditary classes}
\label{sec:hereditary-classes}

In this section, we prove Theorem~\ref{th:main} in the special case
when $\H$ is a hereditary class, that is, when $H\in \H$ implies that every
induced subgraph is in $\H$. We use the multicolored version of Ramsey's Theorem,
stating that if we color the edges of a sufficiently large clique with
a constant number of colors, then a large monochromatic clique appears,
that is, there is a large set of vertices such that every edge
between them has the same color.

\begin{theorem}\label{theorem:multiramsey}
  Let $c$, $r$, and $n$ be positive integers with $n\ge
  (r+1)^{rc}$. Given a $c$-coloring of the edges of an $n$-clique,
  there is a subset of $r$ vertices inducing a monochromatic
  $r$-clique in the coloring.
\end{theorem}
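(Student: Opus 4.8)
The plan is to use the classical sequential (``greedy'') argument for multicolour Ramsey numbers, which produces a bound of exactly the stated shape. The idea is to build a monochromatic clique one vertex at a time while maintaining a shrinking set of ``candidate'' vertices and, for each chosen vertex, a \emph{committed colour}; the committed colours will turn out to encode the colours of all edges of the eventual clique.

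First I would set up the process. Let $U_0$ be the vertex set of the $n$-clique. Given a nonempty $U_{i-1}$, pick an arbitrary $v_i \in U_{i-1}$, partition $U_{i-1}\setminus\{v_i\}$ into $c$ classes according to the colour of the edge to $v_i$, let $U_i$ be a largest such class, and let $\chi_i$ be its colour; then $|U_i| \ge (|U_{i-1}|-1)/c$, and every edge from $v_i$ to a vertex of $U_i$ has colour $\chi_i$. Since $U_0 \supseteq U_1 \supseteq \cdots$, we have $v_j \in U_{j-1} \subseteq U_i$ whenever $i<j$, so the edge $v_iv_j$ has colour $\chi_i$. I run this for $m := c(r-1)+1$ steps. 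To see that the process does not get stuck, I unroll the recurrence $|U_i| \ge (|U_{i-1}|-1)/c$ to obtain $|U_i| > n\,c^{-i} - 1$; a short calculation then shows that the hypothesis $n \ge (r+1)^{rc}$ is more than enough to guarantee $|U_{m-1}| \ge 1$, so that all of $v_1,\dots,v_m$ are well defined.

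Finally I finish by pigeonhole. Among the $m = c(r-1)+1$ committed colours $\chi_1,\dots,\chi_m$, some colour $\chi$ occurs at least $\lceil m/c\rceil = r$ times, say $\chi_{i_1} = \cdots = \chi_{i_r} = \chi$ with $i_1 < \cdots < i_r$. For $a<b$ the edge $v_{i_a}v_{i_b}$ has colour $\chi_{i_a} = \chi$, so $\{v_{i_1},\dots,v_{i_r}\}$ induces a monochromatic $r$-clique, as desired. I do not expect a genuine obstacle here, as the argument is standard and elementary; the one point requiring care is the invariant that \emph{all} edges from $v_i$ into the current candidate set share the colour $\chi_i$, together with the observation that, by nestedness of the $U_i$, the colour of an edge $v_iv_j$ is determined by its \emph{earlier} endpoint — this asymmetry is exactly what makes the pigeonhole step go through. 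The only other thing to check is the purely arithmetic claim that $n \ge (r+1)^{rc}$ leaves enough room for the $c(r-1)+1$ rounds of shrinking by (essentially) a factor $c$, which is routine.
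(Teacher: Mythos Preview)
The paper states this theorem without proof, and your sequential argument is the standard route to multicolour Ramsey bounds. The structure is fine, but the step you call ``routine arithmetic'' is not routine and in fact fails for the bound as stated. Your process needs $|U_{m-1}|\ge 1$ after $m-1=c(r-1)$ rounds of shrinking by essentially a factor $c$, which forces $n$ to be at least on the order of $c^{c(r-1)}$. For fixed $r\ge 3$ and large $c$ this exceeds $(r+1)^{rc}$: comparing exponents, $c(r-1)\ln c$ eventually dominates $rc\ln(r+1)$. Concretely, with $r=3$ and $c=100$ your process needs $n\ge 100^{200}>2^{1328}$, while the hypothesis only supplies $n\ge 4^{300}=2^{600}$. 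Sharpening to $m=c(r-2)+2$ (using that the committed colour $\chi_{i_r}$ of the last chosen vertex is irrelevant, since every edge $v_{i_a}v_{i_b}$ is governed by the earlier index) still leaves you needing $n\gtrsim c^{c(r-2)}$, with the same asymptotic obstruction.

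This is not a patchable oversight: the bound $(r+1)^{rc}$ appears to be stronger than what the greedy argument --- or, as far as I know, any known argument --- actually yields. For $r=3$ it would give $R_c(3)\le 64^c$, whereas whether $R_c(3)$ is singly exponential in $c$ is a well-known open problem (the best known upper bound is of order $c!$). Most likely the paper intends something like $c^{rc}$ or simply ``sufficiently large $n=n(r,c)$''; its only application (Lemma~\ref{lem:hereditaryramsey}) uses $c=16$ and needs merely \emph{some} computable bound. If you replace $(r+1)^{rc}$ by $c^{rc}$, or add the assumption that $c$ is bounded, your argument goes through cleanly; but as a proof of the theorem exactly as written, the arithmetic step is a genuine gap.
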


Using Theorem~\ref{theorem:multiramsey}, we can give a surprisingly simple
induced-subgraph characterization of graphs with large matchings.
\begin{lemma}\label{lem:hereditaryramsey}
  For every $k\ge 1$, there is an $m\ge 1$ such that following
  holds. If a graph $G$ contains a (not necessarily induced) matching of size $m$, then $G$ contains
  as an induced subgraph either
\begin{itemize}
\item a clique on $k$ vertices,
\item a biclique (i.e., a complete bipartite graph) on $k+k$ vertices, or
\item a matching with $k$ edges.
\end{itemize}
\end{lemma}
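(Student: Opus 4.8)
The plan is to apply the multicolour Ramsey theorem (Theorem~\ref{theorem:multiramsey}) to the index set of a large matching of $G$, with a colouring chosen so that a monochromatic class exposes the interaction pattern of the matching edges, and then to inspect the finitely many possible patterns. Fix a matching $M=\{e_1,\dots,e_m\}$ of size $m$ in $G$ and write $e_i=\{a_i,b_i\}$ for an arbitrary labelling of the two endpoints of $e_i$; since $M$ is a matching, the $2m$ vertices $a_1,b_1,\dots,a_m,b_m$ are pairwise distinct. Colour the pair $\{i,j\}$ with $i<j$ by the tuple $(x_{aa},x_{ab},x_{ba},x_{bb})\in\{0,1\}^4$ recording which of the four potential ``cross edges'' $a_ia_j$, $a_ib_j$, $b_ia_j$, $b_ib_j$ lie in $E(G)$. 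This uses $c:=16$ colours, so setting $r:=2k$ and $m:=(r+1)^{rc}=(2k+1)^{32k}$ and applying Theorem~\ref{theorem:multiramsey} produces indices $i_1<\dots<i_r$ such that every pair among $e_{i_1},\dots,e_{i_r}$ receives one common colour $(x_{aa},x_{ab},x_{ba},x_{bb})$.

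Write $A=\{a_{i_1},\dots,a_{i_r}\}$ and $B=\{b_{i_1},\dots,b_{i_r}\}$; the common colour pins down the relevant structure of $G[A\cup B]$. If $x_{aa}=1$, then $G[A]$ is an $r$-clique, so any $k$ of its vertices induce a $k$-clique; the case $x_{bb}=1$ is symmetric. So assume $x_{aa}=x_{bb}=0$, that is, $A$ and $B$ are independent. If in addition $x_{ab}=x_{ba}=0$, then the only edges inside $A\cup B$ are the $r$ matching edges $a_{i_s}b_{i_s}$, so $G[A\cup B]$ is an induced matching with $r\ge k$ edges. If instead $x_{ab}=x_{ba}=1$, then $a_{i_s}b_{i_t}\in E(G)$ for all $s,t$ (a matching edge if $s=t$, a cross edge otherwise), so $G[A\cup B]=K_{r,r}$ with $r\ge k$, which contains an induced biclique on $k+k$ vertices.

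The remaining case is $x_{ab}\ne x_{ba}$; the two orientations are symmetric under exchanging the names $a_i\leftrightarrow b_i$, so assume $x_{ab}=1$ and $x_{ba}=0$. Then for $s<t$ we have $a_{i_s}b_{i_t}\in E(G)$ and $a_{i_t}b_{i_s}\notin E(G)$, which together with the matching edges says exactly that $a_{i_s}b_{i_t}\in E(G)$ if and only if $s\le t$ --- a half-graph. Set $p:=\lfloor r/2\rfloor=k$, $A':=\{a_{i_1},\dots,a_{i_p}\}$, and $B':=\{b_{i_{r-p+1}},\dots,b_{i_r}\}$. For $a_{i_s}\in A'$ and $b_{i_t}\in B'$ we have $s\le p<r-p+1\le t$, hence $a_{i_s}b_{i_t}\in E(G)$; since $A'\subseteq A$ and $B'\subseteq B$ are independent, $G[A'\cup B']=K_{p,p}=K_{k,k}$ is an induced biclique. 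This exhausts all colours, so in every case $G$ contains one of the three required induced subgraphs.

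I expect the only genuine subtlety to be this last case: a half-graph is neither a clique, nor a biclique, nor a matching, yet it does contain a large induced biclique --- take the ``low'' vertices of one side together with the ``high'' vertices of the other. Colouring ordered pairs $i<j$ rather than unordered pairs is what makes this step painless, since the indices already provide the linear order that an auxiliary transitive-tournament argument would otherwise have to produce. The tower-type growth of $m$ inherited from Ramsey's theorem is immaterial, as the statement only asserts the existence of some finite $m$.
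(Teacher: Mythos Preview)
Your proof is correct and follows essentially the same approach as the paper: the same $16$-colouring of the index set of the matching, the same Ramsey bound $m=(2k+1)^{32k}$, and the same splitting of the monochromatic index set into a lower and an upper half to extract a biclique. The only cosmetic difference is that the paper does not separate the ``full biclique'' case $x_{ab}=x_{ba}=1$ from the half-graph case $x_{ab}\ne x_{ba}$; it simply argues that whenever $x_{ab}=1$ (regardless of $x_{ba}$) the lower-$a$/upper-$b$ split already yields $K_{k,k}$, so your case 3b is absorbed into the half-graph argument.
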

\begin{proof}
Let $r=2k$, $c=16$, and let $m=(r+1)^{rc}$ be the bound appearing in Theorem~\ref{theorem:multiramsey}. 
Let $M$ be a matching of size $m$ in $G$, and for $1 \leq i \leq m$, let $x_i,y_i\in V(G)$ denote the endpoints of the $i$-th edge of $M$. 
We define four graphs $H_{xx}$, $H_{xy}$, $H_{yx}$, $H_{yy}$ on the same vertex set $[m]$. For every $1\le i < j \le m$, 
\begin{itemize}
\item $\{i,j\}\in E(H_{xx})$ if and only if $x_i$ and $x_j$ are adjacent in $G$,
\item $\{i,j\}\in E(H_{xy})$ if and only if $x_i$ and $y_j$ are adjacent in $G$,
\item $\{i,j\}\in E(H_{yx})$ if and only if $y_i$ and $x_j$ are adjacent in $G$,
\item $\{i,j\}\in E(H_{yy})$ if and only if $y_i$ and $y_j$ are adjacent in $G$.
\end{itemize}
For any pair $i,j\in[m]$ with $i \neq j$, there are $2^4=16$ possibilities for the edge $\{i,j\}$
to be present in a subset of the four graphs defined above.
This induces a coloring of the edges of a clique on $[m]$ with
$2^4$ colors. Then the choice of $m$ and
Theorem~\ref{theorem:multiramsey} imply that there is a subset
$S=\{s_{1},\dots,s_{2k}\}\subseteq [m]$ such that all edges between
these vertices have the same color. That is, in each of the four
graphs, $S$ is either a clique or an independent set. If $S$ is a
clique in $H_{xx}$, then $\{x_s \mid s\in S\}$ is a $2k$-clique in
$G$, and we are done. Similarly, if $S$ is a clique in $H_{yy}$, then
$\{y_s\mid s\in S\}$ is a $2k$-clique in $G$. Therefore, we can assume
that $S$ is an independent set in both $H_{xx}$ and $H_{yy}$. Suppose
that $S$ is a clique in $H_{xy}$. Let $S_1$ and $S_2$ be the smallest and
largest $k$ elements of $S$, respectively. Then the fact that $S$ is a
clique in $H_{xy}$ implies that $x_i$ and $y_j$ are adjacent for every
$i\in S_1$ and $j\in S_2$ (note that $i<j$ holds). As $S_1$ is
independent in $H_{xx}$, the set $\{x_s\mid s\in S_1\}$ is independent
in $G$. Similarly, $\{y_s\mid s\in S_2\}$ is independent in $G$. Thus
$S_1\cup S_2$ induces a biclique with $k+k$ vertices in $G$. The
argument is similar if $S$ is a clique in $H_{yx}$. Therefore, we can
assume that $S$ is an independent set in all four graphs. This means
that $\{x_sy_s\mid s\in S\}$ is an induced matching of size $2k$.
\end{proof}
It follows that if $\H$ is a hereditary class of graphs such that
there is no bound on the size of the largest matching (equivalently,
there is no bound on the vertex-cover number), then either arbitrarily
large cliques, arbitrarily large induced bicliques, or arbitrarily large
induced matchings appear in $\H$.
\begin{corollary}\label{cor:hereditaryramsey}
Let $\H$ be a hereditary class of graphs. Then at least one of the following holds:
\begin{itemize}
\item $\H$ has bounded vertex-cover number.
\item For all cliques $H$, it holds that $H \in \H$.
\item For all bicliques $H$, it holds that $H \in \H$.
\item For all matchings $H$, it holds that $H \in \H$.
\end{itemize}
\end{corollary}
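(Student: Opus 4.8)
The plan is a short proof by contradiction built directly on Lemma~\ref{lem:hereditaryramsey}. Suppose that none of the four listed alternatives holds. Since the first alternative fails, $\tau$ is unbounded on $\H$, and because $\nu(H)\le\tau(H)\le 2\nu(H)$ for every graph $H$, the maximum matching size $\nu$ is unbounded on $\H$ as well. Since the second alternative fails, there is some $a\ge 1$ with $K_a\notin\H$; as $K_a$ is an induced subgraph of $K_m$ for every $m\ge a$ and $\H$ is hereditary, this forces $K_m\notin\H$ for all $m\ge a$. Arguing identically with the third and fourth alternatives --- a biclique on $b+b$ vertices is an induced subgraph of the biclique on $m+m$ vertices for $m\ge b$, and a $c$-matching is an induced subgraph of an $m$-matching for $m\ge c$, obtained by discarding the surplus edges together with their endpoints --- we obtain constants $b,c\ge 1$ such that no member of $\H$ is a biclique on $m+m$ vertices with $m\ge b$, and no member of $\H$ is an $m$-matching with $m\ge c$.

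Now set $k:=\max\{a,b,c\}$ and let $m(k)$ be the bound supplied by Lemma~\ref{lem:hereditaryramsey} for this value of $k$. Since $\nu$ is unbounded on $\H$, there is a graph $G\in\H$ that contains a (not necessarily induced) matching of size $m(k)$. By Lemma~\ref{lem:hereditaryramsey}, $G$ contains, as an induced subgraph, a clique on $k$ vertices, a biclique on $k+k$ vertices, or a matching with $k$ edges; since $\H$ is hereditary, this induced subgraph is itself a member of $\H$. But $k\ge a$ rules out the first possibility, $k\ge b$ rules out the second, and $k\ge c$ rules out the third --- a contradiction. Hence at least one of the four alternatives must hold.

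There is essentially no hard step here; the one point deserving care is the monotonicity observation, namely that heredity upgrades the \emph{failure} of ``every clique lies in $\H$'' from ``some single clique is missing'' to ``all sufficiently large cliques are missing,'' and likewise for bicliques and matchings --- this is exactly what lets us combine the three separate failures into a single threshold $k$ and then apply Lemma~\ref{lem:hereditaryramsey} once. Finally, we note that the corollary immediately gives Theorem~\ref{th:main} for hereditary $\H$: if $\H$ does not have bounded vertex-cover number, then it contains all cliques, all bicliques, or all matchings, and in each of these three cases $\pSub(\H)$ is \sharpWone-hard --- by Theorem~\ref{thm:hard-treewidth} for cliques and for bicliques, since both of these classes have unbounded treewidth, and by Theorem~\ref{thm:hard-matchings} for matchings --- so none of the three (conditionally equivalent) conditions in Theorem~\ref{th:main} can hold.
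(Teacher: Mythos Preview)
Your proof is correct and follows essentially the same approach as the paper, which merely states the corollary as an immediate consequence of Lemma~\ref{lem:hereditaryramsey} with a one-sentence justification; you have simply made explicit the monotonicity step (that missing one clique/biclique/matching forces all larger ones to be missing by heredity) that the paper leaves to the reader. The additional paragraph deriving Theorem~\ref{th:main} for hereditary classes is also correct, though note that the paper handles the clique and biclique cases directly via $\pClique$ and Theorem~\ref{thm:countingbiclque} rather than invoking Theorem~\ref{thm:hard-treewidth}; your route through unbounded treewidth works just as well.
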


If $\H$ contains every clique, then the canonical $\sharpWone$-hard problem
$\pClique$ can be trivially reduced to $\pSub(\H)$. If $\H$ contains every matching, then the $\sharpWone$-hard problem $\pMatch$
(see \cite{DBLP:conf/icalp/Curticapean13} and
Theorem~\ref{th:main-matching}) can be reduced to $\pSub(\H)$.  If
$\H$ contains every biclique, then $\pBiclique$ (that is, the problem of counting the number of
occurrences of a given biclique) can be reduced to $\pSub(\H)$. The
$\sharpWone$-hardness of $\pBiclique$ has been established in the
Diploma Thesis of Thurley \cite{thurley-diploma}. We can also argue
the following way. Dell and Marx \cite{DBLP:conf/soda/DellM12}
presented a parameterized reduction from finding cliques to finding colored
bicliques. One can observe that the reduction is parsimonious, implying that $\pRestr(\H)$ is $\sharpWone$-hard for the class $\H$ of bicliques.
Then Lemma~\ref{lem:colorful2uncolored}(1) implies the $\sharpWone$-hardness of $\pSub(\H)$ for the class $\H$ of bicliques.
\begin{theorem}\label{thm:countingbiclque}
$\pBiclique$ is $\sharpWone$-hard.
\end{theorem}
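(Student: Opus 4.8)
The plan is to derive the statement from the vertex-colored (partitioned) variant together with Lemma~\ref{lem:colorful2uncolored}(1). That lemma gives $\pRestr(\mathcal{H})\leqLin\pSub(\mathcal{H})$ for \emph{every} class $\mathcal{H}$, and if $\H$ is taken to be the class of all bicliques (complete bipartite graphs) then $\pSub(\H)$ is exactly $\pBiclique$; so it suffices to prove that $\pRestr(\H)$ is $\sharpWone$-hard for this $\H$. I would do this by reducing from $\pClique$ routed through colorful clique. First, $\pClique\leqLin\pRestr(\mathcal{K})$ for the class $\mathcal{K}$ of cliques, by the routine construction that takes $k$ colored copies $V_1,\dots,V_k$ of $V(G)$ and joins two vertices of distinct colors exactly when the corresponding vertices of $G$ are adjacent; since $G$ is loopless, the colorful $k$-cliques of the resulting $k$-partite graph are in $k!$-to-one correspondence with the $k$-cliques of $G$.

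The core is a parsimonious reduction from $\pRestr(\mathcal{K})$ to $\pRestr(\H)$. Given a $k$-partite graph $G$ with color classes $V_1,\dots,V_k$, I would build a vertex-colored bipartite graph $G^{\dagger}$ whose left side uses the colors $1,\dots,k$, with a vertex $(i,v)$ of color $i$ for every $i\in[k]$ and $v\in V_i$, and whose right side uses $\binom{k}{2}$ further colors, one for each pair $\{i,j\}$ with $i<j$, with a vertex $(\{i,j\},e)$ of color $\{i,j\}$ for every edge $e$ of $G$ between $V_i$ and $V_j$. I would join $(i,v)$ to $(\{i',j'\},e)$ unless $i\in\{i',j'\}$ and $v$ fails to be the $V_i$-endpoint of $e$; that is, left and right vertices are adjacent exactly when "consistent". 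The pattern is $K_{k,\binom{k}{2}}$ colored colorfully so that its left vertices carry the colors $1,\dots,k$ and its right vertices the pair-colors; its underlying uncolored graph is a biclique. A color-preserving copy of this pattern in $G^{\dagger}$ uses exactly one left vertex $(i,v_i)$ per color $i$ and one right vertex $(\{i,j\},e_{ij})$ per pair, and completeness of all $k\cdot\binom{k}{2}$ left--right incidences forces, for every pair $\{i,j\}$, that $v_i$ is the $V_i$-endpoint and $v_j$ the $V_j$-endpoint of $e_{ij}$, hence $e_{ij}=v_iv_j\in E(G)$; so $\{v_1,\dots,v_k\}$ is a colorful $k$-clique of $G$, and conversely each colorful $k$-clique $\{v_i\}$ determines the copy with left vertices $(i,v_i)$ and right vertices $(\{i,j\},v_iv_j)$, which is easily checked to be complete bipartite. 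Composing this with the previous reduction and with Lemma~\ref{lem:colorful2uncolored}(1) proves Theorem~\ref{thm:countingbiclque}.

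The step I expect to be the main obstacle is verifying that the reduction is genuinely parsimonious, i.e.\ that no "spurious" color-preserving copy of $K_{k,\binom{k}{2}}$ arises in $G^{\dagger}$: a copy must meet \emph{all} of the incidence constraints simultaneously, and one has to argue these jointly pin down a single $v_i$ for each color $i$ and force every $e_{ij}$ to equal $v_iv_j$. The use of a $k$-partite source graph (equivalently, of colorful clique in place of plain $\pClique$) is precisely what makes the $v_i$ automatically pairwise distinct, so that distinct cliques yield distinct---and distinctly colored---copies, with no collisions. One should also check the compatibility condition of Remark~\ref{rem:partitioned-colorful}, which is immediate since $G^{\dagger}$ is bipartite with left colors $\{1,\dots,k\}$ and right colors the pairs, matching the color-adjacency pattern of $K_{k,\binom{k}{2}}$. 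Note finally that this reduction raises the parameter from $k$ to $k+\binom{k}{2}=\mathcal{O}(k^{2})$, which is harmless for $\sharpWone$-hardness but explains why we do not obtain an ETH-based lower bound here. (As an alternative to this self-contained argument, one may simply cite the $\sharpWone$-hardness proof for $\pBiclique$ in Thurley's thesis~\cite{thurley-diploma}, or observe that the reduction of Dell and Marx~\cite{DBLP:conf/soda/DellM12} from clique to colored biclique is already parsimonious, and then apply Lemma~\ref{lem:colorful2uncolored}(1).)
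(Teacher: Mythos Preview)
Your proposal is correct and follows exactly the route the paper itself points to: the paper does not give a self-contained proof of Theorem~\ref{thm:countingbiclque} but instead cites Thurley's thesis and notes that the Dell--Marx clique-to-colored-biclique reduction is parsimonious, after which Lemma~\ref{lem:colorful2uncolored}(1) finishes. Your write-up simply spells out that Dell--Marx reduction explicitly (left vertices for colored vertices, right vertices for colored edges, adjacency by consistency), verifies parsimony, and then applies Lemma~\ref{lem:colorful2uncolored}(1); you even mention the two citations as an alternative, which is precisely what the paper does.
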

Thus, if a hereditary class $\H$ has unbounded vertex-cover number, then $\pSub(\H)$ is $\sharpWone$-hard in each of the three possible cases of Corollary~\ref{cor:hereditaryramsey}. Together with Theorem~\ref{thm:algo-vc}, this gives a simple proof of Theorem~\ref{th:main} in the special case of hereditary classes.
\begin{theorem}\label{th:main-hereditary}
  Let $\H$ be a hereditary class of graphs. Assuming $\FPT\neq \sharpWone$,
  the following are equivalent:
\begin{enumerate}
\item $\pSub(\H)$ is polynomial-time solvable.
\item $\pSub(\H)$ is fixed-parameter tractable parameterized by $|V(H)|$.
\item $\H$ has bounded vertex-cover number.
\end{enumerate}
\end{theorem}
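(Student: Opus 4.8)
The plan is to assemble Theorem~\ref{th:main-hereditary} directly from the three ingredients already in place: Corollary~\ref{cor:hereditaryramsey}, Theorem~\ref{thm:algo-vc} (the algorithm for bounded vertex-cover number), and the hardness results for the three ``universal'' classes (cliques, bicliques, matchings). The implications $(1)\Rightarrow(2)$ is immediate, since a polynomial-time algorithm is in particular an FPT algorithm. The implication $(3)\Rightarrow(1)$ is exactly Theorem~\ref{thm:algo-vc}: if $\tau(H)\le b$ for every $H\in\H$, then $\#\Sub(H\to G)$ can be computed in time $k^{2^{\mathcal O(b)}}n^{b+\mathcal O(1)}$, which is polynomial in $|V(H)|$ and $|V(G)|$ since $b$ is a constant. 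So the only real work is $(2)\Rightarrow(3)$, which we prove in contrapositive form: if $\H$ is hereditary and has \emph{unbounded} vertex-cover number, then $\pSub(\H)$ is $\sharpWone$-hard, and hence (under $\FPT\neq\sharpWone$) not fixed-parameter tractable.

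For this last step I would invoke Corollary~\ref{cor:hereditaryramsey}: since $\H$ is hereditary with unbounded vertex-cover number, the first bullet fails, so one of the remaining three holds --- $\H$ contains every clique, or every biclique, or every matching. In the clique case, $\pClique\leqT\pSub(\H)$ trivially: given $(G,k)$, simply query the oracle with $H=K_k\in\H$, returning $\#\Sub(K_k\to G)$, which counts exactly the $k$-cliques of $G$; the parameter $|V(H)|=k$ is unchanged. In the matching case, $\pMatch\leqT\pSub(\H)$ analogously, using $H=M_k\in\H$; $\sharpWone$-hardness of $\pMatch$ is Theorem~\ref{thm:hard-matchings} (equivalently the bipartite-matching part of Theorem~\ref{th:main-matching}). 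In the biclique case, we use $H=K_{k,k}\in\H$ and the fact that $\pBiclique$ is $\sharpWone$-hard (Theorem~\ref{thm:countingbiclque}), again noting the reduction is a single parameter-preserving oracle call. In all three cases $\pSub(\H)$ is $\sharpWone$-hard, so it is not FPT unless $\FPT=\sharpWone$, establishing $(2)\Rightarrow(3)$.

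There is really no deep obstacle here --- the theorem is a clean corollary of machinery already developed --- but the one point requiring a little care is making sure all three universal-class hardness results are genuinely available and that the reductions respect the parameterization of $\pSub(\H)$. The clique and matching cases are unproblematic. The biclique case is the only one that is not entirely self-contained in this paper: $\sharpWone$-hardness of $\pBiclique$ rests on the parsimonious reduction of Dell and Marx~\cite{DBLP:conf/soda/DellM12} from $\pClique$ to colored bicliques, together with Lemma~\ref{lem:colorful2uncolored}(1) to pass from the partitioned to the uncolored version. As long as we are content to cite Theorem~\ref{thm:countingbiclque} as a black box (which is legitimate, since its proof was sketched above), the argument goes through without friction, and the three equivalences close up.
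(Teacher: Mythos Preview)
Your proposal is correct and follows essentially the same approach as the paper: use Theorem~\ref{thm:algo-vc} for $(3)\Rightarrow(1)$, the trivial implication for $(1)\Rightarrow(2)$, and for the contrapositive of $(2)\Rightarrow(3)$ invoke Corollary~\ref{cor:hereditaryramsey} and the $\sharpWone$-hardness of $\pClique$, $\pMatch$ (Theorem~\ref{thm:hard-matchings}), and $\pBiclique$ (Theorem~\ref{thm:countingbiclque}) in the three respective cases. Your write-up is in fact more explicit than the paper's about why each reduction is a valid parameterized Turing reduction.
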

\section{Reducing matchings to $\pSub(\H)$ via gadgets}
\label{sec:reduc-match-psubh}
The arguments of the previous section allowed for a simple treatment of the problem $\pSub(\H)$ if $\H$ is hereditary.
However, as already stated in the introduction, this approach fails even on very simple graph classes $\H$, such as the class of disjoint copies of triangles, that we intuitively expect to be ``close'' to the class of matchings.

To show hardness for non-hereditary classes $\H$, we develop a general machinery of $k$-matching gadgets, which are graphs $H\in\H$ together with a partition of $V(H)$ into an induced matching $M$ and some remainder $C$. These gadgets satisfy certain technical properties which will be used in Theorem \ref{th:reduce}, which is the main reduction of this paper.
It states that, if $\H$ is a class of graphs that contains $k$-matching gadgets for all $k\in\mathbb N$,
then there is a parameterized Turing reduction from the problem of counting (uncolored) $k$-matchings in bipartite graphs $G$ to the problem $\pSub(\H)$.

In the remainder of this section, we define $k$-matching gadgets formally, give some first examples of their properties, and then prove Theorem \ref{th:reduce}.
Proving the actual existence of $k$-matching gadgets in graph classes $\H$ will be the task of the subsequent sections.

\begin{definition}
  Let $H$ be a graph and let $C\subseteq V(H)$ be a subset of vertices. Then we
  denote by $\boundary_H(C)$ the set of vertices in $C$ that have a
  neighbor in $V(H)\setminus C$. If $f$ is an isomorphism from $H[C]$
  to $H[C']$ for some $C,C'\subseteq V(H)$ such that
  $f(\boundary_H(C))=\boundary_H(C')$, then we say that $f$ is {\em
    boundary preserving.}
\end{definition}
Observe that $X\subseteq Y$ implies
$(X\setminus \partial_H(X))\subseteq (Y\setminus \partial_H(Y))$: if
$v\in X$ has no neighbor outside $X$, then it has no neighbor outside
$Y$ either.

The following definition formulates the properties of the gadgets we need in the main reduction (Theorem~\ref{th:reduce}).
\begin{definition}\label{def:gadget}
  Let $H$ be a graph, $M$ be an induced $k$-matching in $H$, and let
  $C:=V(H)\setminus V(M)$. We say that $(H,M)$ is a {\em $k$-matching
    gadget} if whenever an isomorphism $f$ from
  $H[C]$ to $H[C']$ for some $C'\subseteq V(H)$ satisfies the conditions
\begin{enumerate}[label=(C\arabic*)]
\item\label{i:cisolated} $H\setminus C'$ has no isolated vertex,
\item\label{i:cbipartite} $H\setminus C'$ is bipartite, and
\item\label{i:boundary} $f$ is boundary preserving,
\end{enumerate}
then it is also true that $H\setminus C'$ is a $k$-matching, i.e., $H \setminus C'$ is isomorphic to the graph on $2k$ vertices that contains $k$ vertex-disjoint edges.
\end{definition}
Using a rather extensive graph-theoretical analysis, we will show in Sections \ref{sec:bound-degr-graphs}-\ref{sec:bound-treew-graphs}:

\begin{theorem}
\label{thm:gadgetexists}
Let $\H$ be a graph class of unbounded vertex-cover number and bounded treewidth.
Then, for all $k \in \mathbb N$, there exists a graph $H\in \H$ and a subset $M \subseteq V(H)$ such that $(H,M)$ is a $k$-matching gadget.
\end{theorem}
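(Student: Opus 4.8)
The plan is to reduce the statement to one about large \emph{induced} matchings, and then to prove that by a case analysis on the coarse structure of $\H$, with the three cases carried out in Sections~\ref{sec:bound-degr-graphs}--\ref{sec:bound-treew-graphs}.

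First I would extract the induced matchings. Since $K_t$ has treewidth $t-1$ and $K_{t,t}$ has treewidth $t$, and treewidth is monotone under subgraphs, bounded treewidth of $\H$ yields a common bound $b_0$ on the clique number and the biclique number of the graphs in $\H$. On the other hand, unbounded vertex-cover number is the same as unbounded maximum-matching number $\nu$, since $\nu(H)\le\tau(H)\le 2\nu(H)$. Hence, for any target size $m$, applying Lemma~\ref{lem:hereditaryramsey} with parameter $\max(b_0+1,m)$ to a graph $H\in\H$ with $\nu(H)$ large enough shows: as large cliques and bicliques are unavailable in $\H$, such an $H$ must contain an \emph{induced} matching of size at least $m$. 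So the obstruction is always a large induced matching, and the remaining task is purely combinatorial: fix $k$, take $H\in\H$ with a huge induced matching, select $k$ of its edges to form $M$ (a sub-matching of an induced matching is again induced), set $C:=V(H)\setminus V(M)$, and argue that for every $C'$ admitting a boundary-preserving isomorphism $f\colon H[C]\to H[C']$ with $H\setminus C'$ bipartite and isolated-vertex-free, the graph $H\setminus C'$ is again a $k$-matching. Condition \ref{i:boundary} on its own never suffices — the four-vertex example from the introduction exhibits a boundary-preserving isomorphism whose $C'$ leaves two isolated vertices — so the construction must exploit \ref{i:cisolated} and \ref{i:cbipartite} essentially, and it has to cope with the general unruliness of induced-subgraph isomorphisms.

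The case analysis I would then perform is the following. \textbf{(i) Bounded maximum degree} (Section~\ref{sec:bound-degr-graphs}): if the graphs in $\H$ have degree at most a constant $D$, then, in a graph with an induced matching of size sufficiently large in terms of $D$ and $k$, one greedily picks $k$ matching edges $e_1,\dots,e_k$ that are pairwise at distance at least $r$ for a suitable constant $r=r(D)$ — each chosen edge forbids only boundedly many others — and sets $M=e_1\cup\dots\cup e_k$. Then $\partial_H(C)$ is a disjoint union of $k$ bounded pieces, each confined to its own radius-$r$ ball around some $e_j$, and a boundary-preserving isomorphism $H[C]\to H[C']$ must send each piece to a piece of the same local type; keeping $H\setminus C'$ bipartite and without isolated vertices then forces the removal of an adjacent pair near each piece, so $H\setminus C'$ is a $k$-matching. \textbf{(ii) Unbounded degree but no large subdivided star} (Section~\ref{sec:graphs-with-no}): here every high-degree vertex has only ``shallow'' attachments, and a more careful selection of a sub-matching relative to the neighbourhood of a high-degree vertex yields the rigidity needed for \ref{i:cisolated}--\ref{i:boundary}. \textbf{(iii) General bounded treewidth} (Section~\ref{sec:bound-treew-graphs}): in the remaining regime arbitrarily large subdivided stars — a hub with many long pendant paths — may occur (note these are trees, so they do not by themselves violate bounded treewidth), and one has to locate such a structure, understand how an induced matching can interact with it, and reduce to the previous two cases. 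Assembling the three cases proves the theorem.

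I expect case (iii) to be the main obstacle. It is exactly where bounded treewidth ceases to imply tractability, in contrast with all the earlier characterizations of Grohe--Schwentick--Segoufin, Dalmau--Jonsson, and Chen--Thurley--Weyer; and a long subdivided star is essentially a disjoint union of long paths glued at one vertex — paths being the other bounded-treewidth class on which $\pSub$ is hard — so carving a rigid $k$-matching gadget out of one is delicate, because a path admits many nearly symmetric ways of removing a short subpath, and ruling all of them out under only \ref{i:cisolated}--\ref{i:boundary} demands the most work. Cases (i) and (ii) should be comparatively routine packing and local-rigidity arguments, and the ``no large subdivided star'' hypothesis in (ii) is there precisely to keep the passage to (iii) tractable.
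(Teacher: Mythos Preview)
Your three-case skeleton matches the paper's, but the load-bearing arguments in each case are quite different from what you sketch, and your proposal for case~(i) has a real gap.

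\textbf{Case (i), bounded degree.} The ``distance $\ge r$ plus local rigidity'' idea does not work as stated. Separating the matching edges from one another only handles property~(P1) of Lemma~\ref{lem:nocommon}; it does nothing about (P2), the requirement that the two endpoints of a \emph{single} matching edge share no common neighbour. That obstruction is entirely local to one edge and is untouched by pairwise distance, yet it is exactly what allows a boundary-preserving isomorphism to ``slide'' and produce an $H\setminus C'$ that is not a matching while still passing \ref{i:cisolated}--\ref{i:boundary}. The paper's argument (Lemma~\ref{lem:gadgetmainlow}) is an induction on $c$, the maximum number of common neighbours of a matching edge's endpoints: a candidate $(H,M)$ either works, or its failure witness $C'$ is analysed. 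If many two-vertex components survive in $H\setminus C'$, one extracts a gadget from those (Claim~\ref{cl:minmatching}); if not, an edge-count shows $H\setminus C'$ has strictly more edges than $M$, hence fewer edges cross the boundary of $C'$ than of $C$, and this deficit lets one find many edges whose endpoints have at most $c-1$ common neighbours, closing the induction. This iteration is the heart of the whole theorem, and your ``routine packing'' remark underestimates it badly --- in fact case~(i) is the hardest of the three, not the easiest.

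\textbf{Cases (ii) and (iii).} Both are handled in the paper by a single mechanism you do not mention: \emph{strong sets} (Definition~\ref{def:strong}, Lemma~\ref{lem:removestrong}). A set $Q\subseteq C$ is strong if every boundary-preserving isomorphism of $H[C]$ fixes $Q$ setwise; one then deletes $Q$ and finds a gadget in $H\setminus Q$. In case~(ii) one engineers a degree gap (Lemma~\ref{lem:degreegap}) so that either the high-degree or the low-degree vertices form a strong set, and after deletion the remaining graph has bounded degree (using Proposition~\ref{prop:starneighbor}), reducing to case~(i). In case~(iii) one first needs an induced matching whose \emph{vertices} have bounded $\stwo$ --- your Ramsey extraction via Lemma~\ref{lem:hereditaryramsey} does not give this, and the paper instead reads such a matching directly off a tree decomposition (Lemma~\ref{lem:nicematching}). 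Then a strong set based on $\stwo$-values is removed, reducing to case~(ii). So your intuition that case~(iii) is where one ``carves a gadget out of a subdivided star'' is off: one does not work inside the star at all, one quotients it out.
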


It indeed suffices to consider classes $\H$ covered by this theorem: 
Recall that, by Theorem~\ref{thm:algo-vc}, the problem $\pSub(\H)$ admits a polynomial-time algorithm if $\H$ has bounded vertex-cover number.
If $\H$ has unbounded treewidth, then $\pSub(\H)$ is $\sharpWone$-complete by Theorem~\ref{thm:hard-treewidth}. 

It will be convenient to know that if a $k$-matching gadget exists,
then a $k_0$-matching gadget also exists for every $k_0<k$. This is
not obvious from the definition and requires a nontrivial proof (which
also serves as an illustration of Definition~\ref{def:gadget} and how
it is used, e.g., in the proof of Claim~\ref{cl:minmatching} in the next section).
\begin{lemma}\label{lem:subgadget}
If $(H,M)$ is a $k$-matching gadget and $M_0\subseteq M$ is a $k_0$-matching, then $(H,M_0)$ is a $k_0$-matching gadget.
\end{lemma}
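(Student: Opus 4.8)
The plan is to derive the $k_0$-matching gadget property of $(H,M_0)$ from that of $(H,M)$ by restricting the ambient isomorphism to the ``core''. Write $C := V(H)\setminus V(M)$ and $C_0 := V(H)\setminus V(M_0)$, and set $D := V(M\setminus M_0)$, so that $D = C_0\setminus C$, the subgraph $H[D]$ is a $(k-k_0)$-matching (because $M$ is induced), $C_0 = C\,\dot{\cup}\,D$, and $V(H)\setminus C_0 = V(M_0)$; note also that $M_0$ is an induced $k_0$-matching in $H$ since $M_0\subseteq M$ and $M$ is induced, so $(H,M_0)$ is a legitimate candidate gadget. Now suppose $f$ is an isomorphism from $H[C_0]$ to $H[C_0']$ fulfilling \ref{i:cisolated}--\ref{i:boundary}. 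Put $W := V(H)\setminus C_0'$ (so $|W| = 2k_0$), $C' := f(C)$ and $D' := f(D)$; then $C_0' = C'\,\dot{\cup}\,D'$, the restriction $g := f|_C$ is an isomorphism from $H[C]$ to $H[C']$, and $H[D']$ is a $(k-k_0)$-matching. First I would show that $g$ satisfies conditions \ref{i:cisolated}--\ref{i:boundary} of Definition~\ref{def:gadget} relative to $C'$; the $k$-matching gadget property of $(H,M)$ then forces $H\setminus C' = H[W\cup D']$ to be a $k$-matching. Deleting from this $k$-matching the sub-matching supported on $D'$ leaves $k_0$ pairwise disjoint edges among the $2k_0$ vertices of $W$, so $H\setminus C_0' = H[W]$ is a $k_0$-matching, which is exactly what is required. (This restriction trick is the point: it sidesteps the difficulty, emphasized in the introduction, that isomorphic cores need not have isomorphic complements, by reducing everything to the already-established gadget property of $(H,M)$.)

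The hard part --- and the crucial use of condition \ref{i:boundary} --- is to show that $H$ has no edge between $D'$ and $W$, equivalently that $H[W\cup D']$ is the disjoint union $H[W]\,\dot{\cup}\,H[D']$. I would argue as follows: every vertex of $D\subseteq V(M)$ has, since $M$ is an induced matching, its unique neighbor inside $V(M)$ equal to its $M$-partner, which again lies in $D$; hence no vertex of $D$ is adjacent to $V(M_0) = V(H)\setminus C_0$, which means $\boundary_H(C_0)\subseteq C$. Applying $f$ and using that $f$ is boundary preserving, $\boundary_H(C_0') = f(\boundary_H(C_0))\subseteq f(C) = C'$, so $\boundary_H(C_0')\cap D' = \emptyset$. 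But $\boundary_H(C_0')$ is precisely the set of vertices of $C_0'$ having a neighbor in $V(H)\setminus C_0' = W$, so indeed no vertex of $D'$ has a neighbor in $W$.

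With this detachment in hand the remaining verifications are routine. Condition \ref{i:cisolated} for $g$ holds because $H[W]$ has no isolated vertex (by \ref{i:cisolated} for $f$) and $H[D']$ is a matching; condition \ref{i:cbipartite} holds because $H[W\cup D']$ is a disjoint union of the bipartite graph $H[W]$ (by \ref{i:cbipartite} for $f$) and the matching $H[D']$. For \ref{i:boundary}, I would check for each $v\in C$ that $v\in\boundary_H(C)$ iff $f(v)$ has a neighbor in $W$ or in $D'$: split ``$v$ has a neighbor in $V(H)\setminus C = V(M_0)\,\dot{\cup}\,D$'' into the two parts, handle the $V(M_0)$-part via boundary preservation of $f$ (it is equivalent to $v\in\boundary_H(C_0)$, hence to $f(v)\in\boundary_H(C_0')$, hence to $f(v)$ having a neighbor in $W$) and the $D$-part via the isomorphism $f$ (it is equivalent to $f(v)$ having a neighbor in $D'$); the resulting condition is exactly $f(v)\in\boundary_H(C')$, so $g(\boundary_H(C)) = \boundary_H(C')$. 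This completes the verification of \ref{i:cisolated}--\ref{i:boundary} for $g$, hence the proof. The only genuinely nonobvious ingredient is the detachment claim of the second paragraph, where the induced-matching structure of $M$ and condition \ref{i:boundary} combine; everything else is bookkeeping.
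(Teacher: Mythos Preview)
Your proof is correct and follows essentially the same approach as the paper: restrict the given isomorphism of $H[C_0]$ to $C$, verify \ref{i:cisolated}--\ref{i:boundary} for the restricted map, invoke the $k$-matching gadget property of $(H,M)$, and then peel off the $(k-k_0)$-matching on $D'$ to conclude that $H[W]$ is a $k_0$-matching. The paper organizes the argument slightly differently (and uses the notation $f_0,f$ where you use $f,g$), but the key observation---that no vertex of $D=C_0\setminus C$ lies in $\boundary_H(C_0)$, whence by boundary preservation no vertex of $D'$ has a neighbor in $W$---is exactly the same ``detachment'' step you isolate, and your verifications of \ref{i:cisolated}--\ref{i:boundary} match the paper's almost line by line.
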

\begin{proof}
  Let $C=V(H)\setminus V(M)$ and $C_0=V(H)\setminus V(M_0)$; we have
  $C\subseteq C_0$. Let $f_0$ be an isomorphism from $H[C_0]$ to
  $H[C'_0]$ satisfying \ref{i:cisolated}--\ref{i:boundary} of
  Definition~\ref{def:gadget} (see Figure~\ref{fig:smallergadget}). We
  have to show that $H\setminus C'_0$ is a matching (and then clearly
  it is a $k_0$-matching, since $H\setminus C'_0$ has the same size as
  $H\setminus C_0$).  Let $f$ be the restriction of $f_0$ to $C$ and
  let $C'=f(C)$; then $f$ is an isomorphism from $H[C]$ to $H[C']$.
\begin{figure}
\begin{center}
{\footnotesize \svg{0.65\linewidth}{smallergadget}}
\caption{Proof of Lemma~\ref{lem:subgadget}.}\label{fig:smallergadget}
\end{center}
\end{figure}

We claim that $f$ satisfies \ref{i:cisolated}--\ref{i:boundary} of Definition~\ref{def:gadget} with respect to $(H,M)$.
\begin{enumerate}[label=(C\arabic*)]
\item Suppose that there is an isolated vertex $v\in H\setminus C'$.
  As $H[C'_0\setminus C']$ is isomorphic to $H[C_0\setminus C]$, which
  induces a matching, we have $v\not\in C'_0$.  By \ref{i:cisolated}
  on $f_0$, we have that there is no isolated vertex in $H\setminus
  C'_0$, a contradiction.
\item The graph $H\setminus C'_0$ is bipartite by \ref{i:cbipartite}
  on $f_0$.  The graph $H[C'_0\setminus C']$ is isomorphic to
  $H[C_0\setminus C]$, which induces a matching, hence bipartite.
  There are no edges between $C_0\setminus C$ and $V(H)\setminus C_0$
  (as there are no edges between $M_0$ and $M\setminus M_0$). Thus the
  fact that $f_0$ is boundary preserving implies that there is no edge
  between $C'_0\setminus C'$ and $V(H)\setminus C'_0$. It follows that
  $H\setminus C'$ is bipartite.

\item Consider first a vertex $v\in C\setminus \boundary_H(C)$.
  As every neighbor of $v$ is in $C$, every neighbor of $f(v)$ is in
  $C'$, that is, $f(v)\in C'\setminus \boundary_H(C')$. Consider now a
  vertex in $v\in \boundary_H(C)$. If $v$ has a neighbor $u\in
  C_0\setminus C$, then $f(u)\in C'_0\setminus C'$ is a neighbor of
  $f(v)$ outside $C'$, implying $v\in \boundary_H(C')$. Finally, if
  $v$ has a neighbor $u\not\in C_0$, then $v\in \boundary_H(C_0)$ and
  hence (as $f_0$ is boundary-preserving by property \ref{i:boundary})
  we have $f(v)\in \boundary_H(C'_0)$. Since $f(v)$ is in $C'$, we
  also have $f(u)\in \boundary_H(C')$.
\end{enumerate}
As $(H,M)$ is a $k$-matching gadget and $f$ satisfies
\ref{i:cisolated}--\ref{i:boundary} of Definition~\ref{def:gadget}, it
follows that $H\setminus C'$ is a matching. As $H[C'_0\setminus C']$
is isomorphic to the matching $H[C_0\setminus C]$, this is only
possible if $H\setminus C'_0$ is also a matching.
\end{proof}

The following lemma shows as an example a simple condition that guarantees the correctness of a $k$-matching gadget.
\begin{lemma}\label{lem:nocommon}
  Let $M$ be an induced $k$-matching in a graph $H$ such that every
  vertex of $C:=V(H)\setminus V(M)$ is adjacent to at most one vertex
  of $V(M)$. Then $(M,H)$ is a $k$-matching gadget.
\end{lemma}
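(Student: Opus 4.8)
The plan is to prove that $H\setminus C'$ is a $k$-matching by counting the edges of $H$ in two ways. Since $f$ is an isomorphism, $|C'|=|C|$, so $|V(H)\setminus C'|=|V(M)|=2k$; hence it suffices to show that $H\setminus C'$ has exactly $k$ edges and no isolated vertex, since a graph on $2k$ vertices with these two properties is $1$-regular and therefore a $k$-matching. For disjoint vertex sets $A,B$ write $e_H(A,B)$ for the number of edges of $H$ with one endpoint in $A$ and one in $B$.

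First I count the edges of $H$ across the partition $(C,V(M))$. The graph $H[V(M)]=M$ has exactly $k$ edges because $M$ is an induced $k$-matching, and $e_H(C,V(M))=|\partial_H(C)|$: by hypothesis every vertex of $C$ has at most one neighbour in $V(M)$, so each vertex of $\partial_H(C)$ has exactly one such neighbour and contributes exactly one crossing edge, while the vertices of $C\setminus\partial_H(C)$ contribute none. Therefore
\[
|E(H)| = |E(H[C])| + |\partial_H(C)| + k .
\]

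Now I count across the partition $(C',V(H)\setminus C')$. Since $f$ is an isomorphism from $H[C]$ to $H[C']$ we have $|E(H[C'])| = |E(H[C])|$, and since $f$ is boundary preserving (condition~\ref{i:boundary}) we have $|\partial_H(C')| = |\partial_H(C)|$. As every vertex of $\partial_H(C')$ has at least one neighbour outside $C'$, it follows that $e_H(C',V(H)\setminus C') \ge |\partial_H(C')| = |\partial_H(C)|$. Substituting into
\[
|E(H)| = |E(H[C'])| + e_H(C',V(H)\setminus C') + |E(H\setminus C')|
\]
and comparing with the displayed identity from the previous paragraph yields $|E(H\setminus C')| \le k$. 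On the other hand, condition~\ref{i:cisolated} guarantees that $H\setminus C'$ has no isolated vertex among its $2k$ vertices, so its degree sum is at least $2k$ and hence $|E(H\setminus C')| \ge k$. Thus $|E(H\setminus C')| = k$, every vertex of $H\setminus C'$ has degree exactly one, and $H\setminus C'$ is a $k$-matching.

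I expect no real obstacle here: the argument is pure edge counting, and in fact condition~\ref{i:cbipartite} is not used at all, nor is anything about $f$ beyond the cardinality equality $|\partial_H(C')| = |\partial_H(C)|$. The only step deserving a moment's care is the inequality $e_H(C',V(H)\setminus C') \ge |\partial_H(C')|$, which holds because summing over $u \in C'$ the number of neighbours of $u$ in $V(H)\setminus C'$ counts each crossing edge exactly once and contributes at least $1$ for every $u \in \partial_H(C')$.
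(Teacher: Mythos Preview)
Your proof is correct and follows essentially the same edge-counting argument as the paper: both use that $|E(H[C])|=|E(H[C'])|$, that the number of crossing edges from $C$ equals $|\partial_H(C)|$ while the number of crossing edges from $C'$ is at least $|\partial_H(C')|=|\partial_H(C)|$, and then invoke \ref{i:cisolated} to conclude that a $2k$-vertex graph with at most $k$ edges and no isolated vertices is a $k$-matching. Your observation that \ref{i:cbipartite} plays no role here is also correct.
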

\begin{proof}
  Suppose that $f$ is an isomorphism from $H[C]$ to $H[C']$ for some
  $C'\subseteq V(H)$ satisfying \ref{i:cisolated}--\ref{i:boundary} of
  Definition~\ref{def:gadget}, but $H\setminus C'$ is not a matching.
  As $H[C]$ and $H[C']$ are isomorphic, the number of edges in $H[C]$
  and $H[C']$ are the same. Every vertex $v\in C$ has at most one edge
  to $V(H)\setminus C$ and if $v$ has such an edge (that is, $v\in \boundary_H(C)$), then \ref{i:boundary}
  implies that $f(v)\in \boundary_H(C')$ has at least one edge to $V(H)\setminus
  C'$. Therefore, the number of edges between $C'$ and $V(H)\setminus
  C'$ is at least the number of edges between $C$ and $V(H)\setminus
  C$. It follows that the number of edges in $H\setminus C'$ is at most
  the number of edges in $H\setminus C$, that is, at most
  $k$. Therefore, the $2k$-vertex graph $H\setminus C'$ has at most
  $k$ edges and \ref{i:cisolated} implies that it does not have isolated
  vertices.\footnote{This the point where we crucially use \ref{i:cisolated} of Definition~\ref{def:gadget}.} This is only possible if $H\setminus C'$ is a
  $k$-matching.
\end{proof}
The condition in Lemma~\ref{lem:nocommon} can be also formulated as
requiring the following two properties:
\begin{itemize}
\item[(P1)] no two edges of $M$ have a common neighbor in $C$ and
\item[(P2)] the two endpoints of an edge of $M$ have no common neighbor in
$C$.
\end{itemize}
As we shall see, condition (P1) is usually easy to achieve in the
graphs we care about (by making $M$ somewhat smaller), but
we will have more difficulty in ensuring condition (P2).

The main reduction is described by the following lemma and theorem, which 
provide a reduction from counting bipartite $k$-matchings to $\Sub(\H)$ whenever $\H$ contains $k$-matching gadgets of all sizes.
\begin{lemma}\label{lem:reduce}
Let $G$ be a graph and let $(H,M)$ be a $k$-matching gadget of size $t = |V(H)|$.
Then we can compute the number of $k$-matchings in $G$ from $2k \cdot 2^{\mathcal O (t^2)}$ oracle queries of the form $\#\Sub(H\to G')$, where $G'$ is an arbitrary graph.
\end{lemma}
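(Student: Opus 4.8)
The plan is to carry out the reduction sketched in the introduction. Write $C:=V(H)\setminus V(M)$ and note we may assume $G$ is bipartite: this is the only case used in the applications and it is exactly what makes condition~\ref{i:cbipartite} of Definition~\ref{def:gadget} hold automatically. Take a fresh copy $\widehat C$ of $H[C]$, let $\widehat B\subseteq V(\widehat C)$ be the vertices of $\widehat C$ corresponding to $\boundary_H(C)$, and build $G'$ from $G$ by adding $\widehat C$ disjointly and joining every vertex of $\widehat B$ to every vertex of $G$. Thus $G'[V(\widehat C)]=\widehat C$, $G'[V(G)]=G$, and the only edges of $G'$ leaving $V(\widehat C)$ run from $\widehat B$ to $V(G)$.

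Call a subgraph $F\subseteq G'$ with $F\simeq H$ \emph{admissible} if (a) $V(\widehat C)\subseteq V(F)$ and $E(\widehat C)\subseteq E(F)$; (b) every vertex of $\widehat B$ has a neighbour of $F$ outside $V(\widehat C)$; and (c) $F-V(\widehat C)$ has no isolated vertex. Condition~(a) forces $F[V(\widehat C)]=\widehat C$, hence $H[C']\simeq H[C]$ for $C'\subseteq V(H)$ the preimage of $V(\widehat C)$ under an isomorphism $H\to F$; the set of vertices of $\widehat C$ having an $F$-neighbour outside $V(\widehat C)$ is always contained in $\widehat B$, so (b) says it equals $\widehat B$, which is precisely the assertion that the induced isomorphism $H[C']\to H[C]$ is boundary preserving. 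Since $F-V(\widehat C)$ is a subgraph of the bipartite graph $G$, an admissible $F$ satisfies hypotheses \ref{i:cisolated}--\ref{i:boundary} of Definition~\ref{def:gadget}, so by the gadget property $F-V(\widehat C)$ is a $k$-matching of $G$. Conversely, each $k$-matching $N$ of $G$ extends to an admissible $F$ (map $C$ to $\widehat C$ by the defining isomorphism, map $V(M)$ onto $V(N)$, and insert exactly the edges of $H$ between $C$ and $V(M)$, which are available because $\widehat B$ is completely joined to $V(G)$), and the number $\lambda$ of admissible $F$ with $F-V(\widehat C)=N$ depends only on $(H,M)$, is at least $1$, and is computable (it equals the number of admissible copies obtained when $G=M_k$, or can be found by brute force in time depending only on $t$). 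Hence the number of $k$-matchings in $G$ is $\lambda^{-1}$ times the number of admissible copies.

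It remains to compute the number of admissible copies from queries $\#\Sub(H\to G'')$, which I would do by nested inclusion--exclusion via~\eqref{eq:incl-excl}, each layer replacing $G'$ by a graph $G''$ obtained from it by deleting some vertices and/or edges (so the pattern stays $H$, and $G''$ has only $|V(G)|+\mathcal{O}(t)$ vertices). In layer (i), inclusion--exclusion over subsets of $E(\widehat C)$ and of $V(\widehat C)$ enforces (a) at the cost of $2^{|V(\widehat C)|}\cdot 2^{|E(\widehat C)|}=2^{\mathcal{O}(t^{2})}$ queries; in layer (ii), inclusion--exclusion over subsets $S\subseteq\widehat B$, removing from $G'$ the edges between $S$ and $V(G)$, enforces (b) at a further factor $2^{|\widehat B|}\le 2^{t}$; and layer (iii) enforces (c) at a cost of a factor $\mathcal{O}(k)$ (equivalently $2^{\mathcal{O}(k)}\le 2^{\mathcal{O}(t^{2})}$, since $2k\le|V(H)|=t$). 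Multiplying the costs yields the claimed bound of $2k\cdot 2^{\mathcal{O}(t^{2})}$ queries.

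I expect layer (iii) to be the main obstacle. Condition~(c) asks that each of the $2k$ vertices of $V(F)\setminus V(\widehat C)$ be covered by an edge of $F$ inside $V(F)\setminus V(\widehat C)$; since this vertex set varies with $F$, one cannot simply inclusion--exclude over the vertices of $G$ that might end up isolated. The plan is to exploit that the event has bounded support: at most $2k$ vertices can simultaneously fail, and each such vertex attaches (inside $F$) to $\widehat B$ in one of at most $2^{|\widehat B|}$ ways, so the contribution of the non-admissible copies can be peeled off by a dedicated inclusion--exclusion/interpolation argument (for instance by attaching $\mathcal{O}(k)$ ``decoy'' vertices to $\widehat B$, so that the resulting count becomes a polynomial in the number of decoys in the spirit of Lemma~\ref{lem:matching-polynomial}, and combining $\mathcal{O}(k)$ evaluations to recover the admissible count), using also that, given (a), (b) and bipartiteness, an isolated vertex in $F-V(\widehat C)$ forces the number of edges of $F$ between $V(\widehat C)$ and $V(F)\setminus V(\widehat C)$ to exceed the fixed value $|E(H)|-|E(H[C])|-k$. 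Making this bookkeeping faithful while keeping the query count at $2k\cdot 2^{\mathcal{O}(t^{2})}$ is where the real work lies; the remaining ingredients are a routine combination of Definition~\ref{def:gadget} with the inclusion--exclusion principle already used in Lemma~\ref{lem:colorful2uncolored}.
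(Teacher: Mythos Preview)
Your proposal is correct and follows essentially the same route as the paper: the same host graph $G'$ (with a copy of $H[C]$ fully joined along $\partial_H(C)$ to $V(G)$), the same three conditions (a)--(c), inclusion--exclusion over subgraphs of $\widehat C$ and subsets of $\widehat B$ to enforce (a) and (b), and an interpolation over added ``decoy'' isolated vertices to enforce (c). The paper executes your layer~(iii) exactly as you suggest: writing the count of copies satisfying (a)--(b) as $\sum_{R}\alpha_R\cdot\#\Sub(R-\iota(R)\to G)\cdot\binom{n+\ell-2k+|\iota(R)|}{|\iota(R)|}$, a polynomial of degree $\le 2k$ in the number $\ell$ of decoys, and reading off the constant term $\alpha_M\cdot\#\Sub(M_k\to G)$ (your $\lambda$ is their $\alpha_M$) from $2k{+}1$ evaluations.
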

\begin{proof}

For $\ell\in\mathbb{N}$, let $G^{(\ell)}$ be defined as the following
supergraph of $G$. To avoid notational overhead, we write $F[X]$
for $F[X\cap V(F)]$.
\begin{figure}[t]
\centering{}\includegraphics[width=0.8\textwidth]{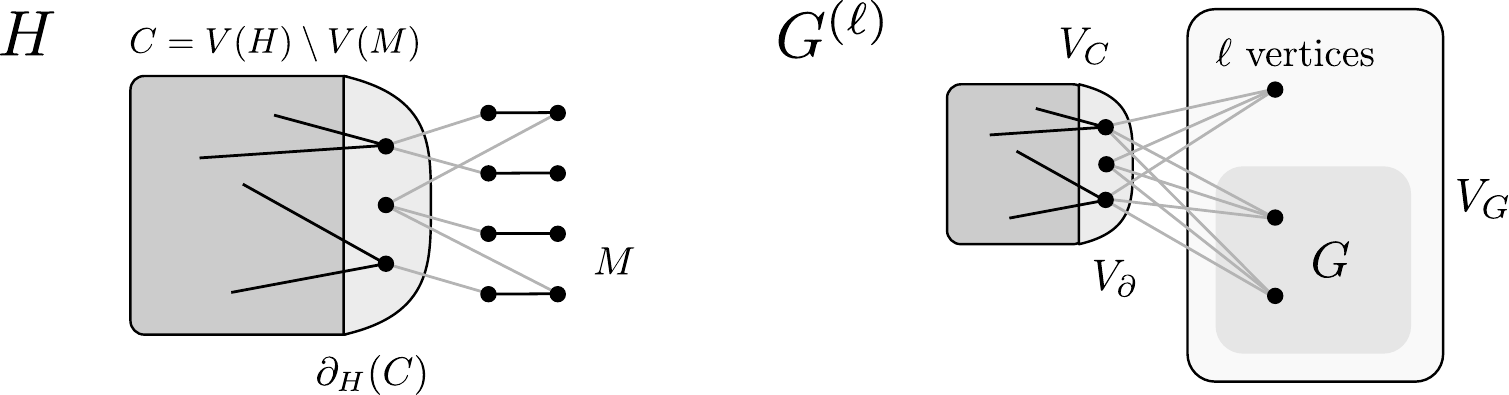}\caption{A $k$-matching gadget $(H,C)$, and the graph $G^{(\ell)}$ constructed
from $(H,C)$ and some graph $G$.}
\end{figure}

\begin{enumerate}
\item Starting from the empty graph, place a copy of $G$ into $G^{(\ell)}$
and add $\ell$ isolated vertices. 
Let $V_{G}$ denote the set that contains the isolated vertices and the vertices of the $G$-copy.
\item Add a copy of $H[C]$ on a vertex set $V_{C}$ disjoint from $V_{G}$.
Denote the copy of $\partial_{H}(C)$ by $V_{\partial}\subseteq V_{C}$.
\item Add all edges between $V_{\partial}$ and $V_{G}$, call this edge
set $I$.
\end{enumerate}
Using the principle of inclusion and exclusion, we will count the
copies of $H$ in $G^{(\ell)}$ that fully use $G^{(\ell)}[V_C]$ and
where every vertex of $V_{\partial}$ has an edge to $V_G$. If such a copy
$F$ of $H$ in $G^{(\ell)}$ additionally satisfies that $F[V_G]$ contains no isolated vertices, then
it can be observed that conditions \ref{i:cisolated}--\ref{i:boundary}
imply that $F[V_G]$ is actually a $k$-matching. Therefore, the number
of such copies can be put into relationship with the number of
$k$-matchings in $G$. To calculate the number of copies with this
additional restriction, we count the number of copies for different
values of $\ell$ (which correspond to different numbers of introduced isolated vertices)
and use an interpolation argument to isolate the contribution of those
copies where $F[V_G]$ has no isolated vertices.

We first aim at determining the size of
\[
T(\ell)=\{F\in\Sub(H\to G^{(\ell)})\mid F[V_{C}]\simeq H[C]\:\wedge\:\forall v\in V_{\partial}:\deg_{I\cap F}(v)>0\},
\]
that is, where the copy $F$ fully uses $G^{(\ell)}[V_C]$ and every
vertex of $V_{\partial}$ has a neighbor outside $V_C$ in the copy.
This will be achieved by inclusion-exclusion in the following claim.
Building upon this, we will later use interpolation to count those
$F\in T(\ell)$ where $F[V_{G}]$ additionally contains no isolated
vertices. (Note that $\ell$ is irrelevant if this additional property
holds.)
\begin{claim}
\label{cl:enforce-C}
We can compute $\#\mathsf{T}(\ell)$ with at most $2^{\mathcal O (t^2)}$ oracle calls to $\Sub(H\to G')$, where $G'$ is a subgraph of $G^{(\ell)}$.
\end{claim}
\begin{proof}
Observe that for $F\in\Sub(H\to G^{(\ell)})$, the condition $F[V_{C}]\simeq H[C]$
is equivalent to requiring that $G^{(\ell)}[V_{C}]\subseteq F$. For
subgraphs $X\subseteq G^{(\ell)}[V_{C}]$ and subsets $Y\subseteq V_{\partial}$,
let 
\[
\mathcal{A}_{X,Y}=\{F\in\Sub(H\to G^{(\ell)})\mid F[V_{C}]\subseteq X\ \wedge\ \forall v\in V_{\partial}:\deg_{I\cap F}(v)>0\Rightarrow v\in Y\}.
\]
Note that the number of sets $\mathcal{A}_{X,Y}$ is bounded by $2^{\mathcal O (t^2)}$.
We have $\mathcal{A}_{X,Y}\simeq\Sub(H\to G_{X,Y}^{(\ell)})$, where
$G_{X,Y}^{(\ell)}$ is obtained from $G^{(\ell)}$ as follows: Delete
all edges and vertices from $G^{(\ell)}[V_{C}]$ not contained in $X$, and
delete all edges in $I$ incident with vertices from $V_\partial \setminus Y$.
With $X^{*}:=G'[V_{C}]$ and $Y^{*}:=V_{\partial}$, it then holds
that 
\[
\mathsf{T}(\ell)=\mathcal{A}_{X^{*},Y^{*}}\setminus\bigcup_{X\subsetneq X^{*},Y\subsetneq Y^{*}}\mathcal{A}_{X,Y}.
\]
The number $\#\Sub(H\to G_{X,Y}^{(\ell)})$ can be computed by an
oracle call. Hence $\#\mathsf{T}(\ell)$ can be computed by the inclusion-exclusion formula \eqref{eq:incl-excl}. To compute the size of the intersection of different $\mathcal{A}_{X,Y}$'s, we can use $\mathcal{A}_{X,Y}\cap\mathcal{A}_{X',Y'}=\mathcal{A}_{X\cap X',Y\cap Y'}$.
\cqed\end{proof}
Consider the graphs that can be written as $H-C'$, where $C'\subseteq V(H)$
is such that $H[C]\simeq H[C']$ via an isomorphism $f$ satisfying
$(\mathrm{C2})$-$(\mathrm{C3})$ of Definition~\ref{def:gadget}. Let $\mathcal{R}$
denote the set of such graphs, modulo isomorphism. 
Note that every $R\in \mathcal R$ is a graph on $2k$ vertices. Slightly abusing
notation, we will henceforth write $A\in\mathcal{R}$ if there is
some $A'\in\mathcal{R}$ with $A\simeq A'$. 
\begin{claim}
If $F\in T(\ell)$, then $F[V_{G}]\in\mathcal{R}$.\end{claim}
\begin{proof}
Let $F\in T(\ell)$ and let $f$ be an isomorphism from $F$ to $H$
that maps $F[V_{C}]\simeq H[C]$ to $H[C']$ for some $C'\subseteq V(H)$.
We claim that $f$ satisfies $(\mathrm{C2})$-$(\mathrm{C3})$, which
implies $F[V_{G}]\in\mathcal{R}$.

Condition $(\mathrm{C2})$ holds as $F[V_{G}]$ is contained in (a copy
of) the bipartite graph $G$.  Concerning condition $(\mathrm{C3})$, as
$F\in T(\ell)$, each vertex in $V_{\partial}$ is forced to be adjacent
to at least one vertex in $F[V_{G}]$, and thus
$\partial_{F}(V_{C})\supseteq V_{\partial}$.  By construction of
$G^{(\ell)}$, only vertices in $V_{\partial}$ can be adjacent to
$F[V_{G}]$, thus $\partial_{F}(V_{C})\subseteq V_{\partial}$.  The
isomorphism $f$ must therefore map $V_{\partial}$ to
$\partial_{H}(C')$.  \cqed\end{proof} Let $\iota(R)$ denote the set of
isolated vertices of $R$. If $R\in\mathcal{R}$ satisfies
$\iota(R)=\emptyset$, then $R\simeq M$: This holds because then the
isomorphism that establishes $R\in\mathcal{R}$ satisfies
$(\mathrm{C1})$-$(\mathrm{C3})$ of Definition~\ref{def:gadget}. However, since
$\iota(R)=\emptyset$ cannot generally be assumed, it may occur that
$R\not\simeq M$. In the remainder of this proof, we will therefore
``interpolate out'' the isolated vertices. To this goal, we first
describe $\#\mathsf{T}(\ell)$ as a weighted sum over $\mathcal{R}$,
which will later be interpreted as a polynomial to perform
interpolation on.

For $R\in\mathcal{R}$, let $\alpha_{R}\in\mathbb{N}$ be defined
as follows: Given the vertex-disjoint union of $H[C]$ and $R$, let
$\alpha_{R}$ denote the number of ways of adding edges between $\delta_{H}(C)$
and $V(R)$ such that the resulting graph is isomorphic to $H$. Note
that we can compute $\alpha_{R}\in\mathbb{N}$ by brute force in time
$2^{\mathcal O (t^2)}$, and that $\alpha_{M}>0$ since $H$ contains $M$ as an induced subgraph.
\begin{claim}
With the abbreviation $\pure(R):=R-\iota(R)$, it holds that 
\begin{equation}
\label{eq:pure-isol-binomial}
\#\mathsf{T}(\ell)=\sum_{R\in\mathcal{R}}\alpha_{R}\cdot\#\Sub(\pure(R)\to G)\cdot{n+\ell-2k+|\iota(R)| \choose |\iota(R)|}.
\end{equation}
\end{claim}
\begin{proof}
By the previous claim, $F[V_{G}]\in\mathcal{R}$ for every $F\in T(\ell)$.
This induces a partition of $\mathsf{T}(\ell)$ according to the isomorphism
type of $F[V_{G}]$ for $F\in\mathsf{T}(\ell)$: For $R\in\mathcal{R}$,
let $\mathcal{A}_{R}$ denote the set of $F\in T(\ell)$ with $F[V_{G}]\simeq R$.
Since the sets $\mathcal{A}_{R}$ partition $T(\ell)$, it holds that 
\begin{equation}
\label{eq:classesA}
\#\mathsf{T}(\ell)=\sum_{R\in\mathcal{R}}\#\mathcal{A}_{R}.
\end{equation}
We show \eqref{eq:pure-isol-binomial} by computing $\#\mathcal{A}_{R}$ for $R \in \mathcal R$ and observing that it is equal to the corresponding summand in \eqref{eq:pure-isol-binomial}.
To this goal, observe first that every $F\in\mathcal{A}_{R}$ can be written as a union of a uniquely determined 
$F'\in\Sub(\pure(R)\to G)$ and the following extensions.
\begin{enumerate}[label=\roman*]
\item the graph $G^{(\ell)}[V_{C}] \simeq H[C]$, which must be contained in $F$ since $F\in T(\ell)$, 
\item a set of $|\iota(R)|$ isolated vertices in $V_G$, which may be chosen from vertices of the $G$-copy as well as from the $\ell$ added isolated vertices, and
\item a set of edges from $I$.
\end{enumerate}

For fixed $F'\in\Sub(\pure(R)\to G)$, we determine the number of such extensions.
In (i), there is only one choice.
In (ii), we extend by adding isolated vertices in $V_{G}$. 
As $V_{\partial}$ connects to all of $V_{G}$, these $|\iota(R)|$ isolated vertices can be chosen arbitrarily among the $n+\ell-|V(\pure(R))|$ vertices
of $V_{G}$ not already contained in $F'[V_{G}]$. Since $\pure(R)$ has exactly $2k-|\iota(R)|$ vertices,
the number of such extensions is given by the binomial coefficient
in (\ref{eq:pure-isol-binomial}). 

So far, this procedure has fixed $F[V_{G}]$, and we reach step (iii), where we extend the vertex-disjoint parts
$F[V_{G}]\simeq R$ and $F[V_{C}]\simeq H[C]$ to a graph $F$
by including edges between $V_{\partial}$ and $V_{G}$. The number
of possible extensions in this step is $\alpha_{R}$, by definition of $\alpha_{R}$. 
Thus, each $F'\in\Sub(\pure(R)\to G)$ can be extended to some $F\in\mathcal{A}_{R}$ by $\alpha_R \cdot {n+\ell-2k+|\iota(R)| \choose |\iota(R)|}$ possible extensions, and each $F$ is the extension of a unique $F'$, which shows the claim.
\cqed\end{proof}

The value $\#\mathsf{T}(\ell)$ can be interpreted as a polynomial
$p\in\mathbb{Z}[x]$ of maximum degree $2k$ in the indeterminate
$x:=n+\ell-2k$. Note that $n$ and $k$ are fixed by the input while
$\ell$ can be varied. 

Considering the binomial coefficient in (\ref{eq:pure-isol-binomial}) as a polynomial in $x$,
we can observe that $R\in\mathcal{R}$ with differing $|\iota(R)|$ yield polynomials of different degrees.
In particular, the binomial coefficient (as a polynomial in $x$) has degree $0$ (and is then equal to $1$) iff $|\iota(R)|=0$, i.e., when $R$ contains no isolated vertices.

We consider again the polynomial $p$ and observe that it can be interpolated by evaluating
$\#\mathsf{T}(0),\ldots,\#\mathsf{T}(2k)$,
where each evaluation can be performed with $2^{\mathcal O (t^2)}$ oracle calls by Claim~\ref{cl:enforce-C}.
This yields the representation of $p$ in the standard monomial basis $\{x^{i}\mid i\in\mathbb{N}\}$.
We then represent $p$ as a linear combination of $\mathcal{B}:=\{{x+i \choose i}\mid i\in\mathbb{N}\}$.
Since the polynomials in $\mathcal{B}$ have different degrees, $\mathcal{B}$
is linearly independent. Thus, the coefficients of $p$ over the basis
$\mathcal{B}$ are uniquely determined and we can extract the constant
coefficient $c_{0}=\alpha_{M}\cdot\#\Sub(M\to G)$. Recall that $\alpha_{M}>0$ since $M$ is an induced matching in $H$.
\end{proof}
\begin{remark}
The final interpolation step could equivalently be achieved by solving
a linear system of equations whose system matrix consists of the binomial
coefficients in (\ref{eq:pure-isol-binomial}).
In fact, this system appears implicitly in the argument above.
\end{remark}

This readily implies the reduction.

\begin{theorem}
\label{th:reduce}
If $\H$ is a recursively enumerable graph class that contains a $k$-matching gadget for every $k\in \mathbb N$,
then $\pSub(\H)$ is $\sharpWone$-complete.
\end{theorem}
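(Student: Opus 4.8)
The plan is to deduce the theorem from the machinery already in place: Lemma~\ref{lem:reduce} supplies, given a single $k$-matching gadget, an fpt-bounded list of oracle queries to $\pSub(\H)$ that compute the number of $k$-matchings in a bipartite graph, and Theorem~\ref{thm:hard-matchings} tells us that counting $k$-matchings in bipartite graphs ($\pMatch$) is $\sharpWone$-hard. So I would prove $\sharpWone$-hardness of $\pSub(\H)$ by exhibiting a parameterized Turing reduction $\pMatch\leqT\pSub(\H)$; membership in $\sharpWone$ is standard (by Remark~\ref{rem:sub-auto-emb} we have $\pSub(\H)\leqT\pEmb(\H)$, and $\#\Emb(H\to G)$ is the number of satisfying assignments of an existential first-order formula over $G$ with $|V(H)|$ variables, so $\pEmb(\H)\in\sharpWone$).

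Given an instance $(G,k)$ of $\pMatch$ with $G$ bipartite, the first step is to compute, from $k$ alone, a $k$-matching gadget $(H,M)$ with $H\in\H$. Since $\H$ is recursively enumerable, I would enumerate its members $H_1,H_2,\dots$ and, for each $H_i$ and each subset $M\subseteq V(H_i)$, decide whether $(H_i,M)$ is a $k$-matching gadget in the sense of Definition~\ref{def:gadget}. This is a finite check: with $C=V(H_i)\setminus V(M)$ one verifies that $M$ induces a $k$-matching and that, over the finitely many $C'\subseteq V(H_i)$ and the finitely many isomorphisms $f\colon H_i[C]\to H_i[C']$, whenever $f$ satisfies \ref{i:cisolated}--\ref{i:boundary} the graph $H_i\setminus C'$ is a $k$-matching. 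By the hypothesis of the theorem a gadget exists, so this search halts and outputs some $(H,M)$; write $t=t(k):=|V(H)|$, a quantity depending only on $k$.

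With the gadget in hand I would invoke Lemma~\ref{lem:reduce}: the number of $k$-matchings in $G$, namely $\#\Sub(M\to G)$, is recoverable from $2k\cdot 2^{\mathcal{O}(t(k)^2)}$ oracle queries of the form $\#\Sub(H\to G')$, each an instance of $\pSub(\H)$ with parameter $|V(H)|=t(k)$; the non-oracle computation is clearly $f(k)\cdot n^{\mathcal{O}(1)}$. (Bipartiteness of $G$ is used here exactly as in the proof of Lemma~\ref{lem:reduce}, to guarantee condition \ref{i:cbipartite} for the relevant copies of $H$, which is why the reduction source must be $\pMatch$ on bipartite graphs.) Both the number of oracle calls and the parameter of each call are bounded by computable functions of $k$ only, so this is a parameterized Turing reduction $\pMatch\leqT\pSub(\H)$; composing it with the reduction from $\pClique$ witnessing $\sharpWone$-hardness of $\pMatch$ (Theorem~\ref{thm:hard-matchings}) yields $\pClique\leqT\pSub(\H)$, and with membership we conclude that $\pSub(\H)$ is $\sharpWone$-complete.

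Since all the substantive content is packaged into Lemma~\ref{lem:reduce} and Theorem~\ref{thm:hard-matchings}, the only genuine subtlety is the parameter bookkeeping. It is essential that locating the gadget is a $k$-only preprocessing step whose output graph $H$ has size $t(k)$ independent of $|V(G)|$: this is precisely what makes $2k\cdot 2^{\mathcal{O}(t(k)^2)}$ a legitimate fpt bound on the number of oracle calls and $t(k)$ a legitimate bound on the parameter supplied to the oracle, so that the whole procedure qualifies as a parameterized Turing reduction. One must also note that the gadget search terminates, which is exactly guaranteed by the theorem's hypothesis that a $k$-matching gadget exists for every $k$.
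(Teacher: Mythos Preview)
Your proposal is correct and follows essentially the same approach as the paper: enumerate $\H$ and subsets $M$ to locate a $k$-matching gadget by brute force, then invoke Lemma~\ref{lem:reduce} to reduce $\pMatch$ to $\pSub(\H)$, with the parameter bounded by the computable function $t(k)=|V(H)|$. You additionally sketch the membership argument (via Remark~\ref{rem:sub-auto-emb} and an existential first-order formulation of $\pEmb$), which the paper's own proof omits despite the theorem asserting completeness.
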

\begin{proof}
We reduce $\pMatch \leqFpt \pSub(\H)$: Given a bipartite graph $G$ and $k \in \mathbb N$, we wish to compute the number of $k$-matchings in $G$.

For $H$ and $M \subseteq V(H)$, we can test whether $(H,M)$ is a $k$-matching gadget by checking all isomorphisms appearing in Definition~\ref{def:gadget} by brute force. The runtime required for this step depends only on $|V(H)|$.
To determine the number of $k$-matchings in $G$, enumerate the graphs $H \in \H$ and subsets $M \subseteq H$ until we find a $k$-matching gadget $(H,M)$, which is guaranteed to exist by assumption. Then invoke Lemma~\ref{lem:reduce} with the $k$-matching gadget $(H,M)$.

Let $g(k) = |V(H)|$, where $H$ is the $k$-matching gadget found by the enumeration procedure above.
Then the function $g$ is computable, which implies together with Lemma~\ref{lem:reduce} that the reduction to $\pSub(\H)$ is indeed a parameterized Turing reduction.
\end{proof}

\section{Bounded-degree graphs}
\label{sec:bound-degr-graphs}
The goal of this section is to prove Theorem~\ref{thm:gadgetexists},
the existence of $k$-matching gadgets, for the special case of graph
classes $\H$ with bounded maximum degree and unbounded vertex-cover
number, or equivalently, containing graphs with arbitrarily large
matchings. The results in Sections~\ref{sec:graphs-with-no} and
\ref{sec:bound-treew-graphs} for other graph classes are based on this
result for bounded-degree graphs. The basic idea is that in
bounded-degree graphs we are close to the situation described by
Lemma~\ref{lem:nocommon}: clearly, the two endpoints of an edge in the
matching can have only a bounded number of common neighbors; in this
sense property (P2) ``almost holds.'' We choose a candidate $(H,M)$
for the $k$-matching gadget and see how it can fail. If for every $C'$
satisfying \ref{i:cisolated}--\ref{i:boundary}, the graph $H\setminus
C'$ still has many components of size 2 (so it is ``almost a
matching''), then we can extract a correct $k'$-matching gadget for
some relatively large $k'<k$. Suppose therefore that $(H,C)$
``spectacularly fails'': $H\setminus C'$ has only few components of
size 2. As $H\setminus C'$ has no isolated vertices, this is only
possible if $H\setminus C'$ has many more edges than the $k$-matching
$M$. Then we argue that now the total degree on the boundary of $C'$
is much smaller than on the boundary of $C$, and we can use this to
find an induced matching in $H\setminus C'$ where the endpoints of the
edges have strictly fewer common neighbors than in $M$. As the graph
has bounded degree, repeating this argument a constant number of times
eventually leads to a matching where the endpoints of the edges have
no common neighbors, hence Lemma~\ref{lem:nocommon} can be invoked.

In a bounded-degree graph, any sufficiently large set of edges
contains a large matching and in fact a large induced matching: we can
greedily select edges and we need to throw away only a bounded number
of edges after each selection. Moreover, in order to move closer to
the situation described in Lemma~\ref{lem:nocommon}, we may also
satisfy the requirement that the selected edges have no common
neighbors (but it is possible that the two endpoints of an edge have
common neighbors).
\begin{lemma}\label{lem:makeinduceddegree}
  Let $F$ be a set of edges in a graph $G$ with
  maximum degree $D$.
\begin{enumerate}
\item There is an induced matching $M'\subseteq F$ of size at least
  $|F|/(2D^2)$.
\item There is an induced matching $M''\subseteq F$ of size at least
  $|F|/(2D^3)$ such that every vertex of
  $V(G)\setminus V(M'')$ is adjacent to at most one edge of $M''$.
\end{enumerate}
\end{lemma}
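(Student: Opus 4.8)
The plan is to build both matchings by a single greedy loop: repeatedly pick an arbitrary edge of $F$ that is still available, add it to the matching, and then delete from $F$ a suitably chosen bounded-size ``neighbourhood'' of that edge; the only difference between the two parts is which neighbourhood we delete. Since $G$ has maximum degree $D$, the number of edges deleted per step is $O(D^2)$ for part~1 and $O(D^3)$ for part~2, which yields the claimed sizes.

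For part 1, after picking $e=uv\in F$ I would add $e$ to $M'$ and delete from $F$ every edge having an endpoint in $N[u]\cup N[v]$ (note $e$ itself is deleted). Because $u\in N[v]$ and $v\in N[u]$, we have $|N[u]\cup N[v]|\le 2(D+1)-2=2D$, and every vertex has degree at most $D$, so at most $2D^2$ edges are deleted in each step; hence the loop runs for at least $|F|/(2D^2)$ steps. The set $M'$ so produced is an induced matching: if an edge $e'=u'v'$ is picked after $e$, then $e'$ survived the deletion triggered by $e$, so no endpoint of $e'$ lies in $N[u]\cup N[v]$; in particular $e$ and $e'$ share no vertex and are joined by no edge of $G$. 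Applying this to every pair of selected edges shows that $V(M')$ induces exactly $M'$.

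For part 2, I would run the same loop but, after picking $e=uv$, delete every edge of $F$ with an endpoint in $T:=\bigcup_{w\in N(u)\cup N(v)}N(w)$. First, $N[u]\cup N[v]\subseteq T$ (for instance $u\in N(v)$ with $v\in N(u)\cup N(v)$, and any neighbour $x$ of $u$ satisfies $x\in N(u)\subseteq T$), so the argument of part~1 still gives that $M''$ is an induced matching, and again $e$ itself is deleted. The extra property also holds: suppose some $w\in V(G)\setminus V(M'')$ were adjacent to two distinct edges $e_i,e_j\in M''$, say $e_i$ picked before $e_j$. Then $w\in N(u_i)\cup N(v_i)$, so $N(w)\subseteq T_i$, and since $w$ is adjacent to an endpoint of $e_j$ that endpoint lies in $T_i$, contradicting the fact that $e_j$ survived the deletion step of $e_i$. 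Finally $|T|\le |N(u)\cup N(v)|\cdot D\le 2D\cdot D=2D^2$, so at most $2D^3$ edges are deleted per step and $|M''|\ge |F|/(2D^3)$.

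The whole argument is essentially bookkeeping, and I do not expect a real obstacle. The one point that deserves care is the choice of the deleted set in part~2: it must simultaneously (i) contain the induced-matching forbidden set $N[u]\cup N[v]$, (ii) block every possible future common neighbour of the selected edge, and (iii) still have size $O(D^2)$ so that the per-step deletion count stays $O(D^3)$ rather than $O(D^4)$. Taking $T=\bigcup_{w\in N(u)\cup N(v)}N(w)$ is exactly the minimal choice that achieves all three.
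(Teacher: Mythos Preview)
Your proof is correct and follows essentially the same greedy approach as the paper: select an edge, delete all edges meeting the closed neighbourhood of its endpoints (part~1) or the distance-$2$ ball around its endpoints (part~2), and bound the number of deletions per step by $2D^2$ and $2D^3$ respectively. Your explicit description of the set $T=\bigcup_{w\in N(u)\cup N(v)}N(w)$ is exactly the distance-$2$ ball around $\{u,v\}$, so the two arguments coincide.
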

\begin{proof}
  (1) Let $uv$ be an edge of $F$. As the graph has maximum degree at
  most $D$, there are less than $2D^2$ edges having an endpoint in the
  closed neighborhood of $\{u,v\}$. We perform the following greedy
  procedure: we select an arbitrary edge $uv$ of $F$ into $M'$ and
  then remove every edge from $F$ that has an endpoint in the closed
  neighborhood of $\{u,v\}$. In each step, we reduce the size of $F$
  by at most $2D^2$. Therefore, the procedure runs for at least
  $|F|/(2D^2)$ steps, producing an induced matching of the required
  size.

  (2) Observe that there are less than $2D^3$ edges of $F$ at distance
  at most 2 from $uv$. Then a greedy algorithm can produce a set of
  edges of size at least $|F|/(2D^3)$ such that the edges are at
  pairwise distance at least 3, that is, $M'$ is an induced matching
  and every vertex outside $M$ is adjacent to at most one edge of $M$.
\end{proof}
For bounded-degree graphs, Lemma~\ref{lem:makeinduceddegree} implies
that there is not much difference between having a large set of edges,
a large matching, a large induced matching, or a large induced
matching satisfying the requirement that every vertex outside the
matching is adjacent to at most one edge of the matching.

\begin{lemma}\label{lem:gadgetmainlow}
There is a function $\fdegree(k_0,D)$ such that the following holds. If $H$ is a graph with maximum degree at most $D$ and contains a matching of size at least $\fdegree(k_0,D)$, then there is a $k_0$-matching gadget $(H,M_0)$.
\end{lemma}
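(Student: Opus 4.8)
The plan is to start from a large matching in $H$, extract a large induced matching $M$ that already satisfies property (P1), and then repeatedly improve $M$ until Lemma~\ref{lem:nocommon} applies. By Lemma~\ref{lem:makeinduceddegree}(2), a matching of size $m$ in a graph of maximum degree $D$ contains an induced matching of size at least $m/(2D^{3})$ in which no vertex outside the matching is adjacent to two of its edges, i.e.\ in which (P1) holds. So it suffices to find a function $g(D)$ such that, for every induced matching $M$ of $H$ with (P1) and $|M|\ge g(D)\cdot k_0$, one can produce an induced matching $M_0$ for which $(H,M_0)$ is a $k_0$-matching gadget. To track progress, define the potential $\Phi(M):=\max_{e\in M}c_e$, where $c_e$ is the number of common neighbours, inside $V(H)\setminus V(M)$, of the two endpoints of $e$; note $0\le\Phi(M)\le D-1$. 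If $\Phi(M)=0$, then $M$ satisfies (P1) and (P2), so $(H,M)$ is a $|M|$-matching gadget by Lemma~\ref{lem:nocommon}, and Lemma~\ref{lem:subgadget} lets us shrink it to a $k_0$-matching gadget as soon as $|M|\ge k_0$. Likewise, if at any point $(H,M)$ happens to be a gadget with $|M|\ge k_0$, we are done.

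So assume $M$ has (P1), $|M|=k$, $\Phi(M)=d\ge 1$, and $(H,M)$ is not a $k$-matching gadget. Fix a \emph{worst} witness: an isomorphism $f$ from $H[C]$ to $H[C']$, with $C=V(H)\setminus V(M)$, satisfying \ref{i:cisolated}--\ref{i:boundary} such that $R:=H\setminus C'$ is not a perfect matching, and among all such witnesses choose one for which the number $q$ of size-$2$ components of $R$ is as small as possible. Since $f$ is an isomorphism we have $|E(H[C])|=|E(H[C'])|$, so a global edge count in $H$ (using that $M$ is induced, hence $H[V(M)]$ has exactly $k$ edges) gives the identity $\beta'=\beta+k-|E(R)|$, where $\beta:=|E_H(V(M),C)|$ and $\beta':=|E_H(\partial_H(C'),V(R))|$ are the total boundary degrees of $C$ and of $C'$. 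Because $R$ has $2k$ vertices, no isolated vertex, and (being no perfect matching) at least one component on $\ge 3$ vertices, a short count yields $|E(R)|\ge(4k-q)/3$, hence the key inequality $\beta'\le\beta-(k-q)/3$: a deficiency of size-$2$ components forces the boundary degree of $C'$ to drop below that of $C$.

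Now split according to a threshold on $q$ depending only on $D$. If $q$ is small (the ``spectacular failure'' case), then $|E(R)|$ exceeds $k$ by a linear amount while $\beta'$ is correspondingly smaller than $\beta$, so $R$ is a dense bipartite graph of maximum degree $\le D$ most of whose vertices send few edges to $C'$; greedily selecting a large induced matching of $R$ among such low-external-degree vertices — which is automatically induced in $H$, since its vertices lie in $V(R)$, where $H$ and $R$ agree — and then restoring (P1) via one more pass of Lemma~\ref{lem:makeinduceddegree}(2) and thinning (which can only decrease common-neighbour counts), produces an induced matching $M^{*}$ with (P1), $\Phi(M^{*})\le d-1$, and $|M^{*}|\ge k/c(D)$ for an explicit function $c$; we then recurse on $M^{*}$. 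If instead $q$ is large, then by the minimality of $q$ \emph{every} witness of $(H,M)$ leaves at least $q$ size-$2$ components, and (by restricting witnesses as in the proof of Lemma~\ref{lem:subgadget}) the same ``almost a matching'' behaviour persists for every sub-matching of $M$; exploiting this we may delete a bounded-in-$D$ number of edges from $M$ to reach a genuine $k'$-matching gadget with $k'\ge k_0$, finishing the construction.

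The iteration terminates: $\Phi$ strictly decreases and stays $\ge 0$, so after at most $D$ rounds we either meet a gadget directly or reach $\Phi=0$ and invoke Lemma~\ref{lem:nocommon}. Each round — including the initial extraction — costs only a fixed multiplicative factor, depending on $D$, in the size of the current matching, so setting $\fdegree(k_0,D):=k_0\cdot(2D^{3})\cdot c(D)^{D}$ guarantees the final matching has size at least $k_0$, whereupon Lemma~\ref{lem:subgadget} trims it to exactly $k_0$; this also establishes Theorem~\ref{thm:gadgetexists} in the bounded-degree case. The main obstacle is the case analysis in the third paragraph: turning the boundary-degree drop $\beta'\le\beta-(k-q)/3$ into an actual reduction of $\Phi$ in the spectacular case, and turning the ``almost a matching'' structure into a genuine gadget in the large-$q$ case, both require careful quantitative bookkeeping, in particular in the regime where $\beta$ is large while $q$ is only moderately large.
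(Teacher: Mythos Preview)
Your overall architecture matches the paper's proof closely: induction on the maximum number of common neighbours (your $\Phi$, the paper's $c$), the base case via Lemma~\ref{lem:nocommon}, choosing a bad witness $f$ minimising the number $q$ of size-$2$ components of $H\setminus C'$, and a two-way split on $q$. The small-$q$ (``spectacular failure'') case is essentially the paper's argument, though you are vague about exactly why the new edges have strictly fewer common neighbours; the paper pins this down by observing that any common neighbour of an edge of $H\setminus C'$ must lie in $B'_{\ge 2}:=\{v\in\partial_H(C'):v\text{ has }\ge 2\text{ edges to }V(H)\setminus C'\}$ (bipartiteness forbids common neighbours inside $H\setminus C'$), and the boundary-degree drop bounds the total number of edges from $B'_{\ge 2}$, hence the number of vertices of $H\setminus C'$ with $\ge c$ neighbours in $B'_{\ge 2}$.

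The genuine gap is your large-$q$ case. You propose to pass to a \emph{sub-matching} $M'\subseteq M$, argue via a Lemma~\ref{lem:subgadget}-style restriction that every bad witness for $(H,M')$ still leaves many size-$2$ components, and then ``delete a bounded-in-$D$ number of edges from $M$'' to reach a gadget. None of this works as stated: restricting a bad witness $f_0$ for $(H,M')$ to $C=V(H)\setminus V(M)$ does give a witness for $(H,M)$, but the size-$2$-component count of $H\setminus f_0(C)$ bears no controlled relation to that of $H\setminus C_0'$, and there is no reason a bounded (in $D$) number of deletions from $M$ should ever produce a gadget. The paper's move is different and is the key idea you are missing: take $M_0$ to be the union of the $q\ge k_0$ size-$2$ components of $H\setminus C'$ (this is \emph{not} a sub-matching of $M$ in general), and show directly that $(H,M_0)$ is a gadget. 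If it were not, a bad witness $f_0$ for $(H,M_0)$ composes with the original $f$ to give $f^*=f_0\circ f:H[C]\to H[C^*]$; one checks \ref{i:cisolated}--\ref{i:boundary} for $f^*$, and since $H[C_0'\setminus C^*]\simeq H[C_0\setminus C']$ has no size-$2$ components by the choice of $M_0$ while $H\setminus C_0'$ is not a matching, one gets $\kappa_2(H\setminus C^*)<q$, contradicting the minimality of $q$. This composition-and-minimality argument is what actually closes the case; your ``bounded deletions from $M$'' plan does not.
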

\begin{proof}
We prove the following statement by induction on $c$:
\begin{quote}
  There is a function  $\fdegree'(c,k_0,D)$ such that the following
  holds. If $H$ is a graph with maximum degree at most $D$ and having a set $F$ of at least
  $\fdegree'(c,k_0,D)$ edges such that $u$ and $v$ have at most $c$
  common neighbors for every $uv\in F$, then there is a $k_0$-matching gadget $(H,M_0)$.
\end{quote}
If this statement is true for every $c$, then we can set
$\fdegree(k_0,D)=\fdegree'(D,k_0,D)$: if the graph has maximum degree
of $H$ is $D$, then it is clear that every edge of a matching has the
property that the endpoints have at most $D$ common neighbors.

For $c=0$, we can set $\fdegree'(0,k_0,D)=2k_0D^3$. Then Lemma~\ref{lem:makeinduceddegree}(2) implies that $F$ contains an induced matching $M_0$ of size $k_0$ such that every vertex of $V(H)\setminus V(M_0)$ is adjacent to at most one edge of $M_0$. Furthermore, $c=0$ implies that every vertex of $V(H)\setminus V(M_0)$ can be adjacent to at most one endpoint of at most edge of $M_0$. Therefore, Lemma~\ref{lem:nocommon} implies that $(H,M_0)$ is a $k_0$-matching gadget.

Suppose now that the statement is true for $c-1$ with some value of
$\fdegree'(c-1,k_0,D)$; we show that the statement is true for $c$ with 
\[
\fdegree'(c,k_0,D)=2D^3(5c\cdot \fdegree'(c-1,k_0,D)+100k_0).
\]
If $F$ has at least this size, then Lemma~\ref{lem:makeinduceddegree}(2)
implies that there is an induced matching $M\subseteq F$ of size at least
$5c\cdot \fdegree'(c-1,k_0,D)+100k_0$ such that every vertex in
$V(H)\setminus V(M)$ is adjacent to at most one edge of $M$.

If $(H,M)$ is a matching gadget, then we are done by Lemma~\ref{lem:subgadget} as $|M| \geq k_0$. Otherwise,
let $C:=V(H)\setminus V(M)$ and let $f$ be an isomorphism from $H[C]$
to $H[C']$ satisfying \ref{i:cisolated}--\ref{i:boundary} of
Definition~\ref{def:gadget} such that $H\setminus C'$ is not a
matching. For any graph $G$, let us denote by $\ktwo(G)$ the number of components of $G$
with exactly two vertices. Let us choose $f$ such that
$k':=\ktwo(H\setminus C')$ is minimum possible.
\begin{claim}\label{cl:minmatching}
If $k'\ge k_0$, then there is a $k_0$-matching gadget $(H,M_0)$.
\end{claim}
\begin{proof}
  Let $M_0$ be the induced matching obtained as the union of all the
  $k'=\ktwo(H\setminus C')\ge k_0$ components of $H\setminus C'$ having
  two vertices each (see Figure~\ref{fig:k2min}). We show that $(H,M_0)$ is a $k'$-matching gadget; then the existence of a $k_0$-matching gadget
  follows from Lemma~\ref{lem:subgadget}.

\begin{figure}
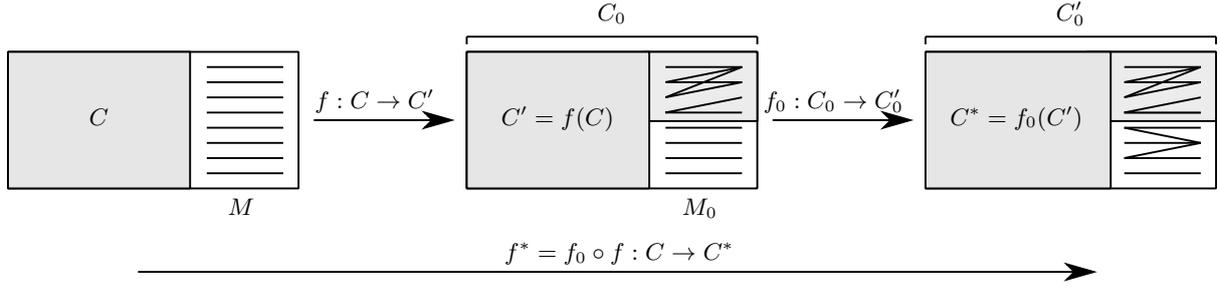

\begin{center}
{\footnotesize \svg{\linewidth}{k2min}}
\caption{Proof of Claim~\ref{cl:minmatching} in Lemma~\ref{lem:gadgetmainlow}.}\label{fig:k2min}
\end{center}
\end{figure}
Let $C_0:=V(H)\setminus V(M_0)$; we have $C'\subseteq C_0$.  Let us
point out that there is no edge between $V(M_0)$ and $V(H)\setminus
(C'\cup M_0)=C_0 \setminus C'$: the edges of $M_0$ are components of
$V(H)\setminus C'$.  Suppose that $f_0$ is an isomorphism from
$H[C_0]$ to some $H[C'_0]$ showing that $(H,M_0)$ is not a
$k'$-matching gadget: $f_0$ satisfies properties
\ref{i:cisolated}--\ref{i:boundary}, but $H\setminus C'_0$ is not a
matching. Let $f^*=f_0 \circ f$ (that is, applying $f_0$ after $f$) be
the isomorphism from $H[C]$ to $H[C^*]$ with
$C^*=f_0(f(C))=f_0(C')\subseteq C'_0$. Let us verify that $f^*$
satisfies the three conditions of Definition~\ref{def:gadget} with
respect to $(H,M)$:
\begin{enumerate}[label=(C\arabic*)]
\item Consider a vertex $v\in H\setminus C^*$. If $v\not\in C'_0$,
  then property \ref{i:cisolated} of $f_0$ implies that $v$ is not
  isolated in $H\setminus C'_0$ and hence it is not isolated in $H\setminus C^*$ either. Suppose therefore that $v\in C'_0\setminus C^*$, that is, there is a
  $u\in C_0\setminus C'$ with $f_0(u)=v$. As $H\setminus C'$ has no
  isolated vertex, there has to be a neighbor $w$ of $u$ in
  $H\setminus C'$. If $w\in C_0\setminus C'$, then $f_0(w)$ is a
  neighbor of $f_0(u)=v$ in $H\setminus C^*$. Suppose now that $w\not
  \in C_0$, which implies that $u$ is in $\boundary_H(C_0)$. Therefore, property~\ref{i:boundary} of $f_0$ implies that $f_0(u)$ has a neighbor outside
  $C'_0$, hence $f_0(u)$ has a neighbor in $H\setminus C^*$.
\item By \ref{i:cbipartite} on $f$, the graph $H[C_0\setminus C]$ is
  bipartite. Mapping $f_0$ is an isomorphism between $H[C_0]$ and
  $H[C'_0]$ that maps $C_0\setminus C$ to $C'_0\setminus C^*$, thus
  $H[C'_0\setminus C^*]$ is bipartite as well. We have observed above
  that there is no edge between $C_0\setminus C'$ and $M_0$, thus
\ref{i:boundary} on $f_0$ implies that there is no edge between
  $C'_0\setminus C^*$ and $V(H)\setminus C'_0$. Finally,
  $V(H)\setminus C'_0$ is bipartite by \ref{i:cbipartite} on $f_0$,
  thus we get that $H\setminus C^*$ is bipartite.
\item Let $v$ be a vertex in $C$. By \ref{i:boundary} on $f$,
  we have that $v\in \boundary_H(C)$ if and only if $f(v)\in \boundary_H(C')$.
As there are no edges going from $C_0\setminus C'$ to outside $C_0$, we have $\boundary_H(C')=\boundary_H(C_0)$.
Furthermore, by
  \ref{i:boundary} on $f_0$, we have that $u\in \boundary_H(C_0)$ if and only if $f_0(u)\in \boundary_H(C'_0)$.
Putting together, we have that $v\in \boundary_H(C)$ if and only if $f^*(v)=f_0(f(v))\in \boundary_H(C'_0)$.
\end{enumerate}
We claim that $\ktwo(H\setminus C^*)<\ktwo(H\setminus C')$, contradicting the minimal choice of
$f$. 
We have stated above that
there is no edge between $C_0\setminus C'$ and $V(M_0)$; as $f_0$ is boundary preserving, it also follow that there is no edge between $C'_0\setminus C^*$ and $V(H)\setminus C'_0$.  Therefore, 
the components of $H\setminus C^*$ are exactly the components of $H[C'_0\setminus C^*]$ and the components of $H\setminus C'_0$.
Recall that $H[C_0'\setminus C^*]$
is isomorphic to $H[C_0\setminus C']$, which has no 2-vertex component by the definition of $M_0$.
As $H\setminus C'_0$ is not a matching, we have that $H\setminus C'_0$ (and hence $H\setminus C^*$) has strictly less than $|M_0|$ 2-vertex components, that is, $\ktwo(H\setminus
C^*)< \ktwo(H\setminus C')$.  \cqed\end{proof}

In the following, we suppose that $\ktwo(H\setminus C')<k_0$. Let
$s=10c\cdot \fdegree'(c-1,k_0,D)+200k_0$ be the number of vertices of
$H\setminus C$ and $H\setminus C'$. Note that $H\setminus C=M$ is an
induced matching having exactly $s/2$ edges.  We can give a lower
bound on the number of edges of $H\setminus C'$ by giving an upper
bound on the number of additional edges required to make the graph
connected.  By \ref{i:cisolated} of $f$, every component has size at
least two in $H\setminus C'$.\footnote{This the point where we
  crucially use \ref{i:cisolated} of Definition~\ref{def:gadget}.}
There are less than $k_0$ components with exactly two vertices, thus
by adding at most $k_0$ edges, we can ensure that every component contains at
least three vertices. Then there are at most $s/3$ components with at
least three vertices, hence we can connect them with $s/3$ additional
edges.  It follows that there are at least $s-(s/3+k_0)=2s/3-k_0\ge
0.6s$ edges in $H\setminus C'$ (using $s\ge 200k_0$). As the number of
edges in $H[C]$ and $H[C']$ are the same and $H\setminus C$ has $0.5s$
edges, it follows that the number of edges going out of $C'$ is
less than the number of edges going out of $C$ by at least
$0.1s$.

Let $B:=\boundary_H(C)$ and $B':=\boundary_H(C')$; by \ref{i:boundary} of Definition~\ref{def:gadget}, we have $|B|=|B'|$.  Recall that every
vertex of $B$ has either 1 or 2 edges to $V(M)$. As each of the $s/2$
edges of $M$ has at most $c$ common neighbors in $B$, there are at
most $c(s/2)$ vertices of $B$ with two edges to $V(M)$, implying that
there are at most $|B|+c(s/2)$ edges going out of $C$. Therefore,
there are at most $|B|+c(s/2)-0.1s$ edges going out of $C'$.
Every vertex of $B'=\boundary_H(C')$ has at least one edge going to
$V(M)$. Let us remove one such edge from each vertex of $B'$, then a
set $T$ of at most $c(s/2)-0.1s$ edges remain.\footnote{This the point where we crucially use \ref{i:boundary} of Definition~\ref{def:gadget}: we need that $|B|=|B'|$.} Let $B'_{\ge 2}$ be the
subset of $B'$ containing those vertices that have at least two edges
going to $V(H)\setminus C'$. Then the total number of edges going from $B'_{\ge 2}$
to $V(H)\setminus C'$ is at most $2|T|\le cs-0.2s=(c-0.2)s$: each vertex of
$B'_{\ge 2}$ has at least one edge in $T$ and, in addition to that,
can have at most one edge not in $T$.

As the number of edges between $B'_{\ge 2}$ and $V(H)\setminus C'$ is at most
$(c-0.2)s$, at most $(c-0.2s)s/c$ vertices of $V(H)\setminus C'$ can have at
least $c$ neighbors in $B'_{\ge 2}$.  Let $X$ be the set of vertices
in $V(H)\setminus C'$ with at most $c-1$ neighbors in $B'_{\ge 2}$, we
have that $|X|\ge s-((c-0.2)s)/c= 0.2s/c$. By \ref{i:cisolated} of
Definition~\ref{def:gadget} on $f$, there are no isolated vertices in
$H\setminus C'$. For each vertex in $X$, let us select an edge of $H\setminus C'$
incident to it; this way, we select a set $F^*$ of least $|X|/2\ge
0.1s/c\ge \fdegree'(c-1,k_0,D)$ distinct edges.  Consider an edge
$uv$ in $F^*$. Vertices $u$ and $v$ have no common neighbors in
$H\setminus C'$: this would contradict \ref{i:cbipartite} of
Definition~\ref{def:gadget} stating that $H\setminus C'$ is
bipartite.\footnote{This the point where we crucially use \ref{i:cbipartite} of Definition~\ref{def:gadget}.} Therefore, every such common neighbor is in $B'_{\ge
  2}$. One of $u$ and $v$ is in $X$, which means that it has at most
$c-1$ neighbors in $B'_{\ge 2}$, implying that $u$ and $v$ can have at
most $c-1$ common neighbors. As $|F^*|\ge \fdegree'(c-1,k_0,D)$, the
induction assumption implies that there is a $k_0$-gadget.
\end{proof}

\section{Graphs with no large subdivided stars}
\label{sec:graphs-with-no}

A {\em subdivided $\ell$-star} consists of a center vertex $v$ and
$\ell$ paths of length 2 starting at $v$ that do not share any vertex
other than $v$. We denote by $\stwo(v)$ the largest integer $\ell$ such
that $v$ is the center of a subdivided $\ell$-star. We denote by $\stwo(G)$ the maximum of $\stwo(v)$ for every $v\in V(G)$.
The goal of this section is to prove Theorem~\ref{thm:gadgetexists}, the existence of $k$-matching gadgets, for graphs where $\stwo(G)$ is bounded.

We develop a technology that allows us to ``ignore'' certain sets
$Q$ of vertices: if $H\setminus Q$ has a $k$-matching gadget, then so
does $H$. This works for sets $Q$ where the vertices have some characteristic property (e.g., based on degrees) that allows us to distinguish them from the vertices 
not in $Q$ (see below). We use this technique to reduce the problem to bounded-degree graphs. If we have a large induced matching where every vertex has small degree, then we define $Q$ to be the vertices of ``large degree.'' Now $H\setminus Q$ is clearly a bounded-degree graph and hence Lemma~\ref{lem:gadgetmainlow} can be invoked. Suppose therefore that we have an induced matching where every vertex has large degree. Then we define $Q$ to be the vertices of ``small degree.'' Somewhat unexpectedly, $H\setminus Q$ is a bounded-degree graph also in this case: this follows from the fact that if $\stwo(G)$ is bounded, then a vertex cannot have many neighbors of large degree.
\begin{proposition}\label{prop:starneighbor}
Every vertex $v\in V(G)$ has at most $\stwo(v)$ neighbors with degree at least $2\stwo(v)+2$.
\end{proposition}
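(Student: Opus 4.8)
The plan is to prove the contrapositive-style statement by contradiction: if $v$ had $\stwo(v)+1$ neighbors of degree at least $2\stwo(v)+2$, then I would construct a subdivided $(\stwo(v)+1)$-star centered at $v$, contradicting the maximality in the definition of $\stwo(v)$.

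Concretely, I would fix $v$, abbreviate $\ell := \stwo(v)$, and suppose for contradiction that $u_1,\dots,u_{\ell+1}$ are pairwise distinct neighbors of $v$ with $\deg_G(u_i)\ge 2\ell+2$ for every $i$. The goal is to choose, greedily for $i=1,\dots,\ell+1$, a vertex $w_i\in N_G(u_i)$ that avoids the ``bad'' set $\{v,u_1,\dots,u_{\ell+1},w_1,\dots,w_{i-1}\}$. The key counting observation is that, since $G$ is simple, $u_i\notin N_G(u_i)$, so at most $1+\ell+(i-1)=\ell+i\le 2\ell+1$ of the bad vertices can lie in $N_G(u_i)$; as $\deg_G(u_i)\ge 2\ell+2>2\ell+1$, a valid choice $w_i$ always exists.

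It then remains to check that the paths $v,u_i,w_i$ for $i\in[\ell+1]$ form a subdivided $(\ell+1)$-star: each has length $2$ and starts at $v$, and by construction any two of these paths share only $v$ — the $u_i$ are distinct, each $w_i$ differs from $v$, from every $u_j$, and from every previously chosen $w_j$, and $w_i\neq u_i$ and $v\neq u_i,w_i$ automatically because $G$ is loopless. This produces a subdivided $(\ell+1)$-star centered at $v$, contradicting that $\ell=\stwo(v)$ is the largest integer with this property; hence $v$ has at most $\stwo(v)$ neighbors of degree at least $2\stwo(v)+2$.

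There is no real obstacle here: the argument is a clean greedy/pigeonhole computation, and the only subtlety is that the bound $2\stwo(v)+2$ is exactly tight for the greedy step once we have already committed to the center $v$, the $\stwo(v)+1$ high-degree neighbors, and the $\le\stwo(v)$ previously selected path endpoints.
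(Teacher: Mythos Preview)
Your proof is correct and essentially identical to the paper's own argument: both suppose $\stwo(v)+1$ high-degree neighbors exist and greedily pick the second vertex of each length-$2$ path while avoiding the already-used vertices, with the same counting ($\le 2\stwo(v)+1$ forbidden vertices against degree $\ge 2\stwo(v)+2$). The only differences are notational (the paper sets $\ell=\stwo(v)+1$ rather than $\ell=\stwo(v)$) and that you spell out the distinctness checks for the resulting subdivided star more explicitly.
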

\begin{proof}
  Let $\ell=\stwo(v)+1$ and suppose that vertex $v$ is adjacent to
  vertices $\alpha_1$, $\dots$, $\alpha_\ell$, each having degree at
  least $2\ell$.  Then for every $1\le i \le \ell$, as the degree of
  $\alpha_i$ is at least $2\ell$, we can find a vertex $\beta_i$ that
  is adjacent to $\alpha_i$, but is not in the set
  $\{v,\alpha_1,\dots, \alpha_\ell,\beta_1,\dots, \beta_{i-1}\}$ (note
  that this set of at most $2\ell$ vertices contains less than $2\ell$
  vertices different from $\alpha_i$). This creates a subdivided
  $\ell$-star centered at $v$, a contradiction.
\end{proof}
Therefore, we can reduce the problem to bounded-degree graphs also
in the case of a matching with large degree vertices. Finally, if we
have a large induced matching with ``mixed'' edges, that is, each
having both a small-degree and a large-degree endpoint, then we can reduce
to one of the previous two cases by looking at the common neighbors of
the endpoints.

The following definition will be crucial for the clean treatment of
the problem. We show that if a set is ``well identifiable'' (for
example, based on degrees etc.) then we can remove it from the graph
and it is sufficient to show that the remaining part of the graph has a
$k$-matching gadget. The definition formulates this condition as
invariance under certain isomorphisms.

\begin{definition}\label{def:strong}
Let $H$ be a graph and let $X\subseteq C\subseteq V(H)$ be two subsets of vertices. We say that $X$ is a {\em strong set} with respect to $C$ if whenever $f$ is a boundary-preserving isomorphism from $H[C]$ to $H[C']$ for some $C'\subseteq V(H)$, then $f(X)=X$ (in particular, this implies $X\subseteq C'$).
\end{definition}
Observe that $f(X)$ and $X$ have the same size, thus to prove
$f(X)=X$, it is sufficient to prove $f(X)\subseteq X$, that is, $v\in
X$ implies $f(v)\in X$.

As a simple example, suppose that every vertex in $H$ has either
degree at most $d$ or degree at least $d+2k+1$ and $M\subseteq H$ is a
$k$-matching with every vertex having degree at most $d$ in $H$. Let
$C=V(H)\setminus V(M)$ and let $X\subseteq C$ be the set of vertices
with degree at least $d+2k+1$. Then $X$ is a strong set: every vertex
$x\in X$ has at least $d+2k+1-|V(M)|=d+1$ neighbors in $C$, hence
$f(v)$ has at least $d+1$ neighbors in $C'$, implying that $f(v)\in X$
(as we assumed that degree larger than $d$ implies that the degree is
at least $d+2k+1$). In fact, it is sufficient to enforce the degree
requirement only for vertices $v\in \boundary_H(C)$: it is sufficient
if we require that the degree of every vertex in $\boundary_H(C)$ is
either at most $d$ or at least $d+2k+1$, but the degrees of the
vertices in $C\setminus \boundary_H(C)$ can be arbitrary. This is
sufficient, as if $v\in C\setminus \boundary_H(C)$, then every
neighbor of $v$ is in $C$ and \ref{i:boundary} of $f$ implies that
every neighbor of $f(v)$ is in $C'$, hence (as $H[C]$ and $H[C']$ are
isomorphic) vertices $v$ and $f(v)$ have exactly the same degree.

We show now that removing a strong set disjoint from $M$ does not
affect whether a $k$-matching gadget is correct.
\begin{lemma}\label{lem:removestrong}
  Let $H$ be a graph containing an induced $k$-matching $M$,
  let $C:=V(H)\setminus V(M)$, and let $X\subseteq C$ be a strong set with respect to $C$. 
  If $(H\setminus X,M)$ is a $k$-matching gadget, then so is $(H,M)$.
\end{lemma}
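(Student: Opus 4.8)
To show that $(H,M)$ is a $k$-matching gadget, I would start from an arbitrary isomorphism $f$ from $H[C]$ to $H[C']$ (for some $C'\subseteq V(H)$) that satisfies conditions \ref{i:cisolated}--\ref{i:boundary} of Definition~\ref{def:gadget} with respect to $(H,M)$, and aim to prove that $H\setminus C'$ is a $k$-matching. The whole idea is to ``push $f$ down'' to the graph $H\setminus X$ and invoke the hypothesis that $(H\setminus X,M)$ is a $k$-matching gadget. The first observation is that, since condition \ref{i:boundary} says precisely that $f$ is boundary preserving and $X$ is a strong set with respect to $C$, Definition~\ref{def:strong} forces $f(X)=X$; in particular $X\subseteq C'$. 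Also, because $X\subseteq C$ is disjoint from $V(M)$, the pair $(H\setminus X,M)$ makes sense: $M$ is still an induced $k$-matching and $\widetilde C:=V(H\setminus X)\setminus V(M)=C\setminus X$.

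Next I would define $\widetilde C':=C'\setminus X$ and $\tilde f:=f{\restriction}_{C\setminus X}$. Since $f(X)=X$ and $f$ is a bijection from $C$ onto $C'$, the map $\tilde f$ is a bijection from $C\setminus X$ onto $C'\setminus X$, and as the restriction of the isomorphism $f\colon H[C]\to H[C']$ it is itself an isomorphism from $(H\setminus X)[\widetilde C]=H[C\setminus X]$ to $(H\setminus X)[\widetilde C']=H[C'\setminus X]$. The key identity, used repeatedly below, is
\[
(H\setminus X)\setminus\widetilde C'=H\setminus C',
\]
which holds precisely because $X\subseteq C'$: deleting $X$ and then $C'\setminus X$ deletes exactly the vertex set $C'$. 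Granting this, conditions \ref{i:cisolated} and \ref{i:cbipartite} for $\tilde f$ with respect to $(H\setminus X,M)$ are literally the same statements as \ref{i:cisolated} and \ref{i:cbipartite} for $f$, hence hold by assumption.

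The one genuinely non-tautological verification is \ref{i:boundary} for $\tilde f$, i.e.\ $\tilde f(\boundary_{H\setminus X}(\widetilde C))=\boundary_{H\setminus X}(\widetilde C')$. Here one uses that $X$ is disjoint from $V(M)$, so deleting $X$ does not affect edges between $C\setminus X$ and $V(M)$; consequently $\boundary_{H\setminus X}(\widetilde C)=\boundary_H(C)\setminus X$ and likewise $\boundary_{H\setminus X}(\widetilde C')=\boundary_H(C')\setminus X$. Since $f$ is boundary preserving and injective with $f(X)=X$, we get $f(\boundary_H(C)\setminus X)=\boundary_H(C')\setminus X$, which is exactly \ref{i:boundary} for $\tilde f$. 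Thus $\tilde f$ witnesses \ref{i:cisolated}--\ref{i:boundary} for $(H\setminus X,M)$; by hypothesis $(H\setminus X)\setminus\widetilde C'$ is a $k$-matching, and by the identity above this is $H\setminus C'$, which finishes the argument.

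I do not expect a serious obstacle here; the proof is short. The two care points are: making sure the isomorphism we feed into Definition~\ref{def:strong} is indeed boundary preserving (it is, by condition \ref{i:boundary}), and translating the boundary operator correctly between $H$ and $H\setminus X$, which is where disjointness of $X$ from $V(M)$ is essential. The latter is the place most prone to an off-by-a-set slip, so I would write that step out explicitly.
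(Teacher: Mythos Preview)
Your proof is correct and follows essentially the same route as the paper: restrict $f$ to $C\setminus X$, use the strong-set property (together with \ref{i:boundary}) to get $f(X)=X$, observe $(H\setminus X)\setminus(C'\setminus X)=H\setminus C'$, and verify \ref{i:cisolated}--\ref{i:boundary} for the restricted map so that the gadget hypothesis on $H\setminus X$ applies. If anything, your boundary computation $\boundary_{H\setminus X}(C\setminus X)=\boundary_H(C)\setminus X$ is slightly more carefully stated than the paper's corresponding line, but the argument is the same.
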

\begin{proof}
  Suppose that $f$ is an isomorphism from $H[C]$ to $H[C']$ for some
  $C'\subseteq V(H)$ satisfying \ref{i:cisolated}--\ref{i:boundary} of
  Definition~\ref{def:gadget}, but $H\setminus C'$ is not a
  matching. Let $H^*=H\setminus X$ and let $f^*$ be the restriction of
  $f$ to $C\setminus X$. As $X$ is a strong set, we have $f(X)=X$ and
  hence $f(C\setminus X)=C'\setminus X$, that is, $f^*$ induces an an
  isomorphism from $H^*[C\setminus X]$ to $H^*[C'\setminus
  X]$. Observe that $H^*\setminus (C'\setminus X)=H\setminus C'$ has
  no isolated vertex and bipartite (as $f$ satisfies properties
  \ref{i:cisolated} and \ref{i:cbipartite}). Furthermore, $f^*$ is
  boundary preserving (as $\boundary_{H\setminus X}(C\setminus
  X)=\boundary_H(C)$ and $\boundary_{H\setminus X}(C'\setminus
  X)=\boundary_H(C')$).  Therefore, the fact that $(H^*,M)$ is a
  $k$-matching gadget implies that $H^*\setminus (C'\setminus
  X)=H\setminus C'$ is a $k$-matching, what we had to show.
\end{proof}

Similarly to bounded-degree graphs
(Lemma~\ref{lem:makeinduceddegree}), we can use a bound on $\stwo(H)$
to argue that not too many edges can be in the neighborhood of an edge
and therefore a large set of edges implies a large induced
matching. However, all we need now is that a large induced matching
implies that there is a large induced matching such that every vertex
outside the matching is adjacent to at most one edge of the matching.

\begin{lemma}\label{lem:makeinduced}
  Let $M$ be an induced matching of size at least $2kL^2$ in a graph $H$ with
  $\stwo(G)\le L$. Then there is an $M'\subseteq M$ of size at least
  $k$ such that every vertex of
  $V(G)\setminus V(M')$ is adjacent to at most one edge of $M'$.
\end{lemma}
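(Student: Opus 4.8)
The plan is to take $M'$ to be a carefully chosen sub-matching of $M$, using the hypothesis $\stwo(H)\le L$ \emph{twice} to bound how many edges of $M$ can ``interfere'' with any given one. We may assume $L\ge 1$, since if $\stwo(H)=0$ then $H$ is a disjoint union of edges and isolated vertices and $M$ itself already has the stated property. First I would observe that because $M$ is \emph{induced}, the only vertices that can be adjacent to two distinct edges of an arbitrary sub-matching $M'\subseteq M$ lie in $V(H)\setminus V(M)$: if $w\in V(M)\setminus V(M')$ is an endpoint of some $e_\ell\in M\setminus M'$, then by inducedness $w$ has no neighbour in any other edge of $M$, so $w$ is adjacent to no edge of $M'$ at all. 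Hence it suffices to produce $M'\subseteq M$ with $|M'|\ge k$ such that no vertex of $V(H)\setminus V(M)$ is adjacent to two distinct edges of $M'$.

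Next I would record a fan-out bound: every $w\in V(H)\setminus V(M)$ is adjacent to at most $\stwo(w)\le L$ edges of $M$. Indeed, routing one length-$2$ path $w\,p\,q$ through each incident edge $\{p,q\}$ of $M$ (with $p$ a neighbour of $w$) yields, since the edges of $M$ are pairwise vertex-disjoint and $w\notin V(M)$, paths that pairwise meet only in $w$, i.e.\ a subdivided star centred at $w$. Then I would form the \emph{conflict graph} $\mathcal C$ on vertex set $M$, joining $e,e'\in M$ whenever some $w\in V(H)\setminus V(M)$ is adjacent to both $e$ and $e'$, and show that the maximum degree of $\mathcal C$ is at most $2L(L-1)<2L^{2}$. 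Fixing $e=\{a,b\}\in M$, for each conflicting edge I would fix a witness $w\in V(H)\setminus V(M)$ adjacent to $e$ and to that edge, and split the conflicting edges into a part $\mathcal A$ whose fixed witness lies in $N_H(a)$ and a part $\mathcal B$ whose fixed witness lies in $N_H(b)$ (ties broken arbitrarily, so $\mathcal A$ and $\mathcal B$ are disjoint and cover all conflicts). For $\mathcal A$: grouping its edges by their fixed witness, each distinct witness $w$ is adjacent to $e$, hence by the fan-out bound to at most $L-1$ further edges of $M$, so each group has size at most $L-1$; on the other hand, choosing for each distinct witness $w$ one edge $e'_w\in\mathcal A$ having $w$ as its fixed witness (these edges are pairwise distinct, as each edge of $\mathcal A$ has a single fixed witness) and an endpoint $c_w\in N_H(w)$ of $e'_w$, the vertices $c_w$ are pairwise distinct and distinct from $a$ (they lie in pairwise-disjoint edges of $M$, none equal to $e$), and the length-$2$ paths $a\,w\,c_w$ form a subdivided star centred at $a$, so there are at most $\stwo(a)\le L$ distinct witnesses. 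This gives $|\mathcal A|\le L(L-1)$, and symmetrically $|\mathcal B|\le L(L-1)$, proving the degree bound.

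To finish, a graph on $|M|\ge 2kL^{2}$ vertices with maximum degree less than $2L^{2}$ has an independent set of size at least $|M|/(2L^{2})\ge k$, obtained by greedily deleting a vertex together with its neighbourhood; taking $M'$ to be such an independent set of $\mathcal C$ produces a matching in which no vertex of $V(H)\setminus V(M)$ is adjacent to two of its edges, which by the first step is precisely the required property.

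The step I expect to be the main obstacle is the conflict-degree bound. A priori an endpoint $a$ of $e$ may have arbitrarily many neighbours outside $V(M)$, so a single application of $\stwo(H)\le L$ does not bound the number of interfering edges; the argument must combine a subdivided star at $a$ (to bound the number of relevant witnesses) with a subdivided star at each witness (to bound how many edges that witness accounts for), and it requires the small amount of bookkeeping above to ensure that the chosen endpoints $c_w$ are pairwise distinct, so that the paths $a\,w\,c_w$ genuinely constitute a subdivided star.
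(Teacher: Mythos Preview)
Your proposal is correct and takes essentially the same approach as the paper: both arguments bound, for each edge $e\in M$, the number of other edges of $M$ that share a common neighbour with $e$ by roughly $2L^2$, using the hypothesis $\stwo(H)\le L$ twice (once for the subdivided star at a witness vertex, once at an endpoint of $e$), and then extract $M'$ greedily. The only differences are cosmetic: the paper phrases the conflict relation as ``edges of $M$ at distance exactly $2$'' rather than via an explicit conflict graph, and you obtain the slightly sharper bound $2L(L-1)$ in place of $2L^2$.
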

\begin{proof}
  Let $uv$ be an edge of $M$.  As $M$ is an induced matching, every
  other edge of $M$ is at distance at least 2 from $uv$. We claim that
  at most $2L^2$ edges of $M$ are at distance exactly 2 from
  $uv$. Assume, without loss of generality, that there is a set
  $M^*\subseteq M$ of at least $L^2$ edges different from $uv$ at
  distance exactly 2 from $u$. If a neighbor $w$ of $u$ is adjacent to
  $L$ distinct edges of $M$, then there is a subdivided $L$-star
  centered at $w$, a contradiction. Thus every neighbor of $u$ is
  adjacent to at most $L$ distinct edges of $M$, which means that
  there are at least $|M^*|/L=L$ neighbors of $u$, each adjacent to a
  distinct edge of $M^*$. Then $u$ is the center of an $L$-star, a
  contradiction. Thus we have shown that for every edge of $M$, there
  are at most $2L^2$ edges at distance exactly 2 from it. This means
  that we can greedily select a subset $M'\subseteq M$ of edges at
  pairwise distance is more than 2, that is, $M'$ is an induced
  matching and every vertex outside $M'$ is adjacent to at most one
  edge of $M'$.
\end{proof}
Recall the example after Definition~\ref{def:strong}: if there is a
sufficiently large ``gap'' in the degrees of the vertices of $N(V(M))$
for a matching $M$, then we can define a strong set simply based on
the degrees. The following lemma creates such a gap of arbitrary large
size by throwing away at most half of the edges of a matching.
 
\begin{lemma}\label{lem:degreegap}
  Let $F$ be an induced matching in a graph $H$ with $\stwo(H)\le L$. For every $x\ge 2L+2$, $y\ge 1$, there is an induced matching
  $F'\subseteq F$ of size at least $|F|/2$ and an $x \le g \le
  x+4(2L+2)y$ such that $N(V(F'))$ has no vertex whose degree in $H$ is in the
  range $\{g,\dots, g+y-1\}$.
\end{lemma}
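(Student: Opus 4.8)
The plan is to cut the relevant degree range into consecutive \emph{windows} of $y$ values and to locate, by a simple averaging argument, one window that can be ``avoided'' by deleting at most half of the edges of $F$. For every integer $j\ge 0$ set $W_j:=\{x+jy,\dots,x+jy+y-1\}$; these windows are pairwise disjoint and every degree occurring in any $W_j$ is at least $x\ge 2L+2$. Call an edge $uv\in F$ \emph{bad for $W_j$} if some vertex of $N_H(u)\cup N_H(v)$ has degree (in $H$) in $W_j$, and let $B_j\subseteq F$ collect the edges bad for $W_j$. The key observation is that a fixed edge $uv\in F$ is bad for at most $2L$ windows: since every degree in a window is at least $x\ge 2L+2\ge 2\stwo(u)+2$ (using $\stwo(H)\le L$), Proposition~\ref{prop:starneighbor} shows that $u$ has at most $\stwo(u)\le L$ neighbors of degree $\ge x$, and likewise for $v$, so $u$ and $v$ together have at most $2L$ neighbors of degree $\ge x$, each of which belongs to at most one window. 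Summing over $F$ yields $\sum_{j\ge 0}|B_j|\le 2L\cdot|F|$.

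Next I restrict to the windows $W_0,\dots,W_{N-1}$ for $N:=4L+1$. By averaging over these $N$ windows there is an index $j^\ast\in\{0,\dots,N-1\}$ with $|B_{j^\ast}|\le\frac{2L}{4L+1}|F|\le\frac{|F|}{2}$ (when $L=0$ the sum of the $|B_j|$ is $0$, so one may take $j^\ast=0$). Put $g:=x+j^\ast y$ and $F':=F\setminus B_{j^\ast}$. Then $F'$ is an induced matching as a subset of $F$, we have $|F'|\ge|F|-|B_{j^\ast}|\ge|F|/2$, and $x\le g\le x+(N-1)y=x+4Ly\le x+4(2L+2)y$, which is the claimed range.

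It remains to verify that $N_H(V(F'))$ contains no vertex whose degree in $H$ lies in $\{g,\dots,g+y-1\}=W_{j^\ast}$: if $w\in N_H(V(F'))$, then $w\in N_H(u)\cup N_H(v)$ for some edge $uv\in F'$, and since $uv\notin B_{j^\ast}$, by definition no vertex of $N_H(u)\cup N_H(v)$ has degree in $W_{j^\ast}$, so $\deg_H(w)\notin W_{j^\ast}$. There is no genuinely hard step here; the only points that need attention are that ``bad'' must be defined via the full neighborhoods $N_H(u)\cup N_H(v)$ (not merely via neighbors lying outside $V(F)$), since the endpoints of the edges of $F'$ themselves lie in $N_H(V(F'))$ and must also avoid $W_{j^\ast}$, and that Proposition~\ref{prop:starneighbor} applies to each endpoint — which is exactly where the hypotheses $x\ge 2L+2$ and $\stwo(H)\le L$ enter.
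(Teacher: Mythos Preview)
Your proof is correct and follows essentially the same approach as the paper: partition the degrees above $x$ into windows of width $y$, use Proposition~\ref{prop:starneighbor} to bound how many windows each edge of $F$ can be bad for, and average to find a cheap window. Your version is in fact slightly tighter (you use $4L+1$ windows and obtain $g\le x+4Ly$, whereas the paper uses $4(2L+2)$ windows), but the structure and key ingredient are identical.
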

\begin{proof}
  Let   $B:=N(V(F))$ and let $B_i$ be the subset of $B$ containing those vertices that have
  degree exactly $i$ in $H$ and let $x_i$ be the number of edges
  between $B_i$ and $V(F)$.  Let $X_j=\sum_{i=x+jy}^{x+(j+1)y-1}x_i$.
  Proposition~\ref{prop:starneighbor} implies that for every $i\ge x
  \ge 2L+2$, every vertex in $V(F)$ has at most $2L+2$ neighbors in $B_i$,
  thus $\sum_{j=0}^{4(2L+2)-1}X_j=\sum_{i=x}^{x+4(2L+2)y-1}x_i\le
  (2L+2)\cdot 2|F|$. Thus by an averaging argument, there is a $0\le j^* \le
  4(2L+2)$ such that $X_{j^*}\le |F|/2$. Therefore, if we construct a matching $F'\subseteq F$ by throwing away every edge
  of $F$ adjacent to $B_i$ for some $x+j^*y\le i \le x+(j^*+1)y-1$, then we
  get a matching $F'$ of size at least $|F|/2$ such that no vertex in
  $N(V(F'))\subseteq B$ (recall that $F$ is an induced matching) has degree in the range $\{x+j^*y,\dots,x+j^*y+y-1\}$. That
  is, the statement is true with $g=x+j^*y$.
\end{proof}

Now we are ready to prove that main result for graphs not having large
subdivided stars. The proof uses Lemma~\ref{lem:removestrong} to
remove a set of vertices, making the graph bounded degree, and then the
bounded-degree result Lemma~\ref{lem:gadgetmainlow} can be invoked.
\begin{lemma}\label{lem:gadgetmainnostar}
  There is a function $\fstar(k_0,L)$ such that if graph $H$ with
  $\stwo(H)\le L$ has an induced matching of size $\fstar(k_0,L)$, then there is a
  $k_0$-matching gadget $(H,M_0)$.
\end{lemma}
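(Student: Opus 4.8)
The plan is to combine Lemma~\ref{lem:makeinduced}, Lemma~\ref{lem:degreegap}, and Lemma~\ref{lem:removestrong} to reduce to the bounded-degree case handled by Lemma~\ref{lem:gadgetmainlow}. First I would apply Lemma~\ref{lem:makeinduced} to obtain, from a sufficiently large induced matching, an induced matching $M$ in which every vertex of $V(H)\setminus V(M)$ is adjacent to at most one edge of $M$ (this ensures property (P1)). The crucial dichotomy is on the degrees of the endpoints of the edges of $M$: I classify each edge of $M$ as \emph{low} (both endpoints have degree below a threshold depending only on $k_0$ and $L$), \emph{high} (both endpoints have degree above that threshold), or \emph{mixed}. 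By pigeonhole, one of the three classes contains at least a third of the edges of $M$; I keep that sub-matching and discard the rest. The threshold is chosen via Lemma~\ref{lem:degreegap}: starting from a suitable $x=x(k_0,L)$, that lemma lets me throw away half of the remaining edges of a matching and create a \emph{gap} of length $y$ (to be fixed as $2k_0+1$ so the strong-set argument after Definition~\ref{def:strong} works) in the degrees of the neighbors of the matching. I may need to apply Lemma~\ref{lem:degreegap} twice, or once with $x$ large enough, to place the gap where I want it in each case.

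In the \textbf{low-degree case}, after the gap I let $X$ be the set of vertices of $C:=V(H)\setminus V(M)$ whose degree in $H$ lies above the gap; as explained after Definition~\ref{def:strong}, $X$ is a strong set with respect to $C$ (every vertex of $X$ has at least $\deg - |V(M)|$ neighbours in $C$, which stays above the gap, so a boundary-preserving isomorphism must fix $X$ setwise). Then $H^* := H\setminus X$ has maximum degree bounded by the gap position, i.e.\ by a function of $k_0$ and $L$, and $M$ is still an induced $k_0$-matching in $H^*$ with the (P1) property. Lemma~\ref{lem:gadgetmainlow} (taking $|M|\ge \fdegree(k_0,D)$ for the appropriate $D=D(k_0,L)$) yields a $k_0$-matching gadget $(H^*,M_0)$, and Lemma~\ref{lem:removestrong} lifts it to a $k_0$-matching gadget $(H,M_0)$.

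In the \textbf{high-degree case}, I instead let $X$ be the set of vertices of $C$ whose degree in $H$ lies \emph{below} the gap; the same reasoning (a vertex below the gap in $H$ stays below the gap counting only $C$-neighbours, and a vertex of $V(M)$-degree above the gap stays above) shows $X$ is strong. The point of Proposition~\ref{prop:starneighbor} is that in $H^*=H\setminus X$ a vertex can have only $\le\stwo(\cdot)\le L$ neighbours of high degree, so after deleting the low-degree vertices the graph $H^*$ again has bounded degree (bounded in terms of $L$ and the gap position), and we finish exactly as before via Lemmas~\ref{lem:gadgetmainlow} and \ref{lem:removestrong}. The \textbf{mixed case} I reduce to one of the previous two: for an edge $uv$ of $M$ with $u$ low and $v$ high, the common neighbours of $u$ and $v$ all have degree $\le\deg_H(u)$, which is low; using Lemma~\ref{lem:degreegap} once more to separate these, one can re-route the analysis so that the relevant strong set is again defined by a degree gap, and the matching can be handled as a low-degree matching after deleting a strong set of high-degree vertices. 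Propagating the quantitative bounds backwards through these reductions defines $\fstar(k_0,L)$ as a concrete (iterated, but computable) function of $k_0$ and $L$. The main obstacle is the mixed case: one has to be careful that the set one wants to delete really is strong — that the degree condition distinguishing it survives restriction to $C$ and to $C'$ under every boundary-preserving isomorphism — and that the residual graph is genuinely bounded-degree; keeping the three gaps (for low endpoints, high endpoints, and common neighbours of mixed edges) consistent is the delicate bookkeeping.
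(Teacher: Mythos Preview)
Your low-degree and high-degree cases are essentially the paper's Claims~\ref{cl:low} and~\ref{cl:high}, and that part of the plan is sound. One quantitative slip: the gap width $y=2k_0+1$ is too small. The matching you keep must still have size at least $\fdegree(k_0,D)$ in order to feed into Lemma~\ref{lem:gadgetmainlow}, so $|V(M)|$ can be much larger than $2k_0$; the ``simple example'' after Definition~\ref{def:strong} needs a gap of width $2|M|$, not $2k_0$. The paper sidesteps this by using $\stwo(H)\le L$ to argue that any vertex of $\boundary_H(C)$ has at most $2L$ neighbours in $V(M)$, so $y=2L$ suffices in the low case (and $y=2|F|$ in the high case, where $|F|$ is bounded purely in terms of $k_0,L$).

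The genuine gap is your mixed case. The assertion ``the common neighbours of $u$ and $v$ all have degree $\le\deg_H(u)$'' is simply false: nothing ties the degree of a common neighbour $w$ to the degree of $u$. Consequently you cannot reduce the mixed case uniformly to the low case. The paper handles it with a further three-way split. First, if many mixed edges $uv$ have \emph{no} common neighbour, then Lemma~\ref{lem:makeinduced} together with Lemma~\ref{lem:nocommon} finishes directly. Otherwise, pick a common neighbour $w_e$ for each remaining edge $e$ and split on whether $\deg_H(w_e)$ lies below or above the threshold. If low, the edge from $w_e$ to the low endpoint of $e$ has both endpoints of low degree; collecting these and extracting an induced matching via Lemma~\ref{lem:makeinduceddegree}(1) reduces to Claim~\ref{cl:low}. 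If high, the edge from $w_e$ to the high endpoint of $e$ has both endpoints of high degree; Proposition~\ref{prop:starneighbor} then bounds the degrees in the subgraph induced on high-degree vertices, so one can again extract an induced matching and reduce to Claim~\ref{cl:high}. Without this split, your mixed-case argument does not go through.
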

\begin{proof}
We define the following constants 
(the function $\fdegree$ is from Lemma~\ref{lem:gadgetmainlow}):
\begin{alignat*}{1}
\kh&=2\fdegree(k_0,L)\\
\Dh&= 2L+2+8(2L+2)\kh+2\kh\\
\Dl&=\Dh+8(2L+2)L\\
\kl&=2\fdegree(k_0,\Dl)\\
\kx&=2k_0L^2+2\Dh^2 \kl+2L^2\kh\\
\fstar(k_0,L)&=\kh+\kl+\kx.
\end{alignat*}
Let $M$ be an induced matching of size $\fstar(k_0,L)$ in $H$.  The
edges of $M$ are of three types: either both endpoint have degree at
most $\Dh$, or only one of them, or neither. We show that if $M$
contains a large number of edges from any of the three types, then the
$k_0$-matching gadget exists. In the case when there is a large
matching with degrees bounded by $\Dh$, then we identify a strong set
containing the high-degree vertices, remove them by applying
Lemma~\ref{lem:removestrong}, and invoke Lemma~\ref{lem:gadgetmainlow}
on the remaining bounded-degree graph.
\begin{claim}\label{cl:low}
  If there is an induced matching $F$ of $\kl$ edges in $H$ such that
  the endpoints of these edges have degree at most $\Dh$ in $H$, then
  there is a $k_0$-matching gadget.
\end{claim}
\begin{proof}
  Let us apply Lemma~\ref{lem:degreegap} on $F$ with
  $x=\Dh+1$ and $y=2L$. Then we get a matching $F'\subseteq F$ and a $\Dh+1\le g \le \Dh+1+8(2L+2)L=\Dl+1$ such that
  there is no vertex in $N(V(F'))$ with degree in the range
  $\{g,\dots,g+2L-1\}$. Let $Q$ be the set of vertices with degree
  at least $g$ (observe that $g\ge \Dh+1$ implies that $Q\cap
  V(F)=\emptyset$). We claim that $Q$ is a strong set with respect to
  $C:=V(H)\setminus V(F')$. Let $f$ be a boundary-preserving isomorphism from $H[C]$ to
  $H[C']$. Consider a vertex $v\in Q$. If $v\not\in\boundary_H(C)$, then
  every neighbor of $v$ is in $C$, hence $f(v)$ has at least $g$
  neighbors in $C'$, implying $f(v)\in Q$.  Suppose now that
  $v\in \boundary_H(C)$. If $v$ is adjacent to $L+1$ distinct edges of $F'$, then $v$ is the
  center of an $(L+1)$-star, a contradiction. Therefore, there are at
  most $2L$ edges between $v$ and $V(F')$. As $v\in N(V(F'))$, the way $F'$ was defined by 
  Lemma~\ref{lem:degreegap} ensures that the degree of $v$ is at least $g+2L$,
  thus it has at least $g$ neighbors inside $C$. It follows that
  $f(v)$ has at least $g$ neighbors inside $C'$ and hence $f(v)\in
  Q$. We have shown that $v\in Q$ implies $f(v)\in Q$ and hence $Q$ is
  a strong set. The graph $H\setminus Q$ has maximum degree at most
  $g-1\le \Dh+8(2L+2)L=\Dl$ and has a matching $F'$ of size $|F'|\ge
  |F|/2=\fdegree(k_0,\Dl)$. Therefore, Lemma~\ref{lem:gadgetmainlow}
  implies that there is a $k_0$-matching gadget in $H\setminus Q$ and
  then it follows by Lemma~\ref{lem:removestrong} that $H$ has a
  $k_0$-matching gadget as well.  \cqed\end{proof}

In the case when every vertex of the matching has high degree in $H$,
then we identify a strong set containing all the low-degree vertices.
We argue that after removing this strong set, the remaining graph
has bounded degree: as we have a bound on $\stwo(H)$,
Proposition~\ref{prop:starneighbor} implies that a vertex cannot have
many high-degree neighbors. Therefore, we are again in a situation
when Lemma~\ref{lem:gadgetmainlow} can be invoked.

\begin{claim}\label{cl:high}
  If there is an induced matching $F$ of $\kh$ edges in $H$ such that
  the endpoints of these edges have degree at least $\Dh$, then
  there is a $k_0$-matching gadget.
\end{claim}
\begin{proof}
Let us apply
  Lemma~\ref{lem:degreegap} on $F$ with $x=2L+2$ and $y=2|F|$. Then we get a matching $F'\subseteq F$ and a
  $2L+2\le g \le 2L+2+8(2L+2)|F|$ such that there is no vertex in $N(V(F'))$
  with degree in the range $\{g,\dots,g+2|F|-1\}$. Let $Q$ be the set
  of vertices with degree less than $g+2|F|$ (observe that $g+2|F|\le
 2L+2+8(2L+2)|F|+2|F|=\Dh$ implies that $Q\cap V(F)=\emptyset$). We claim that $Q$ is
  a strong set with respect to $C:=V(H)\setminus V(F')$. Let $f$ be a boundary-preserving 
  isomorphism from $H[C]$ to $H[C']$. Consider a vertex $v\in Q$. If
  $v\not\in \boundary_H(C)$, then every neighbor of $v$ is in $C$. As $f$ is boundary preserving, we have $f(v)\not\in \boundary_H(C')$, hence every
  neighbor of $f(v)$ is in $C'$. Furthermore, $f$ is an isomorphism between $H[C]$ and $H[C']$, hence $f(v)$ has the same degree as $v$, that is, less
  than $g+2|F|$, implying $f(v)\in Q$.  Suppose now that $v\in
 \boundary_H(C)$. By the way $F'$ was defined by Lemma~\ref{lem:degreegap}, the degree of
  $v$ is actually less than $g$, thus it has at most $g$ neighbors inside
  $C$. It follows that $f(v)$ has at most $g$ neighbors inside
  $C'$. Clearly, $v$ can have at most $2|F'|\le 2|F|$ neighbors outside $C'$,
  hence $f(v)\in Q$. We have shown that $v\in Q$ implies $f(v)\in Q$
  and hence $Q$ is a strong set. The graph $H\setminus Q$ has maximum
  degree at most $L$: otherwise $H$ has a vertex with more than $L$
  neighbors with degree at least $g+2|F|\ge 2L+2$ adjacent to it,
  contradicting Proposition~\ref{prop:starneighbor}.  As $F'$ is a
  matching of size $|F'|\ge |F|/2=\fdegree(k_0,L)$,
  Lemma~\ref{lem:gadgetmainlow} implies that there is a $k_0$-matching
  gadget in $H\setminus Q$. It follows by Lemma~\ref{lem:removestrong}
  that $H$ has a $k_0$-matching gadget as well.  \cqed\end{proof}

The remaining case is when we have a large induced matching where each edge
has both a high-degree and a low-degree endpoint. If we have many edges
where the endpoints have no common neighbors, then we can invoke
Lemma~\ref{lem:nocommon}. Otherwise, if many of the edges have
low-degree common neighbors, then we can use this common neighbor and
the low-degree endpoint to create a matching where every vertex has
low degree and apply Claim~\ref{cl:low}. Similarly, if many of the
edges have high-degree common neighbors, then we can create a matching
with high-degree vertices and apply Claim~\ref{cl:high}.
\begin{claim}\label{cl:mixed}
 If there is a matching $F$ of $\kx$ edges in $H$ such that
  the every edge in $F$ has an endpoint with degree at most $\Dl$ and a degree at least $\Dh$, then
  there is a $k_0$-matching gadget.
\end{claim}
\begin{proof}
  Let $F_0\subseteq F$ contain those edges $uv$ in $F$ for which $u$
  and $v$ have no common neighbors. If $|F_0|\ge 2k_0L^2$, then
  Lemma~\ref{lem:makeinduced} implies that there is an induced
  matching $F'_0\subseteq F_0$ of size $k_0$ such that every vertex of
  $V(H)\setminus V(F'_0)$ is adjacent to at most one edge of $F'_0$.
  In fact, as the endpoints of the edges in $F'_0$ have no common
  neighbors, every vertex of $V(H)\setminus V(F'_0)$ is adjacent to at
  most one vertex of $V(F'_0)$.  Then Lemma~\ref{lem:nocommon} implies
  that there is a $k_0$-matching gadget.

Suppose therefore that $F\setminus F_0$ has size at least
$\kx-2k_0L^2=2\Dh^2 \kl+2L\kh$. For every edge $e\in F$, let us pick a
common neighbor $w_e$ of the endpoints of $e$ and let us partition
$F\setminus F_0$ into $F_L$ and $F_H$ depending on whether $w_e$ has
degree less than $\Dh$ or at least $\Dh$, respectively.  If $|F_L|\ge
2\Dh^2 \kl$, then, for every $e\in F_L$, let $F^*_L$ contain the edge
between $w_e$ and the endpoint of $e$ with degree at most $\Dh$.  As
every vertex of $F^*_L$ has degree at most $\Dh$ in $H$, Lemma~\ref{lem:makeinduceddegree}(1) implies that we can select a subset of $F^*_L$ of size at least
$|F^*_L|/(2\Dh^2)\ge \kl$ that forms an induced matching, and then
Claim~\ref{cl:low} implies that there is a $k_0$-matching gadget.

Otherwise, we have that $|F_H|\ge 2L^2\kl$ and, for every $e\in F_H$,
we let $F^*_H$ contain the edge between $w_e$ and the endpoint of $e$
with degree at least $\Dh$. As every endpoint of every edge in $F^*_H$
has degree at least $\Dh\ge 2L+2$, Proposition~\ref{prop:starneighbor}
implies that the graph induced by these vertices has maximum degree at
most $L$. Thus by Lemma~\ref{lem:makeinduceddegree}(1), we can select
a subset of $F^*_H$ that forms an induced matching of size at least
$|F^*_L|/(2L^2)\ge \kh$, and then Claim~\ref{cl:high} implies that
there is a $k_0$-matching gadget.  \cqed\end{proof}

Claims \ref{cl:low}--\ref{cl:mixed} prove the lemma: if $M$ has size
at least $f_s(k_0,L)$, then at least one the three cases hold.
\end{proof}

\section{Bounded-treewidth graphs}
\label{sec:bound-treew-graphs}

In this section, we complete the proof of Theorem~\ref{thm:gadgetexists} by showing that if
a bounded-treewidth graph has large vertex-cover number, then it
contains a $k$-matching gadget. First, we need an induced
matching. Lemma~\ref{lem:hereditaryramsey} shows that if a graph has
large vertex-cover number, then it contains either a large clique, a
large induced biclique, or a large induced matching. As the first two
conclusions are not possible in bounded-treewidth graphs, it follows
that there is a large induced matching. We give another proof (Lemma~\ref{lem:nicematching}) of this
fact by looking at the tree decomposition instead of using Ramsey arguments.
This proof gives a better correspondence between the size of the
induced matching and the vertex-cover number (but this does not matter
for our purposes). More importantly, the proof presented below
finds an induced matching such that
$\stwo(v)$ is bounded for every vertex $v$ of the matching, that is, there
are no large subdivided stars centered on them.  Then we define $Q$ to
be the set of vertices with large $\stwo$-number (this require some care) and use the
technology developed in Section~\ref{sec:graphs-with-no}
(Lemma~\ref{lem:removestrong}) to argue that it is sufficient to find
a $k$-matching gadget in $H\setminus Q$. Clearly, $\stwo(H\setminus
Q)$ is bounded, hence Lemma~\ref{lem:gadgetmainnostar} can be
invoked.

\begin{lemma}\label{lem:nicematching}
Let $w$ and $k$ be integers and let $H$ be a graph of treewidth at most $w$ and vertex cover number greater than $3k(w+1)$. Then there is an induced matching $M=\{u_1v_1,\dots, u_kv_k\}$ such that $\stwo(u_i),\stwo(v_i)\le 2(w+1)$ for every $1\le i \le k$.
\end{lemma}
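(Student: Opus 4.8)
The plan is to work with a tree decomposition $(T,\mathcal B)$ of $H$ of width at most $w$, chosen to be \emph{nice} (or at least rooted), and to extract the induced matching greedily by processing $T$ from the leaves upward. The key quantitative fact I would use is the standard bound: if $H$ has vertex-cover number greater than $3k(w+1)$, then any tree decomposition of width $w$ must contain at least $k$ ``far apart'' bags from which we can harvest matching edges. More precisely, first note that the vertex-cover number is at least as large as the matching number $\nu(H)$, so $\nu(H) > \tfrac32 k(w+1)$; let $M_0$ be a maximum matching. The idea is to charge each edge of $M_0$ to a bag of $T$ containing both its endpoints (one exists by property 2 of Definition~\ref{def:tw}), and then to select a subfamily of edges whose charging bags are pairwise ``non-adjacent'' in a suitable sense in $T$, so that the corresponding edges of $H$ form an \emph{induced} matching. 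A clean way to do this: root $T$, and for each edge $e = u_ev_e \in M_0$ let $t(e)$ be the bag closest to the root among those containing $\{u_e,v_e\}$; then greedily pick edges $e$ so that the chosen $t(e)$'s lie in pairwise vertex-disjoint subtrees, discarding, for each pick, all edges whose bag lies in the same root-path region. A careful accounting (each bag has $\le w+1$ vertices, each vertex kills a bounded number of matching edges) shows we retain $\ge \nu(H)/(c(w+1))$ edges for a small constant $c$; choosing $\nu(H)$ large enough relative to $3k(w+1)$ leaves $k$ edges forming an induced matching $M$.

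\textbf{Controlling the subdivided stars.} The genuinely new part is guaranteeing $\stwo(u_i),\stwo(v_i)\le 2(w+1)$ for each matching vertex. Here I would observe that in a graph of treewidth $w$, if a vertex $v$ is the center of a subdivided $\ell$-star with $\ell$ large, then the $\ell$ length-$2$ paths, together with $v$, force a large structure that the tree decomposition constrains. Concretely: fix a bag $B_t$ containing $v$. Each length-$2$ path $v-\alpha_j-\beta_j$ must be ``captured'' by the decomposition, and since the $\beta_j$ are distinct and the $\alpha_j$ are distinct, the branch-tree argument (the subtrees where $v$ lives, where $\alpha_j$ lives, where $\beta_j$ lives, all meeting appropriately) shows that at most $w+1$ of these paths can be routed through any single incident ``direction'' of $T$ at $v$ — more carefully, one shows directly that a vertex of a width-$w$ tree decomposition has $\stwo(v) \le$ roughly $2(w+1)$, or else one can build a $(w+2)$-clique minor or violate the bag-size bound. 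So the plan is: (a) prove a standalone claim that \emph{every} vertex $v$ of $H$ with $\stwo(v) > 2(w+1)$ is somehow ``spatially concentrated'' in $T$, and then (b) show that when we do the greedy extraction above, we can additionally avoid, as endpoints of our matching edges, all vertices of high $\stwo$-number — because there cannot be too many of them near any given bag. Alternatively, and perhaps more cleanly, I would interleave the two selections: among the $\nu(H)$ matching edges, first throw away every edge incident to a vertex $v$ with $\stwo(v) > 2(w+1)$, argue that this removes only a bounded fraction (using that high-$\stwo$ vertices are rare or bounded-degree-concentrated in a width-$w$ graph), and then run the ``induced, tree-spread'' selection on what remains.

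\textbf{The main obstacle.} I expect the hard part to be the second ingredient: getting the \emph{explicit} bound $2(w+1)$ on $\stwo$ for the matching vertices, rather than merely ``some function of $w$.'' Proving $\stwo(v) \le 2(w+1)$ for a typical vertex is false in general — low-$\stwo$ is not automatic, e.g. a star $K_{1,n}$ itself has $\stwo$ of the center equal to $0$ but subdivide it and the center has $\stwo = n$ with treewidth $1$ — so one genuinely cannot bound $\stwo$ over \emph{all} vertices; the point must be that the tree decomposition lets us \emph{choose} an induced matching steering clear of the (possibly many) high-$\stwo$ vertices. The delicate bookkeeping is: a vertex $v$ with $\stwo(v) > 2(w+1)$ has more than $2(w+1)$ length-$2$ paths leaving it; if $v$ lies in bag $B_t$, then the $\beta_j$-endpoints live in subtrees hanging off $t$, and since each such subtree's ``attachment'' to $t$ goes through $B_t$ of size $\le w+1$, a pigeonhole forces two paths $v\alpha_j\beta_j$, $v\alpha_{j'}\beta_{j'}$ into a configuration that, combined with the matching edge at $v$, would let us \emph{re-route} — but we don't want re-routing, we want avoidance. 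So the real work is a counting argument showing that along the tree decomposition, the number of ``bad'' ($\stwo > 2(w+1)$) vertices that can interfere with any particular harvested matching edge is at most, say, $O(w)$ per edge, so discarding them costs only a constant factor and $k$ good edges survive. I would present this as: process $T$ bottom-up, maintain a running induced matching, and at each bag argue that after excluding bad vertices and already-used vertices (at most $O(w)$ of them), a fresh matching edge with both endpoints of low $\stwo$-number can still be committed whenever the ``budget'' of vertex-cover number in the subtree processed so far exceeds $3(w+1)$ times the number of edges committed.
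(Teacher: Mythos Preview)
Your overall framing---work in a rooted tree decomposition, harvest matching edges that are spread out in $T$---is right, and your induced-matching extraction from a large maximum matching would work with standard bookkeeping. The genuine gap is exactly where you flag it: the $\stwo$ control. Neither of your two proposed mechanisms is substantiated. Proposal~(a), that high-$\stwo$ vertices are ``spatially concentrated,'' is not true in any useful sense (take a long path and hang a subdivided $100w$-star off every path vertex; the treewidth is $1$ but every path vertex has large $\stwo$). Proposal~(b), discard matching edges touching a high-$\stwo$ vertex and claim a bounded fraction survives, might be salvageable but you give no argument, and the heuristic ``at most $O(w)$ bad vertices interfere with each harvested edge'' is not clearly true either---a single bag can consist entirely of high-$\stwo$ centers whose stars reach far into the tree.

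The paper's proof avoids this difficulty by \emph{not} separating the two steps. It processes the tree decomposition while maintaining a running induced matching $M_i$, a set $X_i\subseteq V(T)$ of ``used'' nodes, and a vertex-cover budget for the processed region. In each round it chooses a node $t^*$ at \emph{maximum depth} subject to $H_{t^*}\setminus(B_{t^*}\cup V_{i-1})$ still containing an edge, and commits edges $u_jv_j$ from the children $t_j$ of $t^*$. The $\stwo$ bound then falls out of the depth-maximality: a subdivided $(2(w+1)+1)$-star centered at $u_j$ would have some leg $u_j\alpha\beta$ missing both $B_{t_j}$ and the at most one bag $B_{x_j}$ below it in $\hX_{i-1}$ (these two bags together have at most $2(w+1)$ vertices), and that leg would witness an edge $\alpha\beta$ of $H_{t_j}\setminus(B_{t_j}\cup V_{i-1})$ strictly deeper than $t^*$, a contradiction. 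So the low-$\stwo$ property is a \emph{consequence of how the edge is selected}, not a post-hoc filter. Your bottom-up processing sketch in the last paragraph is actually closer to this idea than your earlier proposals, but you are still framing it as ``exclude bad vertices, then commit''; the paper's move is to commit at the deepest viable node and observe that badness would contradict depth-maximality.
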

\begin{proof}
  Consider a rooted tree decomposition $(T,\B)$ of $H$ having width $w$.  For every $t\in
  V(T)$, we denote by $H_t$ the graph induced by the union of every bag $B_{t'}$ for every
  descendant $t'$ of $t$ (including $t$ itself). For $i=0,1,\dots$, we iteratively construct an induced matching $M_i$ and a subset $X_i$ of $V(T)$; initially, $M_0=X_0=\emptyset$.
We define (see Figure~\ref{fig:case1})
\begin{itemize}
\item  $S_i:=\bigcup_{t\in X_i}B_t$,
\item $V_i:=\bigcup_{t\in X_i}\bigcup_{\text{$t'$ is a descendant of $t$}}B_{t}$,
\item $\hX_i\subseteq X_i$ to be the maximal elements of $X_i$ (i.e., those vertices of $X_i$ that have no proper ancestor in $X_i$), and
\item $\hS_i:=\bigcup_{t\in \hX_i}B_t$.
\end{itemize}
Note that if a vertex $v$ of $V_i$ has a neighbor outside $V_i$, then $v\in \hS_i$, that is, $\hS_i$ separates $V_i\setminus \hS_i$ from $V(H)\setminus V_i$.

We maintain the following invariant properties:
\begin{enumerate}
\item\label{i:cover}$H[V_i\setminus S_i]$ has a vertex cover $C_i$ of size at most $(w+1)(3|M_i|-|X_i|-|\hX_i|)$.
\item\label{i:diffcomponents} Each edge in $M_i$ is in a different component of $H[V_i\setminus S_i]$ (in particular, $S_i$ is disjoint from $V(M_i)$).
\item\label{i:nostar}$\stwo(u),\stwo(v)\le 2(w+1)$ for every edge $uv\in M_i$.
\end{enumerate}
For $i=0$, all three conditions hold vacuously.
If $|M_{i-1}|\ge k$, then we stop the process: a subset of $k$ edges of $M_{i-1}$ is the required induced matching. Otherwise, given $M_{i-1}$
and $X_{i-1}$, we compute $M_i$ and $X_i$ the following way. Let us
choose a node $t^*$ such that $H_{t^*}\setminus (B_{t^*}\cup
V_{i-1})$ contains at least one edge and the distance of $t^*$ from
the root $r$ is maximum possible. We claim that at least one such node
exists; in particular, the root $r$ is such a node. Otherwise, if
$H_r\setminus (B_{r}\cup V_{i-1})$ has no edge, then
$B_{r}\cup S_{i-1}\cup C_{i-1}$ is covers of $H_r=H$ (note that
$\hS_{i-1}\subseteq S_i$ covers every edge between $V_{i-1}$ and
$V(H)\setminus V_{i-1}$, and $C_{i-1}$ is covers every edge in $H[V_{i-1}\setminus S_{i-1}]$) and its size is at most
$w+1+(w+1)|X_{i-1}|+(w+1)(3|M_{i-1}|-|X_{i-1}|-|\hX_{i-1}|)\le
3k(w+1)$ (using property~\ref{i:cover} on the size of $C_{i-1}$ and $|M_{i-1}|<k$), contradicting our assumption on the vertex cover number
of $H$.  Let $t_1$, $\dots$, $t_p$ be the children
of $t^*$ for which $H_{t_j}\setminus (B_{t^*}\cup V_{i-1})$ contains
at least one edge. Note that $p\ge 1$: if there is an edge in $H_{t^*}\setminus (B_{t^*}\cup V_{i-1})$, then it has to appear in $H_{t_j}$ for some child $t_j$ of $t^*$. We consider two cases.

Case 1: there is a $t_j$ that has more than one descendants in $\hX_{i-1}$ (see Figure~\ref{fig:case1}). Let
$\hX'_{i-1}$ be the descendants of $t_j$ in $\hX_{i-1}$.  Let us find
a node $t$ that is at maximum distance from the root and has at least
two descendants in $\hX'_{i-1}$ (it is possible that $t=t_j$). In other words, for every pair of
nodes in $\hX'_{i-1}$, we find the least common ancestor of the two
nodes and we take $t$ to be a node of maximum distance from the root
among these common ancestors.  We let $M_i=M_{i-1}$ and
$X_i=X_{i-1}\cup \{t\}$. Observe that $t_j$ is an ancestor of $t$ and therefore $t^*$ is a proper ancestor of $t$. Thus by the choice of $t^*$, there is no edge in $H_{t}\setminus (B_{t}\cup V_{i-1})$. Therefore, $C_i:=C_{i-1}$ is a vertex cover of $H[V_i\setminus
S_i]$: edges with both endpoint in $V_{i-1}$ are covered by $C_i$,
edges with exactly one endpoint in $V_{i-1}$ have one endpoint in
$\hS_{i-1}\subseteq S_i$, and edges with both endpoints in
$V_i\setminus V_{i-1}$ have one endpoint in $B_t\subseteq
S_i$. Observe that the descendants of $t$ in $\hX'_{i-1}$ (there are at least two such nodes) are no longer maximal nodes
in $\hX_i$ after adding $t$ and we have added only one new node to
$X_i$, hence $|\hX_{i}|\le |\hX_{i-1}|-1$. Together with $|M_i|=|M_{i-1}|$
and $|X_i|=|X_{i-1}|+1$, this implies that $3|M_i|-|X_i|-|\hX_i|\ge
3|M_{i-1}|-|X_{i-1}|-|\hX_{i-1}|$. Therefore, $C_i$ satisfies the size
bound of property~\ref{i:cover}. For 
property~\ref{i:diffcomponents}, 
observe first that $S_i\cap V_{i-1}\subseteq S_{i-1}$ (if a vertex appears in $B_t$ and $V_{i-1}$, then it has to appear in a bag of $\hX_{i-1}$) and therefore the fact that $S_{i-1}$ is disjoint from $V(M_{i-1})$ implies that $S_i$ disjoint from $V(M_i)$. Together with $S_{i-1}\subseteq S_i$, it follows that the edges in $M_i$ are indeed in different components of $H[V_i\setminus S_i]$.
 Property~\ref{i:nostar} follows from $M_i=M_{i-1}$.
\begin{figure}
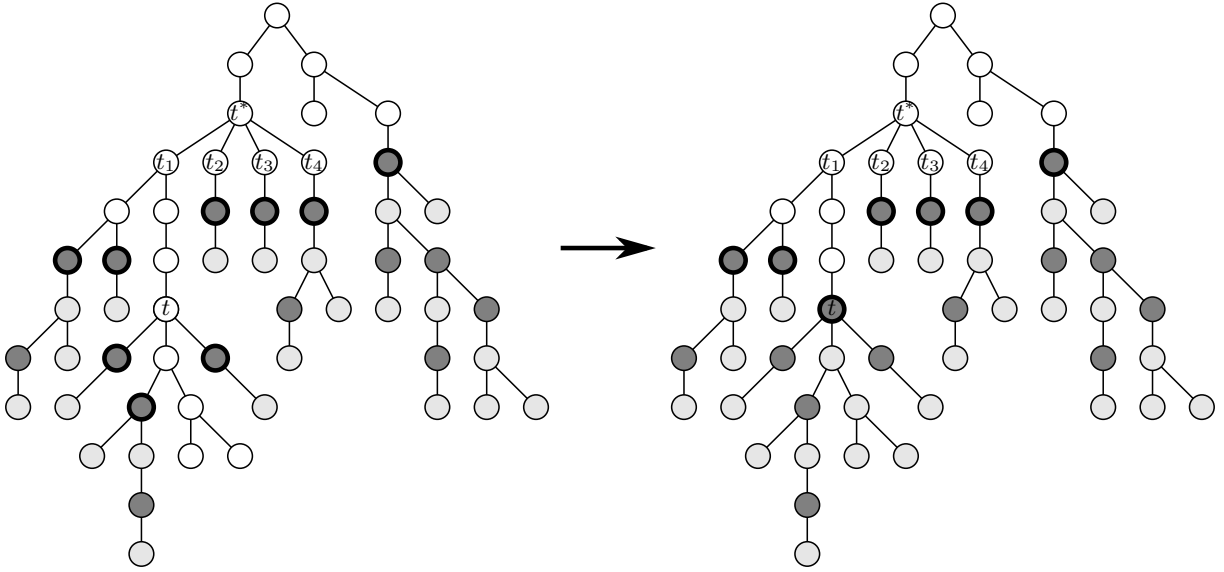

\begin{center}
{\footnotesize \svg{\linewidth}{case1}}
\caption{Case 1 in the proof of Lemma~\ref{lem:nicematching} with $p=4$ and $j=1$. On the left, shaded nodes show $V_{i-1}$, dark shaded nodes show $X_{i-1}$, dark circled nodes show $\hX_{i-1}$. We have $|X_i|=|X_{i-1}|+1$ and $|\hX_i|=|\hX_{i-1}|-2<|\hX_{i-1}|$.}\label{fig:case1}
\end{center}
\end{figure}

Case 2: Every $t_j$ has at most one descendant in $\hX_{i-1}$ (Figure~\ref{fig:case2}). For
every $1\le j \le p$, we let $u_jv_j$ to be an edge of
$H_{t_j}\setminus (B_{t^*}\cup V_{i-1})$. We let
$M_i=M_{i-1}\cup \bigcup_{j=1}^p\{u_jv_j\}$ and $X_i=X_{i-1}\cup
\{t^*\}$.  To prove property~\ref{i:cover}, we show that $C_i:=C_{i-1}
\cup \bigcup_{j=1}^p B_{t_j}$ is a vertex cover of $H[V_{i}\setminus
S_i]$.
 Consider an
edge $u'v'$ of $H[V_i\setminus S_i]$ not covered by $C_i$.  If both
$u'$ and $v'$ are in $V_{i-1}\setminus S_{i-1}$, then $u'v'$ is covered by $C_{i-1}$;
if, say, $u'\in V_{i-1}$ and $v'\not\in V_{i-1}$, then $u'\in
\hS_{i-1}\subseteq S_{i-1}$. Therefore, we may assume that $u'v'$ is an
edge of $H\setminus V_{i-1}$. Neither $u'$ nor $v'$ can be in $B_{t^*}\subseteq S_i$. Thus $u'v'$ is an edge of $H_{t^*}\setminus
(B_{t^*}\cup V_{i-1})$. By the way we defined $t_1$, $\dots$, $t_p$, this
means that $u'v'$ is an edge of $H_{t_j}\setminus (B_{t^*}\cup
V_{i-1})$ for some $1\le j \le p$. As $B_{t_j}\subseteq C_i$, it is in fact an edge of 
$H_{t_j}\setminus (B_{t_j}\cup
V_{i-1})$ as well. However, this contradicts the choice of $t^*$.
Let us prove that the size bound of property~\ref{i:cover} holds for $C_i$.
As we add only the single new node $t^*$ to $X_i$, we have $|X_i|=
|X_{i-1}|+1$ and $|\hX_i|\le |\hX_{i-1}|+1$ (note that the equality
$|\hX_i|=|\hX_{i-1}|+1$ is possible, but only if $t^*$ has no
descendant in $X_{i-1}$). Together with $|M_i|=|M_{i-1}|+p$, it
follows that
\begin{align*}
|C_i|&\le |C_{i-1}|+p(w+1)\\ &\le (w+1)(3|M_{i-1}|-|X_{i-1}|-|\hX_{i-1}|)+p(w+1)
\\&\le (w+1)(3(|M_i|-p)-(|X_i|-1)-(|\hX_i|-1))+p(w+1)\\&=
(w+1)( 3|M_i|-|X_i|-|\hX_i|-2p+2)\\&\le
(w+1)(3|M_i|-|X_i|-|\hX_i|),
\end{align*}
where we used the property~\ref{i:cover} on $C_{i-1}$ in the first inequality and $p\ge 1$ in the last inequality.
\begin{figure}
\begin{center}
{\footnotesize \svg{\linewidth}{case2}}
\caption{Case 2 in the proof of Lemma~\ref{lem:nicematching} with $p=3$. On the left, shaded nodes show $V_{i-1}$, dark shaded nodes show $X_{i-1}$, circled nodes show $\hX_{i-1}$. We have $|X_i|=|X_{i-1}|+1$ and $|\hX_i|=|\hX_{i-1}|-2\le |\hX_{i-1}|+1$.
}\label{fig:case2}
\end{center}
\end{figure}

To prove property~\ref{i:diffcomponents}, observe first 
that every $u_jv_j$ is disjoint from $B_{t^*}$ by definition and $B_{t^*}\cap V_{i-1}\subseteq \hS_{i-1}$, thus $S_i$ is disjoint from $V(M_i)$. Notice that $\hS_{i-1}\subseteq S_{i-1}\subseteq S_i$ separates 
$V_{i-1}$ from $V_i\setminus V_{i-1}$ and 
every edge of $M_{i}\setminus M_{i-1}$ is in $V_i\setminus V_{i-1}$
Therefore, no  
edge of $M_{i-1}$ can be in the same component of $H\setminus  S_i$ as an edge $M_{i}\setminus M_{i-1}$. Furthermore, the edges of $M_i\setminus M_{i-1}$ are separated by $B_{t^*}\subseteq S_i$.
To prove property~\ref{i:nostar}, suppose that, for $\ell=2(w+1)+1$,
there is a subdivided $\ell$-star centered at $u_j$ (the argument is
the same for $v_j$); let $u_j\alpha_1\beta_1$, $\dots$,
$u_j\alpha_\ell \beta_\ell$ be the paths of length 2 starting at
$u_j$. Node $t_j$ can have at most one descendant in $\hX_{i-1}$.  If
there is such a descendant $x_j\in \hX_{i-1}$, then there is an $1\le
q \le \ell$ such that $\alpha_q, \beta_q\not \in B_{t_j}\cup B_{x_j}$;
if there is no such descendant, then let us choose $q$ such that
$\alpha_q, \beta_q\not \in B_{t_j}$. It follows that $\alpha_q$ and
$\beta_q$ only appear in bags that are proper descendants of $t_j$,
but they do not appear in any bag of $X_{i-1}$, i.e.,
$\alpha_q,\beta_q\not\in V_{i-1}$. It follows that $\alpha_q\beta_q$
is an edge of $H_{t_j}\setminus (B_{t_j}\cup V_{i-1})$, contradicting
the selection of node $t^*$ and the edge $uv$. Thus we have
$\stwo(u_j),\stwo(v_j)\le 2(w+1)$, as required by
property~\ref{i:nostar}.
\end{proof}

The following two technical lemmas will be used in the proof of Lemma~\ref{lem:gadgetmainboundedtw}.
\begin{lemma}\label{lem:onlyone}
If $\C$ is a multiset of at least $(1+z\cdot r)k$ subsets of a universe $U$, each having size at most $r$, then there is a subcollection $\C'\subseteq \C$ of size $k$ such that for every $x\in U$, either there is at most one set in $\C'$ containing $x$, or there are at least $z$ sets in $\C\setminus \C'$ containing $x$.
\end{lemma}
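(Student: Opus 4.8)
The plan is to construct $\C'$ greedily, inserting one set at a time while maintaining the following invariant: for every $x\in U$ lying in at least two sets of the current $\C'$, at least $z$ sets of $\C\setminus\C'$ contain $x$. (All counts below are taken with multiplicity, so the multiset structure of $\C$ causes no difficulty.) Write $t_x$ for the number of sets of $\C$ containing $x$, and for a candidate $\C'$ write $m_x$ for the number of sets of $\C'$ containing $x$. Call $x$ \emph{saturated} (with respect to $\C'$) if $m_x \geq \max(1,\,t_x - z)$, and call a set $S\in\C\setminus\C'$ \emph{addable} if it contains no saturated element. Starting from $\C'=\emptyset$, where the invariant is vacuous, I would insert addable sets one by one. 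Inserting an addable $S$ preserves the invariant: if $x\in S$ had $m_x\geq 1$ before insertion, then $x$ not being saturated forces $m_x \leq t_x - z - 1$, so afterwards $x$ lies in $m_x+1 \leq t_x - z$ sets of $\C'$, leaving at least $z$ sets of $\C\setminus\C'$ containing $x$; vertices outside $S$, and previously doubled vertices not in $S$, are unaffected. Once $|\C'| = k$, the invariant is precisely the conclusion of the lemma.

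The heart of the argument is to show that an addable set always exists while $|\C'| < k$. Suppose not; then every $S\in\C\setminus\C'$ contains a saturated element, so, letting $\mathrm{Sat}$ denote the set of saturated elements, the families $\{S\in\C : x\in S\}$ for $x\in\mathrm{Sat}$ cover $\C\setminus\C'$, and the family of $x$ contributes exactly $t_x - m_x$ members of $\C\setminus\C'$. The key estimate is that $t_x \leq (z+1)\,m_x$ for every $x\in\mathrm{Sat}$: if $t_x \leq z+1$ this holds because $m_x\geq 1$; if $t_x \geq z+2$ then saturation gives $m_x \geq t_x - z$, whence $(z+1)m_x = m_x + z\,m_x \geq m_x + z \geq t_x$. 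Hence $t_x - m_x \leq z\,m_x$, and
\[
|\C\setminus\C'| \;\leq\; \sum_{x\in\mathrm{Sat}}(t_x - m_x) \;\leq\; z\sum_{x\in\mathrm{Sat}} m_x \;\leq\; z\sum_{x\in U} m_x \;=\; z\sum_{S\in\C'}|S| \;\leq\; zr\,|\C'|.
\]
Writing $j = |\C'| < k$, this yields $|\C| = j + |\C\setminus\C'| \leq (1+zr)\,j \leq (1+zr)(k-1) < (1+zr)k$, contradicting the hypothesis $|\C|\geq (1+zr)k$. Thus the greedy process never stalls before $|\C'|$ reaches $k$, and the invariant then finishes the proof.

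The main obstacle is making the counting in the previous paragraph go through with the bound $(1+zr)k$ stated in the lemma rather than with a wasteful one such as $\bigl(1+(z+1)r\bigr)k$: this is exactly what dictates the threshold $\max(1,\,t_x - z)$ in the definition of ``saturated'', since one needs the single inequality $t_x \leq (z+1)m_x$ to hold uniformly over saturated $x$ both in the ``light'' range $t_x \leq z+1$ (where the constraint is merely $m_x \geq 1$) and in the ``heavy'' range $t_x \geq z+2$ (where it becomes $m_x \geq t_x - z$). The remaining points are routine: $\sum_{x\in U} m_x = \sum_{S\in\C'}|S|$ is a double count of incidences, $|S|\leq r$ gives the last inequality, and the degenerate cases $k=0$, $z=0$, or all sets empty are subsumed by the same argument (when $z=0$ a saturated $x$ forces $m_x = t_x$, so no set outside $\C'$ contains it, and every set of $\C\setminus\C'$ is addable).
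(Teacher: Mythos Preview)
Your proof is correct and takes a genuinely different route from the paper's. The paper proceeds by induction on $k$: it picks an arbitrary set $X\in\C$, and for each $x\in X$ it \emph{reserves} up to $z$ sets of $\C\setminus\{X\}$ containing $x$; these at most $zr$ reserved sets, together with $X$, are deleted before the recursive call on $k-1$. The reserved sets then serve as the required witnesses in $\C\setminus\C'$ for any element of $X$ that becomes doubled. Your greedy argument achieves the same effect without explicitly earmarking witnesses: the ``saturated'' threshold $m_x\ge\max(1,t_x-z)$ encodes exactly the point at which adding another set through $x$ would jeopardise the witness count, and the key inequality $t_x\le(z+1)m_x$ is what lets the covering bound $|\C\setminus\C'|\le zr\,|\C'|$ match the hypothesis $(1+zr)k$ rather than the wasteful $(1+(z+1)r)k$. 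The paper's proof is shorter and perhaps more transparent (``set aside $z$ protectors per element, recurse''); yours is more self-contained as a single pass and makes explicit that the bottleneck is the double-counting identity $\sum_x m_x=\sum_{S\in\C'}|S|$, which is hidden inside the paper's cardinality bookkeeping $|\C_X|\le zr$.
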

\begin{proof}
  We prove the statement by induction on $k$. Let us select an
  arbitrary set $X\in\C$. For every $x\in X$, let us arbitrarily
  select $z$ sets of $\C\setminus \{X\}$ that contain $x$ (or all of
  them, if there are less than $z$ such sets); we define $\C_X$ as
  these selected sets; we have $|\mathcal{H}_X|\le z \cdot r$. Let us apply the
  induction hypothesis on the multiset $\C_{k-1}:=\C\setminus (\C_X\cup \{X\})$ and
  $k-1$ (note that $\C_{k-1}$ has size at least $(1+z\cdot
  r)(k-1)$); let $\C'_{k-1}$ be the resulting subcollection of $k-1$ sets. We claim
  that $\C'=\C'_{k-1}\cup \{X\}$ is the desired collection of $k$ sets. Indeed,
  for every vertex $x\in X$, if $x$ appears in a set of $\C'$, then it
  appears in at least $z$ sets of $\C_X\subseteq \C \setminus 
  (\C'_{k-1}\cup \{X\}))$ and the statement is true for every $x\not\in X$
  by the induction hypothesis.
\end{proof}

\begin{lemma}\label{lem:starclique}
  Let $Z$ be a set of vertices in a graph $H$ of treewidth at most
  $w$. If for every $v\in Z$ there is a subdivided star $S_v$ centered
  at $v$ covering every vertex of $Z$, then $|Z|\le w+1$.
\end{lemma}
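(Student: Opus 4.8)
The plan is to show that $K_{|Z|}\preceq H$; since $\tw(K_t)=t-1$ and treewidth does not increase under taking minors, this gives $|Z|-1\le \tw(H)\le w$ at once. So the whole task is to exhibit a minor model of $K_{|Z|}$ inside $H$, and the stars $S_v$ are exactly what is needed.

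Write $Z=\{v_1,\dots,v_m\}$. For $i\neq j$, by the hypothesis that $S_{v_i}$ covers $Z$ the vertex $v_j$ is an endpoint of one of the length-$2$ arms of $S_{v_i}$; write $v_i - a_{ij} - v_j$ for that arm, so that $a_{ij}$ is adjacent to both $v_i$ and $v_j$. Because the arms of a subdivided star pairwise meet only at the center, for a fixed $i$ the vertices $a_{ij}$ ($j\neq i$) are pairwise distinct, and none of them lies in $Z$: $a_{ij}$ is internal to one arm, while any vertex of $Z$ other than $v_i$ is the endpoint of a (different) arm, and distinct arms share only $v_i$. Hence, setting $B_i:=\{v_i\}\cup\{a_{ij}:j\neq i\}$, each $B_i$ is connected (a star centered at $v_i$), $B_i\cap Z=\{v_i\}$, and $B_i$ has an edge to $v_j\in B_j$ for every $j\neq i$. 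Thus $B_1,\dots,B_m$ form a $K_m$-model as soon as they are pairwise disjoint, and the only way disjointness can fail is that a middle vertex $a_{ij}$ of an arm of $S_{v_i}$ coincides with a vertex of some other $B_{i'}$, i.e.\ with a middle vertex of an arm of $S_{v_{i'}}$.

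The remaining work — which I expect to be the real obstacle — is to resolve these coincidences. For a single unordered pair $\{i,j\}$ I am free to connect $v_i$ to $v_j$ through $a_{ij}$ (charging it to $B_i$) or through $a_{ji}$ (charging it to $B_j$): both lie outside $Z$ and are adjacent to both $v_i$ and $v_j$. So I only need to choose, for each pair, one such connector assigned to one of the two branch sets, so that (i) the connectors charged to a fixed branch set are distinct — automatic, since they are all arm-middles of one star — and (ii) no vertex is charged to two different branch sets, i.e.\ all chosen connectors are distinct. This is a system-of-distinct-representatives problem: Hall's theorem, applied with the $\le 2$-element candidate set $R(\{i,j\}):=\{a_{ij},a_{ji}\}$, produces the required choice — hence the $K_m$-model, finishing the proof — unless some family $\mathcal P$ of pairs has $\bigl|\bigcup_{p\in\mathcal P} R(p)\bigr|<|\mathcal P|$. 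In that exceptional case every vertex occurring in $\bigcup_{p\in\mathcal P}R(p)$ is adjacent to both endpoints of every pair of $\mathcal P$ it serves, hence adjacent to a large subset $Z'$ of $Z$; one then rebuilds the $K_m$-model on a mixture of $Z$-vertices and such connectors — merging the few $Z$-vertices tied together through a common connector while promoting that connector to its own branch set, so that the number of branch sets stays $|Z|$ — or, equivalently, re-runs the construction recursively on $Z'$ after absorbing the high-degree connector. Making this clean-up step go through in all cases (for instance by induction on $|Z|$) is the technical heart; everything else is routine.
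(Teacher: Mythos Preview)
Your approach via a $K_{|Z|}$-minor is genuinely different from the paper's argument, and the reduction is valid in principle: a $K_m$-minor would indeed force $m-1\le\tw(H)$. But what you have written is not a proof, and you say so yourself: ``making this clean-up step go through in all cases \dots\ is the technical heart; everything else is routine'' --- and then you do not carry out the technical heart. The sketch you give for the Hall-failure case (merging $Z$-vertices tied to a common connector while ``promoting that connector to its own branch set'') does not work as stated: if the connector $c$ is adjacent only to a proper subset of $Z$, then $\{c\}$ cannot serve as a branch set of a $K_{|Z|}$-model, so the number of branch sets drops below $|Z|$; and the alternative ``re-run recursively on $Z'$'' has neither a precise statement nor a termination argument. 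There is also an earlier slip: you assume each $v_j$ is a \emph{leaf} of $S_{v_i}$ and conclude $a_{ij}\notin Z$, but $v_j$ may well be the \emph{middle} of an arm, in which case some $a_{ik}$ equals a vertex of $Z$ and the branch sets $B_i$ can intersect $Z$ in more than one point, further entangling the disjointness argument.

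The paper bypasses all of this by working directly with a tree decomposition rather than building a minor. One picks $v\in Z$ whose highest bag $t_v$ lies deepest in the tree and observes that for each $u\in Z\setminus\{v\}$ the length-$\le 2$ path from $v$ to $u$ inside $S_v$ must meet $B_{t_v}\setminus\{v\}$; since these paths share only $v$, different $u$'s contribute different vertices of $B_{t_v}\setminus\{v\}$, whence $|Z|-1\le|B_{t_v}|-1\le w$. This is a short separator argument with no Hall condition, no minor model, and no case analysis. If you insist on the minor route you must actually construct the $K_{|Z|}$-model, and that is precisely the part you have left open.
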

\begin{proof}
  Consider a rooted tree decomposition $(T,\B)$ of $H$.  For every
  $t\in V(T)$, we denote by $V_t$ the union of every bag $B_{t'}$ for
  every descendant $t'$ of $t$ (including $t$ itself). For every
  vertex $v\in Z$, consider the node $t_v$ closest to the root that contains $v$, and let us select a $v\in Z$ such that $t_v$ has maximum
  distance from the root. Then $B_{t'}\cap Z\subseteq B_{t_v}\cap Z$
  for every proper descendant $t'$ of $t_v$, otherwise there would be
  a vertex $u\in Z$ such that $t_{u}$ is a proper descendant of
  $t_v$. The subdivided star $S_v$ covers $Z$ by
  assumption, hence there is a path of length at most two between $v$
  and each vertex of $Z\setminus \{v\}$ such that $v$ is the only
  vertex shared by these paths.  We claim that each such path has to
  contain a vertex of $B_{t_v}\setminus \{v\}$: otherwise, the
  vertices of the path appear either only in bags that are proper
  descendants of $t_v$ (contradicting the maximality of $t_v$) or only
  in bags where $v$ does not appear (contradicting that a vertex of
  the path is a neighbor of $v$).  Therefore, we have $|Z\setminus
  \{v\}|\le |B_{t_v}\setminus \{v\}|\le w$ and $|Z|\le w+1$ follows.
\end{proof}

We are now ready to prove the main result for bounded-treewidth graphs, which completes the proof Theorem~\ref{thm:gadgetexists}.
\begin{lemma}\label{lem:gadgetmainboundedtw}
  There is a function $\fmain(k,w)$ such that if a graph $H$ with treewidth
  at most $w$ has vertex cover number greater than $\fmain(k,w)$, then there is
  a $k$-matching gadget $(H,M)$.
\end{lemma}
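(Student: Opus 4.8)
The plan is to peel off from $H$ a strong set that contains all vertices of large subdivided-star number and then fall back on the bounded-$\stwo$ case, Lemma~\ref{lem:gadgetmainnostar}. Working backwards, I would fix constants $L:=2(w+1)$, a $\stwo$-threshold $N$, and a matching size $m$, with $N$ taken much larger than $m$ and $m$ chosen to comfortably exceed both the input size $\fstar(k,N)$ demanded by Lemma~\ref{lem:gadgetmainnostar} and the slack needed for Lemma~\ref{lem:makeinduced} and Lemma~\ref{lem:onlyone}; then set $\fmain(k,w):=3m(w+1)$. Since $H$ has treewidth at most $w$ and vertex cover number greater than $3m(w+1)$, Lemma~\ref{lem:nicematching} supplies an induced matching $M=\{u_1v_1,\dots,u_mv_m\}$ with $\stwo(u_i),\stwo(v_i)\le L$ for all $i$.

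Now let $Q:=\{v\in V(H):\stwo(v)\ge N\}$. Because a vertex of $\stwo$-number $\ell$ has degree at least $\ell$, the choice $N>L$ already gives $Q\cap V(M)=\emptyset$, and — crucially for the last step — deleting vertices never increases $\stwo$, so $\stwo(H\setminus Q)\le N-1$ automatically. Thus the whole difficulty is to pass to an induced submatching $M'\subseteq M$, still of size at least $\fstar(k,N-1)$, so that $Q$ becomes a \emph{strong set} with respect to $V(H)\setminus V(M'')$ for every induced $M''\subseteq M'$, in the sense of Definition~\ref{def:strong}. The easy half is clear: if $q\in Q$ and its subdivided $N$-star avoids $V(M'')$, then a boundary-preserving isomorphism $f$ of $H[V(H)\setminus V(M'')]$ carries it to an equally large star, witnessing $f(q)\in Q$. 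The obstruction is the high-$\stwo$ vertices lying close to $V(M)$, whose optimal stars are forced to pass through matching vertices. Here Proposition~\ref{prop:starneighbor} (every vertex has at most $L$ high-degree, hence at most $L$ high-$\stwo$, neighbours), Lemma~\ref{lem:starclique} (at most $w+1$ vertices can pairwise cover one another by subdivided stars), and Lemma~\ref{lem:onlyone} are exactly the tools: for each edge $e_i$ of $M$ one forms a set $X_i$ of high-$\stwo$ vertices that are ``tied'' to $e_i$ (e.g.\ because every optimal star at the vertex meets $\{u_i,v_i\}$, or because the vertex is within distance $2$ of $e_i$), bounds $|X_i|\le r$ for a constant $r=r(w)$ via the first two results, feeds the multiset $\{X_i\}_i$ into Lemma~\ref{lem:onlyone}, and thins once more with Lemma~\ref{lem:makeinduced}. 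The outcome is an induced submatching $M'\subseteq M$ of size $\fstar(k,N-1)$ such that no high-$\stwo$ vertex is tied to more than one edge of $M'$ and each such vertex is adjacent to at most one edge of $M'$; if a residual gap in the $\stwo$-values of the finitely many tied vertices is needed to make the threshold argument close, it can be created by an averaging step analogous to Lemma~\ref{lem:degreegap}. One then checks, splitting on whether a vertex of $Q$ lies in the interior or on the boundary of $V(H)\setminus V(M'')$, that a boundary-preserving isomorphism must fix $Q$ setwise, so $Q$ is strong with respect to $V(H)\setminus V(M'')$ for every induced $M''\subseteq M'$.

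Finally, $H\setminus Q$ has $\stwo\le N-1$ and contains the induced matching $M'$ of size $\fstar(k,N-1)$, so Lemma~\ref{lem:gadgetmainnostar} yields a $k$-matching gadget $(H\setminus Q,M_0)$ with $M_0\subseteq M'$ an induced $k$-matching. Since $Q\cap V(M_0)=\emptyset$ and $Q$ is strong with respect to $V(H)\setminus V(M_0)$, Lemma~\ref{lem:removestrong} upgrades this to a $k$-matching gadget $(H,M_0)$ of $H$, which proves the statement with $\fmain(k,w)=3m(w+1)$.

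The main obstacle is the middle step: constructing $M'$ so that the high-$\stwo$ vertices become ``identifiable'' — i.e.\ so that $Q$ is genuinely a strong set — while keeping $M'$ large. This is the delicate interplay of Proposition~\ref{prop:starneighbor}, Lemma~\ref{lem:starclique}, Lemma~\ref{lem:onlyone} and the threshold bookkeeping; everything else is routine given the machinery of Sections~\ref{sec:bound-degr-graphs}--\ref{sec:graphs-with-no}.
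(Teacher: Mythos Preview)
Your plan is exactly the paper's: invoke Lemma~\ref{lem:nicematching} to get a large induced matching whose vertices have $\stwo\le 2(w+1)$; let $Q$ be the high-$\stwo$ vertices; thin the matching via Proposition~\ref{prop:starneighbor}, Lemma~\ref{lem:starclique}, Lemma~\ref{lem:onlyone}, and an averaging ``gap'' step on the $\stwo$-values so that $Q$ becomes a strong set; apply Lemma~\ref{lem:gadgetmainnostar} to $H\setminus Q$; and lift with Lemma~\ref{lem:removestrong}. The paper carries this out with the threshold $i^*$ chosen adaptively by the averaging step (rather than fixed as $N$ in advance), and the sets $X_e$ and the bound $|X_e|\le r$ are obtained precisely from the two lemmas you name; so all ingredients and their roles match.

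There is one step where your sketch claims more than the cited lemma delivers. You write that Lemma~\ref{lem:gadgetmainnostar} ``yields a $k$-matching gadget $(H\setminus Q,M_0)$ with $M_0\subseteq M'$''. The lemma promises no such containment: if you trace its proof, and that of Lemma~\ref{lem:gadgetmainlow} beneath it, the eventual gadget matching may be extracted from some $H\setminus C'$ arising from a failed candidate and need not be a submatching of the input matching. Hence your requirement that $Q$ be strong with respect to $V(H)\setminus V(M'')$ for every induced $M''\subseteq M'$ does not obviously cover the $M_0$ actually produced, and Lemma~\ref{lem:removestrong} cannot be applied on that basis alone. The paper's write-up is terse at precisely this point too (it proves $Q$ strong with respect to $V(H)\setminus V(M_2)$ and then invokes Lemma~\ref{lem:removestrong} for whatever matching Lemma~\ref{lem:gadgetmainnostar} returns), so your sketch does not deviate from the paper here; but it is the one place where a careful reader would want the argument spelled out.
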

\begin{proof}
We define the following constants (the function $\fstar$ is from Lemma~\ref{lem:gadgetmainnostar}):
\begin{alignat*}{1}
L&=2(w+1)\\
r&=  2L+2(L(w+2)+1)\\
L_1&=2L+2\\
L_2&=  4r+2+L_1\\
z&=  L_2\\
k_2&=  \fstar(k,L_2)\\
k_1&=  2k_2\\
k_0&=   (1+z\cdot r)k_1\\
f(k,w)&=3k_0(w+1).
\end{alignat*}
 By
  Lemma~\ref{lem:nicematching}, if $H$ has vertex cover number greater
  than $f(k,w)$, then $H$ has an induced matching $M_0$ of size $k_0$ such that
  $\stwo_H(v)\le L$ for every $v\in V(M)$.

  For every $v\in V(H)\setminus V(M_0)$, let us fix a subdivided
  star $S_v$ centered at $v$ with $\min\{\stwo_H(v),L_2\}$ leaves and 
  having the minimum number of vertices in $V(M_0)$. For every $e\in M_0$,
  we let $v\in V(H)\setminus V(M_0)$ be in $X_e$ if $\stwo_H(v)\ge L_1$
  and $S_v$ uses an endpoint of $e$. 

\begin{claim}\label{cl:boundXe}
For every $e\in M_0$, $|X_e|\le r$.
\end{claim}
\begin{proof}
  Suppose first that $v\in X_e$ and $v$ is adjacent to $e$. As $v$ has
  degree at least $\stwo_H(v)\ge L_1=2L+2$,
  Proposition~\ref{prop:starneighbor} implies that each endpoint of
  $e$ can have at most $L$ such neighbors, hence there are at most
  $2L$ such vertices in $X_e$. Suppose therefore that $|X_e|$ has at least $r-2L=2(L(w+2)+1)$
  vertices not adjacent to $e$. For each such vertex $v\in X_e$, the
  subdivided star $S_v$ contains one or two paths $vxy$ with $y$
  being one of the two endpoints of $e$. Let us fix such an $x$ and $y$ for each
  vertex $v\in X_e$ not adjacent to $e$; for at least $L(w+2)+1$ vertices we
  have the same $y$. If there are $L+1$ such vertices $v\in X_e$ with
  distinct $x$'s, then this shows that there is a subdivided $L+1$
  star centered at $y$, a contradiction. Therefore, there are $w+2$
  vertices $v_1$, $\dots$, $v_{w+2}$ in $X_e$ sharing the same $x$. If
  $S_{v_i}$ does not use vertex $v_{j}$ for some $j\neq i$, then we
  can replace $y$ by $v_j$ in the subdivided star $S_{v_i}$,
  contradicting the minimality of $S_{v_i}$ with respect to the number of vertices of $V(M_0)$ used. Thus every $S_{v_i}$
  covers the set $Z=\{v_1,\dots,v_{w+2}\}$, contradicting
  Lemma~\ref{lem:starclique}.
\cqed\end{proof}

We construct a matching $M_1\subseteq M_0$ the following way.  Let $\C$ be
the multiset containing $X_e$ for every $e\in M_0$; we have
$|\C|=|M|=(1+z\cdot r)k_1$. Let us invoke Lemma~\ref{lem:onlyone} to
obtain a subcollection $\C'$ of size $k_1$ such that for every $x\in
V(H)\setminus V(M_0)$, either there is at most one set in $\C'$
containing $x$ or at least $z$ sets of $\C\setminus \C'$ contain
$x$. Let $M_1\subseteq M_0$ be the subset of $k_1$ edges corresponding
to the subcollection $\C'$.

\begin{claim}\label{cl:twoXe-ok}
If $v\in X_e$ for at least two different $e\in M_1$, then $\stwo_{H\setminus V(M_1)}(v)\ge L_2$.
\end{claim}
\begin{proof}
  By the way the subcollection $\C'$ is constructed, we have that $v$
  is in $X_e$ for at least $z=L_2$ edges $e\in M\setminus M_1$; let
  $F\subseteq M\setminus M_1$ be this set of edges. Let $vx_1y_1$,
  $\dots$, $vx_\ell y_\ell$ be the paths in the subdivided star
  $S_v$. Every edge of $F$ is intersected by 
one or two of these paths. Furthermore, as $M_0$ is an induced matching, 
if $vx_iy_i$ intersects an edge of $F\subseteq M\setminus M_1$, then it cannot intersect $V(M_1)$. Therefore, at least $|F|\ge L_2$ such paths are disjoint from $V(M_1)$. 
These paths form a subdivided $L_2$-star centered at $v$, implying $\stwo_{H\setminus V(M_1)}(v)\ge L_2$.
\cqed\end{proof}

Let $v\in V(H)\setminus V(M_1)$ be in $V_i$ if $\stwo_H(v)=i$ and
$v\in X_e$ for some $e\in M_1$.  As $|X_e|\le r$ for every $e\in M_e$ (Claim~\ref{cl:boundXe}),
we have $\sum_{i=L_1}^{L_2-2}|V_i|\le r k_1$. Therefore, there is an $
L_1 \le i^* \le L_2-2$ such that $|V_{i^*}|+|V_{i^*+1}|\le 2r
k_1/(L_2-2-L_1) = k_1/2$.  
Let $M_2$ contain every $e\in M_1$ with
$X_e\cap (V_{i^*}\cup V_{i^*+1})=\emptyset$.
Note that every $v\in V_{i^*}\cup
V_{i^*+1}$ is in $X_e$ for at most one $e\in M_1$: otherwise,
Claim~\ref{cl:twoXe-ok} implies that $ \stwo_H(v)\ge\stwo_{H\setminus V(M_1)}(v)\ge L_2$, contradicting
the choice $i^*\le L_2-2$.  Therefore, we have that the size
of $M_2$ is at least $k_1-(|V_{i^*}|+|V_{i^*+1}|)|\ge k_1/2=k_2$.
Furthermore, if $e\in M_2$, then $X_e$ is disjoint from $V_{i^*}\cup V_{i^*+1}$.

Let us define $Q$ as the set containing every $v\in V(H)$ with $\stwo(v)\ge i^*$ and let $C=V(H)\setminus V(M_2)$.

\begin{claim}\label{cl:starstrong}
Set $Q$ is a strong set of $H$ with respect to $C$.
\end{claim}
\begin{proof}
  Let $f$ be a boundary-preserving
  isomorphism from $H[C]$ to $H[C']$ for some $C'\subseteq V(H)$.
We show that $\stwo_{H[C]}(v)\ge i^*$ for every $v\in Q$.
 Then, as $f$ is an isomorphism between
  $H[C]$ and $H[C']$, we have that $\stwo_{H}(f(v))\ge
  \stwo_{H[C']}(f(v))=\stwo_{H[C]}(v)\ge i^*$, implying $f(v)\in
  Q$.

  Consider a vertex $v\in Q$ and the subdivided $S_v$ star. Recall
  that, as we have $\stwo_H(v)\ge i^*\ge L_1$, the subdivided star
  $S_v$ contains an endpoint of $e\in M_0$ only if $v\in X_e$.
  Suppose first that $v\in Q$ is not in $X_e$ for any $e\in M_2$; in
  particular, this is the case for any $v$ with $\stwo_H(v)\in
  \{i^*,i^*+1\}$. Then the subdivided star $S_v$ is disjoint from
  $V(M_2)$, that is, $S_v$ is fully contained
  in $C$. Then $\stwo_{H[C]}(v)\ge \min\{\stwo_{H}(v),L_2\}\ge i^*$.

  Suppose now that $v\in Q$ is in $X_e$ for exactly one edge $e\in
  M_2$; this is only possible if $\stwo_H(v)\ge i^*+2$. Then the
  subdivided star $S_v$ intersects the endpoints of at most one edge
  of $M_2$, that is, at most two paths of $S_v$ intersect
  $V(M_2)$. Therefore, a subdivided star centered at $v$ with
  $\min\{\stwo_{H}(v),L_2\}-2\ge i^*$ leaves appears in $H[C]$.

  Finally, suppose that $v\in Q$ is in $X_e$ for at least two edges
  of $M_2\subseteq M_1$. Then Claim~\ref{cl:twoXe-ok} implies that
  $\stwo_{H[C]}(v)\ge \stwo_{H\setminus V(M_1)}(v)\ge L_2 \ge i^*$.
  \cqed\end{proof} We have $\stwo(H\setminus Q)<i^*<L_2$ and $M_2$ is
a matching of size at least $k_2$ in $H\setminus Q$. Therefore,
Lemma~\ref{lem:gadgetmainnostar} implies that there is a $k$-matching
gadget $(H\setminus Q,M)$. As $Q$ is a strong set,
Lemma~\ref{lem:removestrong} implies that $(H,M)$ is also a
$k$-matching gadget.
\end{proof}

\bibliographystyle{abbrv}
\bibliography{h-counting}

\appendix
\section {Computations for Section \ref{sec:bipart-edge-colorf}}

In this section, we compute the values 
\[
p_{s,t}(0)=\#\mathcal{M}_{A_{t}}(R_{s}+0\cdot C_{6})=\#\mathcal{M}_{A_{t}}(R_{s})
\]
defined in \eqref{eq:pst} of Section  \ref{sec:bipart-edge-colorf} for all $s,t\in[5]$, where $R_{s}$ is the graph defined
in Figure~\ref{fig:triangle-states}. For each $t\in[5]$, we give a table on the following page that contains the
following information: 
\begin{enumerate}
\item All partitions of $A_{t}$ into $B_{1}\dot{\cup}\ldots\dot{\cup}B_{\ell}=A_{t}$
with $\ell\leq3$, with the additional property that, if $a,b\in B_{i}$
for any $i\in[\ell]$, then $a-b\not\equiv1\mod\ 6$. Note that these
properties are required for an $A_{t}$-colorful matching to exist
in the edge-colored graph $3\cdot C_{6}$ (and thus in $R_{s}$, for
any $s\in[5]$, since $R_{s}\subseteq3\cdot C_{6}$)
\item For each partition $B_{1}\dot{\cup}\ldots\dot{\cup}B_{\ell}=A_{t}$,
we compute the number of $A_{t}$-colorful matchings in $R_{s}$ with
the following property: Edges of colors $a,b\in A_{t}$ in the matching
are contained in the same component of $R_{s}$ if and only if $a,b\in B_{i}$
for some $i$. (This is essentially the same decomposition used in the proof of Lemma \ref{lem:matching-polynomial}.)
\item In the last row of the table, we compute the sum of each column. The
$s$-th value in this row then counts the number of $A_{t}$-colorful
matchings in $R_{s}$, i.e., it is equal to the value $p_{s,t}(0)$.
\end{enumerate}

\begin{table}[H!]
\begin{centering}
\begin{tabular}{|c||c|c|c|c|c|}
\hline 
 & $R_{1}$ & $R_{2}$ & $R_{3}$ & $R_{4}$ & $R_{5}$\tabularnewline
\hline 
\hline 
$\{4\}\{5\}$ & 2 & 2 & 3 & 3 & 3\tabularnewline
\hline 
\hline 
$\mathbf{\Sigma}$ & 2 & 2 & 3 & 3 & 3\tabularnewline
\hline 
\end{tabular}
\par\end{centering}

\caption{The table for $t=1$.}
\end{table}

\begin{table}[H!]
\begin{centering}
\begin{tabular}{|c||c|c|c|c|c|}
\hline 
 & $R_{1}$ & $R_{2}$ & $R_{3}$ & $R_{4}$ & $R_{5}$\tabularnewline
\hline 
\hline 
$\{2\}\{3\}$ & 2 & 3 & 2 & 3 & 3\tabularnewline
\hline 
\hline 
$\mathbf{\Sigma}$ & 2 & 3 & 2 & 3 & 3\tabularnewline
\hline 
\end{tabular}
\par\end{centering}

\caption{The table for $t=2$.}
\end{table}

\begin{table}[H!]
\begin{centering}
\begin{tabular}{|c||c|c|c|c|c|}
\hline 
 & $R_{1}$ & $R_{2}$ & $R_{3}$ & $R_{4}$ & $R_{5}$\tabularnewline
\hline 
\hline 
$\{1\}\{6\}$ & 2 & 3 & 3 & 2 & 3\tabularnewline
\hline 
\hline 
$\mathbf{\Sigma}$ & 2 & 3 & 3 & 2 & 3\tabularnewline
\hline 
\end{tabular}
\par\end{centering}

\caption{The table for $t=3$.}
\end{table}

\begin{table}[H!]
\begin{centering}
\begin{tabular}{|c||c|c|c|c|c|}
\hline 
 & $R_{1}$ & $R_{2}$ & $R_{3}$ & $R_{4}$ & $R_{5}$\tabularnewline
\hline 
\hline 
$\{2,4\}\{3,5\}$ & 2 & 1 & 1 & 0 & 1\tabularnewline
\hline 
$\{2,4\}\{3\}\{5\}$ & 0 & 0 & 2 & 1 & 1\tabularnewline
\hline 
$\{2,5\}\{3\}\{4\}$ & 0 & 0 & 0 & 2 & 2\tabularnewline
\hline 
$\{2\}\{3,5\}\{4\}$ & 0 & 2 & 0 & 1 & 1\tabularnewline
\hline 
\hline 
$\mathbf{\Sigma}$ & 2 & 3 & 3 & 4 & 5\tabularnewline
\hline 
\end{tabular}
\par\end{centering}

\caption{The table for $t=4$.}
\end{table}

\begin{table}[H!]
\begin{centering}
\begin{tabular}{|c||c|c|c|c|c|}
\hline 
 & $R_{1}$ & $R_{2}$ & $R_{3}$ & $R_{4}$ & $R_{5}$\tabularnewline
\hline 
\hline 
$\{1,3,5\}\{2,4,6\}$ & 2 & 0 & 0 & 0 & 0\tabularnewline
\hline 
$\{1,3,5\}\{2,4\}\{6\}$ & 0 & 0 & 1 & 0 & 0\tabularnewline
\hline 
$\{1,3\}\{2,4,6\}\{5\}$ & 0 & 0 & 1 & 0 & 0\tabularnewline
\hline 
$\{1,3,5\}\{2,6\}\{4\}$ & 0 & 0 & 0 & 0 & 0\tabularnewline
\hline 
$\{1,3,5\}\{2\}\{4,6\}$ & 0 & 0 & 0 & 1 & 0\tabularnewline
\hline 
$\{1,3\}\{2,5\}\{4,6\}$ & 0 & 1 & 0 & 0 & 0\tabularnewline
\hline 
$\{1,4\}\{2,5\}\{3,6\}$ & 0 & 0 & 0 & 0 & 1\tabularnewline
\hline 
$\{1,4\}\{2,6\}\{3,5\}$ & 0 & 0 & 0 & 0 & 1\tabularnewline
\hline 
$\{1,5\}\{2,4,6\}\{3\}$ & 0 & 0 & 0 & 0 & 1\tabularnewline
\hline 
$\{1,4\}\{2,5\}\{3,6\}$ & 0 & 0 & 0 & 1 & 0\tabularnewline
\hline 
$\{1,5\}\{2,4\}\{3,6\}$ & 0 & 0 & 0 & 0 & 1\tabularnewline
\hline 
$\{1\}\{2,4,6\}\{3,5\}$ & 0 & 1 & 0 & 0 & 0\tabularnewline
\hline 
$\mathbf{\Sigma}$ & 2 & 2 & 2 & 2 & 4\tabularnewline
\hline 
\end{tabular}
\par\end{centering}

\caption{The table for $t=5$.}
\end{table}

Then it can be read off the tables that
\[
R_{1}(0)=\left(\begin{array}{ccccc}
2 & 2 & 3 & 3 & 3\\
2 & 3 & 2 & 3 & 3\\
2 & 3 & 3 & 2 & 3\\
2 & 3 & 3 & 4 & 5\\
2 & 2 & 2 & 2 & 4
\end{array}\right),
\]

The determinant of this matrix is $12$ and $12 \neq 0$ under standard complexity-theoretic assumptions.

\end{document}